\documentclass[a4paper,USenglish,cleveref,autoref,thm-restate]{lipics-v2021}

\usepackage{commands-tam}
\usepackage[utf8]{inputenc}
\usepackage{subcaption}
\usepackage{siunitx}
\usepackage{blkarray}
\usepackage{cite}

\usepackage{cleveref}

\usepackage{tikz}
\usetikzlibrary{shapes.geometric, arrows, decorations.pathmorphing, calc, decorations.pathreplacing, math}
\usetikzlibrary{tikzmark}
\tikzstyle{startstop} = [rectangle, rounded corners, 
minimum width=3cm, 
minimum height=1cm,
text centered, 
draw=black, 
fill=red!30]

\tikzstyle{io} = [trapezium, 
trapezium stretches=true, 
trapezium left angle=70, 
trapezium right angle=110, 
minimum width=3cm, 
minimum height=1cm, text centered, 
draw=black, fill=blue!30]

\tikzstyle{process} = [rectangle, 
minimum width=3cm, 
minimum height=1cm, 
text centered, 
text width=3cm, 
draw=black, 
fill=orange!30]

\tikzstyle{decision} = [diamond, 
minimum width=3cm, 
minimum height=1cm, 
text centered, 
draw=black, 
fill=green!30]
\tikzstyle{arrow} = [thick,->,>=stealth]

\newcommand{\shapeasm}[1]{\mathcal{S}_{\Box}[\mathcal{#1}]}

\newcommand{\calS}{\mathcal{S}}

\newcommand{\Z}{\mathbb{Z}}
\usepackage{stmaryrd}
\newcommand{\intinterval}[1]{\llbracket #1 \rrbracket}

\usepackage{todonotes}

\newif\ifhighlight
\highlighttrue

\newif\ifshowtext
\showtextfalse

\def\mytitle{Powers and Limitations of Synchronous Self-Assembly}
\title{\mytitle}
\titlerunning{\mytitle}
\subtitle{Non-cooperative Assemblies and Limited Synchronization}

\author{Florent Becker}{Laboratoire d’Informatique Fondamentale d’Orléans (UR4022), Université d’Orléans and INSA Centre-Val-de-Loire, France}{florent.becker@univ-orleans.fr}{}{Research made at a public, state-funded university, thanks to the guarantee of full academic freedom and perennial research funding.}

\author{Phillip Drake}{University of Arkansas, USA}{padrake@uark.edu}{}{This author's work was supported in part by NSF grant 2329908}

\author{Matthew J. Patitz}{University of Arkansas, USA}{patitz@uark.edu}{https://orcid.org/0000-0001-9287-4028}{This author's work was supported in part by NSF grant 2329908 }

\author{Ryder Smith}{Hass Hall Academy, Bentonville, AR, USA}{ryder.smithrcs@gmail.com}{}{}

\authorrunning{F.~Becker, P.~Drake, M.\,J. Patitz, R.~Smith}

\Copyright{Florent Becker, Phillip Drake, Matthew J. Patitz, and Ryder Smith}

\ccsdesc[500]{Theory of computation~Models of computation}

\keywords{self-assembly, noncooperative self-assembly, models of computation, tile assembly systems} 

\category{} 

\relatedversion{} 

\supplement{}

\usepackage[utf8]{inputenc}
\usepackage{subcaption}

\EventEditors{Dominic Scalise and Robert Schweller}
\EventNoEds{2}
\EventLongTitle{32nd International Conference on DNA Computing and Molecular Programming (DNA 32)}
\EventShortTitle{DNA 32}
\EventAcronym{DNA}
\EventYear{32}
\EventDate{August 3--7, 2026}
\EventLocation{Fayetteville, AR, USA}
\EventLogo{}
\SeriesVolume{}
\ArticleNo{}

\begin{document}

\maketitle

\begin{abstract}
In abstract models of algorithmic self-assembly, synchronization between attachments has emerged as a crucial distinction between the classical asynchronous model (aTAM) and a new synchronous model, the syncTAM. This paper presents recent advances in gauging the additional power afforded by the syncTAM. While it is known that the syncTAM and the aTAM are each unable to fully simulate the other, this paper offers evidence that the syncTAM is computationally significantly more powerful than the aTAM, especially in the non-cooperative setting.

The additional power of the non-cooperative syncTAM is witnessed by the following constructions, all impossible in the non-cooperative aTAM: a flagpole, a strict self-assembly of a variant of the discrete Sierpinski triangle, and the ability to build the same assemblies (modulo scale factor) as directed aTAM systems.

The second topic is that of limited synchronization, wherein, when the number of attachments is smaller than some threshold $l$, they happen synchronously, but attachments in excess of that number must wait. In that context, the precise value of $l$ is crucial, and changes to that value prevent simulation and can change which shapes can be obtained. 

\end{abstract}

\maketitle

\clearpage
\pagenumbering{arabic}

\section{Introduction and Outline}

Crafting artifacts with intricate small-scale details has been a mark of sophistication in a wide variety of human cultures. Contemporary researchers have turned this endeavor into the field of nanotechnology, with the aim to extend the realm of controlled manipulation down to the molecular and atomic scales. As it turns out, DNA is a very handy vehicle for that, especially as its ability to hold data can be harnessed for algorithmic control of its assembly. Thus was born the field of (DNA) Algorithmic Self-Assembly.

The principle is to come up with basic building blocks which are able to selectively bind and grow patterns of interest. Theoretical models serve to predict and engineer this binding and assembly. There is a variety of them and this paper will be dealing with \emph{abstract} models of self-assembly, which tend to be easier to reason about while giving broad brush insights into the possibilities of molecular self-assembly.

Arguably, the first abstract model of algorithmic self-assembly was the abstract Tile Assembly Model (aTAM) \cite{Winf98}. Since its introduction, a wide variety of results have exhibited its powerful algorithmic potential \cite{SolWin07,RotWin00,jCCSA,jSADS,IUSA,FractalsSelfAssembleSODA}. 

One striking feature which sets it apart from more classical models of computation is its asynchronous nature. In order to claim correctness for a system in the aTAM, one has to prove that it obeys one's fancy whatever order the building blocks (known as tiles) come to interact with the growing assembly. On the other hand, there is a natural synchronous variant of the aTAM, called the syncTAM which works synchronously, and thus much closer to more classical models of computation such as cellular automata. The first investigations of the syncTAM in \cite{syncTAM} have shown that the two models are uncomparable (each can do things the other cannot). In this paper, we set out to refine that distinction.

It would seem natural that a synchronous model affords its user more control, which in turn means they need fewer resources in order to implement algorithms. In particular, the \emph{temperature} parameter (a.k.a. minimum binding threshold) is a very important source of control in both the aTAM and the syncTAM. Informally, the temperature of an aTAM system tells whether tiles can do a bit of local synchronization through cooperation when they assemble (at temperature 2 and more), or not (at temperature 1). It is well-known in the aTAM that systems with temperature 1 are less powerful than systems with temperature 2 and above. They can be distinguished by intrinsic universality, defined in \cite{IUSA}, allowing \cite{temp1notIU} to state that there are temperature 2 systems which no temperature 1 system can simulate. Moreover, in two dimensions, temperature 1 systems are limited in their computational capacities, as they must be \emph{fragile} or \emph{pumpable} \cite{Temp1PathsSTOC}. On the other hand, effectively implementing temperature 2 is a tough experimental challenge. For this reason, there have been many theoretical explorations of variations of the aTAM that exhibit at least some of the power of cooperative, temperature-2, self-assembly at temperature-1 \cite{SingleNegative,CookFuSch11,Polygons,Polyominoes,GeoTiles,SFTSAFT,jDuples}. We set out to show that the synchronicity of the syncTAM can act as a kind of substitute for temperature 2 by exploring the power of temperature 1 syncTAM systems. Namely, we show that these systems are able to simulate a flagpole, thus overcoming the simulation obstacle that was identified in \cite{temp1notIU} for the aTAM. The cooperative synchronization gadget used in this system can act as a basic brick for 2-dimensional computation, allowing the strict self-assembly of a shape close to the Sierpinski Triangle. Sierpinski Triangles have been known since \cite{TreeFractals} to belong to the class of fractals which cannot be strictly self-assembled in the aTAM \cite{FractalsSelfAssembleSODA}.
This witnesses that in addition to cooperation and thus computation, the temperature 1 syncTAM is able to \emph{detect the absence} of some part of the assembly.
If synchronization was a complete substitute for cooperation, then temperature 1 systems in the syncTAM would be able to build the same assemblies (at least modulo some scale factor) as temperature 2 systems in the aTAM.
 We show that this is the case in the absence of non-determinism: when restricted to aTAM systems which are directed, the claim holds.

How much of that power remains if, instead of full synchronicity, one assumes that tiles come all at once, \emph{up to some maximal amount} $l$? This is what we investigate next, introducing the L-syncTAM. Looking for intermediates between the syncTAM and the aTAM is a natural way to try and make sense of more plausible models of algorithmic DNA self-assembly. The hope is that any property which is robust to changes along the continuum between the syncTAM and the aTAM will hold in experiments. The investigation of the L-syncTAM for various values of $l$ reveals that altering that parameter either up or down affects the dynamics  of the model, preventing universal simulation between different values of $l$, and even affecting which shapes can be assembled.

\Cref{sec:prelims} introduces the reader to the aTAM, syncTAM and L-syncTAM models, as well as to intrinsic simulation for these models. \Cref{sec:shapes} exposes the power of the syncTAM model by way of the flagpole and Sierpinski triangle constructions. \Cref{sec:directed} examines directed systems in the syncTAM, up to the ``simulation'' at temperature 1 of directed aTAM systems. Finally, \cref{sec:LsyncTAM}
 is  the investigation of limited synchronicity through the L-syncTAM model.

\section{Preliminary Definitions and Models}
\label{sec:prelims}

In this section we define the models and terminology used throughout the paper.

\subsection{The abstract Tile-Assembly Model}\label{sec:aTAM_def}

We work within the abstract Tile-Assembly Model\cite{Winf98} in 2 and 3 dimensions. The abbreviation \emph{aTAM} refers to the 2D model. These definitions are borrowed from \cite{DDDIU} and we note that \cite{RotWin00} and \cite{jSSADST} are good introductions to the model for unfamiliar readers. 

Let $\mathbb{N}$ be the set of nonnegative integers,  for $l,u \in \Z$, let $\intinterval{l, u} = \{ k \in \Z \mid l \leq k < u\}$. For $n \in \mathbb{N}$, let $[n] = \intinterval{0,n} = \{0, 1, ..., n-2, n-1\}$.

Fix $\Sigma$ to be some alphabet with $\Sigma^*$ its finite strings. A \emph{glue} $g\in\Sigma^*\times\mathbb{N}$ consists of a finite string \emph{label} and non-negative integer \emph{strength}. There is a single glue of strength $0$, referred to as the \emph{null} glue. A \emph{tile type} is a tuple $t\in(\Sigma^*\times\mathbb{N})^{4}$, thought of as a unit square with a glue on each side. A \emph{tile set} is a finite set of tile types. We always assume a finite set of tile types, but allow an infinite number of copies of each tile type to occupy locations in the $\mathbb{Z}^2$ lattice, each called a \emph{tile}. A \textit{glue set} $G_T$ is the set of all glues associated with a tile set $T$. Given a direction $d \in \{N, E, S, W\} = \calD$ and a tile set $T$, $G_{T, d}$ is the set of all glues on the edges of tiles in direction $d$ and $g_{t, d}$ is the glue $g$ on tile $t$ in direction $d$. We write str$_t(d)$ to denote the strength of $g_{t,d}$. 

Given a tile set $T$, a \emph{configuration} is an arrangement (possibly empty) of tiles in the lattice $\mathbb{Z}^2$, i.e.\ a partial function $\alpha:\mathbb{Z}^2\dashrightarrow T$. Two adjacent tiles in a configuration \emph{interact}, or are \emph{bound} or \emph{attached}, if the glues on their abutting sides are equal (in both label and strength) and have positive strength. Each configuration $\alpha$ induces a \emph{binding graph} $B_\alpha$ whose vertices are those points occupied by tiles, with an edge of weight $s$ between two vertices if the corresponding tiles interact with strength $s$. An \emph{assembly} is a configuration whose domain (as a graph) is connected and non-empty. The \emph{shape} of $X \subset \mathbb{Z}^2$ is $\{ X + (x, y) | x, y \in \Z \}$, i.e. the set of all its translations. The shape of an assembly $\alpha$ is the shape $S_\alpha$ of its domain: two assemblies have the same shape if they have the same domain modulo translation. An assembly $\alpha$ contains a shape $S$ if there is $s \in S$ such that $s \subset \dom \alpha$.  For some $\tau\in\mathbb{Z}^+$, an assembly $\alpha$ is \emph{$\tau$-stable} if every cut of $B_\alpha$ has weight at least $\tau$, i.e.\ a $\tau$-stable assembly cannot be split into two pieces without separating bound tiles whose shared glues have cumulative strength $\tau$. Given two assemblies $\alpha,\beta$, we say $\alpha$ is a \emph{subassembly} of $\beta$ (denoted $\alpha \sqsubseteq \beta$) if $\dom \alpha \subseteq \dom \beta$ and for all $p\in \dom \alpha$, $\alpha(p)=\beta(p)$ (i.e., they have tiles of the same types in all locations of $\dom \alpha$). 

A \emph{tile-assembly system} (TAS) is a triple $\calT=(T,\sigma,\tau)$, where $T$ is a tile set, $\sigma$ is a finite $\tau$-stable assembly called the \emph{seed assembly}, and $\tau\in\mathbb{Z}^+$ is called the \emph{binding threshold} (a.k.a. \emph{temperature}).
Given a TAS $\calT=(T,\sigma,\tau)$ and two $\tau$-stable assemblies $\alpha$ and $\beta$, we say that $\alpha$ \emph{$\calT$-produces} $\beta$ \emph{in one step} (written $\alpha \to^{\calT}_1 \beta$) if $\alpha \sqsubseteq \beta$ and $|S_\beta \setminus S_\alpha| = 1$.
That is, $\alpha \to^{\calT}_1 \beta$ if $\beta$ differs from $\alpha$ by the addition of a single tile.
The \emph{$\calT$-frontier} is the set $\partial^{\calT}\alpha = \bigcup_{\alpha \to^{\calT}_1 \beta} S_\beta \setminus S_\alpha$ of locations in which a tile could $\tau$-stably attach to $\alpha$.
When $\calT$ is clear from context we simply refer to these as the \emph{frontier} locations.

We use $\mathcal{A}^T$ to denote the set of all assemblies of tiles in tile set $T$. Given a TAS $\calT=(T, \sigma, \tau)$, a sequence of $k\in\mathbb{Z}^+ \cup \{\infty\}$ assemblies $\alpha_0, \alpha_1, \ldots$ over $\mathcal{A}^T$ is called a \emph{$\calT$-assembly sequence} if, for all $1\le i < k$, $\alpha_{i-1} \to^{\calT}_1 \alpha_i$. The \emph{result} $\lim \alpha$ of an assembly sequence $\alpha$ is the unique limiting assembly of the sequence. For finite assembly sequences, this is the final assembly; whereas for infinite assembly sequences, this is the assembly consisting of all tiles from any assembly in the sequence. We say that \emph{$\alpha$ $\calT$-produces $\beta$} (denoted $\alpha\to^{\calT} \beta$) if there is a $\calT$-assembly sequence starting with $\alpha$ whose result is $\beta$. We say $\alpha$ is \emph{$\calT$-producible} if $\sigma\to^{\calT}\alpha$ and write $\prodasm{\calT}$ to denote the set of $\calT$-producible assemblies. We say $\alpha$ is \emph{$\calT$-terminal} if $\alpha$ is $\tau$-stable and there exists no assembly that is $\calT$-producible from $\alpha$. We denote the set of $\calT$-producible and $\calT$-terminal assemblies by $\termasm{\calT}$. If $|\termasm{\calT}| = 1$, i.e., there is exactly one terminal assembly, we say that $\calT$ is \emph{directed}. When $\calT$ is clear from context, we may omit $\calT$ from notation.

Given TAS $\calT$, we define $\shapeasm{T} = \{S_\alpha \mid \alpha \in \termasm{T}\}$ as the set of all shapes of terminal assemblies in $\calT$. Clearly, if $\calT$ is directed $|\shapeasm{T}| = 1$. However, even if $\calT$ is not directed it may be the case that $|\shapeasm{T}| = 1$ (see \cite{BryChiDotKarSekTOC} for an example). In such a case we call the system \emph{shape directed} and say that $\calT$ \emph{strictly self-assembles} $S_\alpha$, where $S_\alpha \in \shapeasm{T}$.

\subsection{The Synchronous Tile Assembly Model}

The Synchronous Tile Assembly Model (syncTAM) is the variant of the aTAM where at every step of assembly, rather that a single location of the frontier being non-deterministically selected to receive a tile, \emph{all} frontier locations simultaneously receive a tile each. As the frontier of a system can grow arbitrarily large, the number of tiles added per assembly step may also grow arbitrarily large. As in the aTAM, if any single frontier location allows $\tau$-strength binding of tiles of multiple types, one of those types is non-deterministically chosen for each such location.

We formally define the syncTAM in the same way as the aTAM with the exception of the definition of a single step in the assembly sequence. Given a syncTAS $\calS = (T, \sigma, \tau)$ and two $\tau$-stable assemblies $\alpha$ and $\beta$, we say that $\alpha$ \emph{$\calS$-produces $\beta$ in one step}, written $\alpha \to^{\calS}_1 \beta$, if $\alpha \sqsubseteq \beta$ and $\dom(\beta) \setminus \dom(\alpha) = \partial^{\calS}(\alpha)$. Note that like in the aTAM, we define $\partial^{\calS}(\alpha)$ to be the set of locations where a tile can $\tau$-stably attach to $\alpha$.  The syncTAM inherits all other definitions and notation from the aTAM.

To distinguish between a tile assembly system (a.k.a. TAS) in the aTAM versus one in the syncTAM, we will use the term \emph{syncTAS} to refer to a TAS in the syncTAM.





\subsection{Limited Synchronicity Tile Assembly Model}
The Limited Synchronocity Tile Assembly Model (L-syncTAM) is a variant of the syncTAM (introduced in \cite{syncTAM}) wherein instead of all frontier locations simultaneously receiving a tile each step, some predefined nonzero integer number of frontier locations each receive a tile. We define this number as the \emph{synchronocity} for a given system, and denote it as $l \in \mathbb{Z}^+$. At every assembly step, $l$ frontier locations are filled (unless the frontier is smaller than $l$, in which case all frontier locations are filled).

Formally, we define the L-syncTAM similarly to the syncTAM. We call a tile assembly system $\calL$ in the L-syncTAM an \emph{L-syncTAS} and define it as $\calL = (T, \sigma, \tau, l)$, where $l$ is its synchronicity and $T$, $\sigma$, and $\tau$ are defined as previously. As with the syncTAM, given a $\tau$-stable assembly $\alpha$ in $\calL$, we define $\partial^{\calL}(\alpha)$ to be the set of locations where a tile can $\tau$-stably attach to $\alpha$, and the L-syncTAM inherits all other definitions and notations from the aTAM, with the following refinement to the definition of an assembly step.

Given two $\tau$-stable assemblies $\alpha$ and $\beta$ in $\calL$, we say that $\alpha$ \emph{$\calL$-produces $\beta$ in one step}, written $\alpha \to_1^\calL \beta$ if $\alpha  \sqsubseteq \beta$ and one of the two following are true:

\begin{enumerate}
    \item $\dom(\beta) \setminus \dom(\alpha) = \partial^\calL(\alpha)$ and $|\partial^\calL(\alpha)| \le l$

    \item $\dom(\beta) \setminus \dom(\alpha) \subseteq \partial^\calL(\alpha)$ and $|\dom(\beta)\setminus \dom(\alpha)|=l$
\end{enumerate} 

The first case holds when the frontier is no larger than $l$. In the second case, when the frontier is larger, the particular subset of frontier locations that receive tiles at a step is non-deterministically selected from the full set of frontier locations. Note that since binding locations are selected only from the current frontier, $\partial^\calL(\alpha)$, where tiles can bind to the existing assembly $\alpha$ with strength $\ge \tau$, for a tile in location $\vec{l} \in \mathbb{Z}^2$, where $\vec{l} \not \in \dom \alpha$, to attach to $\alpha$ it must form one or more bonds with strengths totaling $\ge \tau$ with one or more tiles already in $\alpha$, regardless of how many bonds it may form with other tiles attaching during the same step.


Note that the aTAM is a special case of the L-syncTAM where $l = 1$, and the syncTAM can be thought of as being a special case where $l$ is allowed to be $\infty$.

\subsection{Scaled shapes}

Given $S$, a connected set of points in $\mathbb{Z}^2$, we define a version of $S$ \emph{scaled by factor $c$}, and denote it by  $S^c$, as $S^c = \{(x,y) \in \mathbb{Z}^2 \mid (\lfloor \frac{x}{c} \rfloor, \lfloor \frac{y}{c} \rfloor) \in S\}$. Intuitively, $S^c$ is a version of $S$ expanded by replacing each point of $S$ by a $c \times c$ square of points. Likewise, for a set $A$ of shapes, $A^c$ is defined as $\{ S^c \mid S \in A \}$. 

Given two TASs $\calT$ and $\calT'$, if $\shapeasm{\calT} = \shapeasm{\calT'}$, then we say that $\calT$ \emph{shape-matches} $\calT'$. Similarly, if $\shapeasm{\calT}^c = \shapeasm{\calT'}^d$ for some $c, d \in \mathbb{Z}^+$, we say that $\calT$ \emph{scale shape-matches} $\calT'$.




\subsection{Synchronous intrinsic simulation}






    


Intrinsic universality via intrinsic simulation has been studied in a variety of tile-assembly models (e.g. \cite{IUSA,2HAMIU,DirectedNotIU,DDDIU,TempOneNotIU,DDDxModelSimICALP}). Essentially, intrinsic simulation deals with one system simulating the behavior of another modulo scale factor, with squares of tiles in the simulating system representing individual tiles of the simulated system. Intrinsic universality (IU) for a model (or class of systems within a model) occurs when there is a single tile set (along with associated functions, to be discussed) that can be used to create a seed structure from its tiles for any of the systems in the model (or class) and then intrinsically simulate them. Now we provide formal definitions which are largely taken from \cite{DDDxModelSimICALP} but are modified to incorporate the dynamic of synchronicity of the L-syncTAM.


From this point on, let $T$ be a tile set and let the scale factor be $m\in\Z^+$.
An \emph{$m$-block macrotile} over $T$ is a partial function $\alpha : \Z_m^2 \dashrightarrow T$, where $\Z_m = \{0,1,\ldots,m-1\}$.
Let $B^T_m$ be the set of all $m$-block macrotiles over $T$.
The $m$-block with no domain is said to be $\emph{empty}$.
For a general assembly $\alpha:\Z^2 \dashrightarrow T$ and $(x',y')\in\Z^2$, define $\alpha^m_{(x',y')}$ to be the $m$-block macrotile defined by $\alpha^m_{(x',y')}(i_x,i_y) = \alpha(mx'+i_x,my'+i_y)$ for $0 \leq i_x,i_y< m$.
For some tile set $S$, a partial function $R: B^{S}_m \dashrightarrow T$ is said to be a \emph{valid $m$-block macrotile representation} from $S$ to $T$ if for any $\alpha,\beta \in B^{S}_m$ such that $\alpha \sqsubseteq \beta$ and $\alpha \in \dom R$, then $R(\alpha) = R(\beta)$.

For a given valid $m$-block macrotile representation function $R$ from tile set~$S$ to tile set $T$, define the \emph{assembly representation function}\footnote{Note that $R^*$ is a total function since every assembly of $S$ represents \emph{some} assembly of~$T$; the functions $R$ and $\alpha$ are partial to allow undefined points to represent empty space.}  $R^*: \mathcal{A}^{S} \rightarrow \mathcal{A}^T$ such that $R^*(\alpha') = \alpha$ if and only if $\alpha(x,y) = R\left(\alpha'^m_{(x,y)}\right)$ for all $(x,y) \in \Z^2$.
For an assembly $\alpha' \in \mathcal{A}^{S}$ such that $R^*(\alpha') = \alpha$, $\alpha'$ is said to map \emph{cleanly} to $\alpha \in \mathcal{A}^T$ under $R^*$ if for all non empty blocks $\alpha'^m_{(x,y)}$, $(x,y)+(u_x,u_y) \in \dom(\alpha)$ for some $(u_x,u_y) \in \{(0,\pm 1), (\pm 1, 0)\}$, or if $\alpha'$ has at most one non-empty $m$-block $\alpha^m_{0,0}$.  In other words, $\alpha'$ may have tiles in macrotile blocks representing empty space in $\alpha$, but only if that position is adjacent to a macrotile that represents a tile in $\alpha$.  We call such growth ``around the edges'' of $\alpha'$ \emph{fuzz} and thus restrict it to be adjacent to only valid macrotiles, but not diagonally adjacent (i.e.\ we do not permit \emph{diagonal fuzz}).

In the following definitions, let $\mathcal{T} = \left(T, \sigma_T, \tau_T, l_T\right)$  and $\mathcal{S} = \left(S, \sigma_S, \tau_S, l_S\right)$ each be an L-syncTAS, and let $R$ be an $m$-block representation function $R:B^S_m \rightarrow T$. We will define how $\calS$ \emph{synchronously simulates} $\calT$ at \emph{spatial scale factor} $m$ and \emph{temporal scale factor} $s \in \mathbb{Z}^+$.

\begin{definition}\label{def:equiv-assemblies}
    Let $\calL = (T, \sigma, \tau, l)$ be an L-syncTAS, and let $\alpha$ and $\beta$ be two $\tau$-stable assemblies in $\calL$.  We say that $\beta$ is \emph{equivalent modulo $l$} to $\alpha$, and write $\beta \approx_l \alpha$, if and only if $\exists \gamma$ such that $\gamma$ is a $\tau$-stable assembly in $\calL$ and the following conditions hold:
    \begin{enumerate}
        \item $\alpha \rightarrow^\calL_1 \gamma$

        \item $\alpha \sqsubseteq \beta \sqsubseteq \gamma$

        \item $\exists i$, where $0 \le i < l$, such that $|\dom(\alpha)| + i = |\dom(\beta)| = |\dom(\gamma)| - l + i$

        \item $\dom(\beta) \setminus \dom(\alpha) \subseteq \partial^\calL(\alpha)$
    \end{enumerate}
\end{definition}

Essentially, if $\beta \approx_l \alpha$, $\beta$ can be thought of as an ``intermediate'' assembly between $\alpha$ and $\gamma$ (where $\alpha$ produces $\gamma$ in one step) where only $i$ (which is $< l$) tiles have been added to $\alpha$. While such an assembly $\beta$ is not producible in $\calL$ if $i> 0$, it will be useful to define such an assembly in the context of a simulator that may require many tile additions to simulate each individual tile addition of $\calL$. 

\begin{definition}
\label{def-equiv-prod} We say that $\mathcal{S}$ and $\mathcal{T}$ have \emph{equivalent productions} (under $R$), and we write $\mathcal{S} \Leftrightarrow \mathcal{T}$ if the following conditions hold:
\begin{enumerate}
        
        \item $\forall \alpha' \in \prodasm{S}, \exists \alpha \in \prodasm{T}$ such that $R^*(\alpha') \approx_{l_\mathcal{T}} \alpha$

        \item $\forall \alpha \in \prodasm{T}, \exists \alpha' \in \prodasm{S}$ such that $R^*(\alpha') = \alpha$
        
        \item $\left\{R^*(\alpha') | \alpha' \in \termasm{\mathcal{S}}\right\} = \termasm{\mathcal{T}}$.
        \item For all $\alpha'\in \prodasm{\mathcal{S}}$, $\alpha'$ maps cleanly to $R^*(\alpha')$.
\end{enumerate}
\end{definition}

\begin{definition}

\label{def:t-follows-s} We say that $\mathcal{T}$ \emph{follows} $\mathcal{S}$ (under $R$), and we write $\mathcal{T} \dashv_R \mathcal{S}$, if $\alpha' \rightarrow^\mathcal{S} \beta'$ for some $\alpha',\beta' \in \prodasm{\mathcal{S}}$ implies that $\exists \alpha'', \beta'' \in \prodasm{\calS}$ such that $R^*(\alpha') \approx_l R^*(\alpha'')$, $R^*(\beta') \approx_l R^*(\beta'')$ and $R^*(\alpha'') \to^\mathcal{T} R^*(\beta'')$.
\end{definition}

\begin{definition}\label{def:t-sync-follows-s}
We say that $\calT$ \emph{synchronously follows} $\calS$ (under $R$ with temporal scale factor $s$), and we write  $\mathcal{T} \dashv^s_R \mathcal{S}$, if $\mathcal{T} \dashv_R \mathcal{S}$ and, for every assembly sequence $\vec{\beta} = \beta_0, \beta_1, \beta_2, \ldots$ in $\calS$, there exists assembly sequence $\vec{\alpha} = \alpha_0, \alpha_1, \alpha_2, \ldots$ in $\calT$ such that: 
    \begin{enumerate}
        \item $\alpha_0 = \sigma$ and $R^*(\beta_0) = \sigma$
        
        \item For every $0 \le i < |\vec{\alpha}|$, there is $i' \in [i \cdot (s - 1), i \cdot (s+1)]$ such that $R^*(\beta_{i'}) = \alpha_i$

    \end{enumerate}
\end{definition}

The next definition essentially specifies that every time $\mathcal{S}$ simulates an assembly $\alpha \in \prodasm{\mathcal{T}}$, there must be at least one valid growth path in $\mathcal{S}$ for each of the possible next steps that $\mathcal{T}$ could make from $\alpha$ which results in an assembly in $\mathcal{S}$ that maps to that next step. While this definition is unfortunately dense, it accommodates subtle situations such as where $\calS$ must ``commit to'' a subset of possible representations in $\mathcal{T}$ before being explicitly mapped, under $R$, to any one in particular.

\begin{definition}\label{def:s-models-t}
We say that $\mathcal{S}$ \emph{models} $\mathcal{T}$ (under $R$), and we write $\mathcal{S} \models_R \mathcal{T}$, if for every $\alpha \in \prodasm{\mathcal{T}}$, there exists $\Pi \subset \prodasm{\mathcal{S}}$ where $\Pi \neq \emptyset$ and  $R^*(\alpha') = \alpha$ for all $\alpha' \in \Pi$, such that, for every $\beta \in \prodasm{\mathcal{T}}$ where $\alpha \rightarrow^\mathcal{T} \beta$, (1) for every $\alpha' \in \Pi$ there exists $\beta' \in \prodasm{\mathcal{S}}$ where $R^*(\beta') = \beta$ and $\alpha' \rightarrow^\mathcal{S} \beta'$, and (2) for every $\alpha'' \in \prodasm{\mathcal{S}}$ where $\alpha'' \rightarrow^\mathcal{S} \beta'$, $\beta' \in \prodasm{\mathcal{S}}$, $R^*(\alpha'') \approx_{l_\calT} \alpha$, and $R^*(\beta') = \beta$, there exists $\alpha' \in \Pi$ such that $\alpha' \rightarrow^\mathcal{S} \alpha''$.
\end{definition}

\begin{definition}\label{def:s-simulates-t}
We say that $\mathcal{S}$ \emph{synchronously simulates} $\mathcal{T}$ (under $R$ and temporal scale factor $s$) if $\mathcal{S} \Leftrightarrow_R \mathcal{T}$ (equivalent productions), $\mathcal{T} \dashv^s_R \mathcal{S}$ and $\mathcal{S} \models_R \mathcal{T}$ (equivalent dynamics).
\end{definition}

\section{Shapes}
\label{sec:shapes}

In this section we show how the syncTAM, even at temperature-1, is capable of forms of cooperative behavior and detection of empty locations. This allows the syncTAM to implement a flagpole, the shibboleth shape for cooperation, at temperature 1. This flagpole relies on \emph{implementing} cooperation thanks to synchronization by a \emph{cooperation gadget}. Synchronization also allows for parts of the syncTAM assembly to ``probe'' regions of space in order to assess whether or not some other part of the assembly has / has not arrived in that region within a fixed window of time. As such, syncTAM spaces may use empty space as input for computation.
This sidesteps the gap between weak and strict self-assembly of shapes \cite{ScaledPierFractals,FractalsSelfAssembleSODA,fractalgorithmica}, and enables the strict self-assembly of a (modified) discrete Sierpinski Triangle. Last, the cooperation gadget proves to be enough to build the same assemblies as any \emph{directed} aTAM system with a non-cooperative system running in the syncTAM.

First, we show how temperature-1 syncTAM systems can simulate cooperative attachments.

\subsection{Simulating a flagpole at temperature-1}
\label{sec:flagpole}

From the earliest investigations on temperature 1 and intrinsic universality\cite{temp1notIU}, it has been known that the aTAM temperature 1 system cannot simulate systems which exhibit cooperation, i.e. systems with a temperature at least $2$. In particular, let $\mathcal{F}$ be an aTAM system at temperature 2 as shown in \cref{fig:flagpole}, which is a slight variation of the one in \cite{temp1notIU}.

\begin{figure}
    \centering
  \begin{subfigure}[t]{.3\textwidth}

        \begin{tikzpicture}[tile/.style={draw, regular polygon sides=4, minimum size=.3cm, inner sep=0}]

  \node[red, tile, text=black, fill=lipicsYellow] (seed) {$s$};
\coordinate (u0) at (seed.north);
\coordinate (d0) at (seed.south);

\foreach \i [remember=\i as \previ (initially 0)] in {1, 2, 3} {
    \node[tile, fill=lipicsYellow, above=.5em of u\previ] (u\i) {};
    \node[tile, fill=lipicsYellow, below=.5em of d\previ] (d\i) {};
    \draw[thick, double] (u\previ) -- (u\i);
    \draw[thick, double] (d\previ) -- (d\i);
}

\coordinate (t0) at (u3.east);
\coordinate (b0) at (d3.east);

\foreach \i [remember=\i as \previ (initially 0)] in {1,...,5} {
    \node[tile, fill=lipicsYellow, right=.5em of t\previ] (t\i) {};
    \node[tile, fill=lipicsYellow, right=.5em of b\previ] (b\i) {};
}

\foreach \i [remember=\i as \previ (initially 5)] in {6,7} {
    \node[tile, fill=lipicsYellow, below=.5em of t\previ] (t\i) {};
    \node[tile, fill=lipicsYellow, above=.5em of b\previ] (b\i) {};
}

\node[tile, fill=green!50!white, above=.5em of t2] (t2a) {};
\node[tile, fill=green!50!white, above=.5em of t2a] (t2b) {};
\node[tile, fill=green!50!white, above=.5em of t3] (t3a) {};
\node[tile, fill=green!50!white, above=.5em of t3a] (t3b) {};

\node[tile, fill=green!50!white, below=.5em of b2] (b2a) {};
\node[tile, fill=green!50!white, below=.5em of b2a] (b2b) {};
\node[tile, fill=green!50!white, below=.5em of b3] (b3a) {};
\node[tile, fill=green!50!white, below=.5em of b3a] (b3b) {};

\node[tile, fill=purple, below=.5em of t7] (k) {};
\node[tile, fill=teal, right=.5em of k] (p) {};
\node[tile, fill=pink, right=.5em of p] (f) {};

\foreach \i [remember=\i as \previ (initially 1)] in {2,...,4} {
    \draw[thick, double] (b\previ) -- (b\i);
    \draw[thick, double] (t\previ) -- (t\i);
}
\foreach \i [remember=\i as \previ (initially 5)] in {6,7} {
    \draw[thick, double] (b\previ) -- (b\i);
    \draw[thick, double] (t\previ) -- (t\i);
}


\coordinate (t1west) at (t1.west);
\coordinate[shift={(0,-0.25em)}] (tleft) at (barycentric cs:t0=1,t1west=1);
\coordinate (t4east) at (t4.east);
\coordinate (t5west) at (t5.west);
\coordinate[shift={(0,-0.25em)}] (tright) at (barycentric cs:t5west=1,t4east=1);
\draw[thick, double] (tright) to[out=-90, in=-90, looseness=.5] (tleft);
\draw[thick, double] (t0) -- (tleft) -- (t1.west);
\draw[thick, double] (t4.east) -- (tright) -- (t5.west);
\draw[thick, double] (t1.east) to[in=180] (t2a.west);
\draw[thick, double] (t2a) -- (t2b) -- (t3b) -- (t3a); %
\draw[thick, double] (t3a.east) to[out=0] (t4.west);

\coordinate (b1west) at (b1.west);
\coordinate[shift={(0,0.25em)}] (bleft) at (barycentric cs:b0=1,b1west=1);
\coordinate (b4east) at (b4.east);
\coordinate (b5west) at (b5.west);
\coordinate[shift={(0,0.25em)}] (bright) at (barycentric cs:b5west=1,b4east=1);
\draw[thick, double] (bright) to[out=90, in=90, looseness=.5] (bleft);
\draw[thick, double] (b0) -- (bleft) -- (b1.west);
\draw[thick, double] (b4.east) -- (bright) -- (b5.west);
\draw[thick, double] (b1.east) to[in=180] (b2a.west);
\draw[thick, double] (b2a) -- (b2b) -- (b3b) -- (b3a); %
\draw[thick, double] (b3a.east) to[out=0] (b4.west);

\draw[thick, double] (k) -- (p) -- (f);
\draw[thick, red] (t7) -- (k) -- (b7);

\end{tikzpicture}
      \label{fig:flagpole-temp2}
    \end{subfigure}\quad\quad
    \begin{subfigure}[t]{.5\textwidth}
        \includegraphics[width=2.5in]{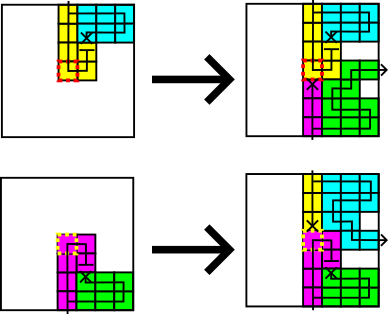}
    \end{subfigure}
    \begin{subfigure}[b]{.5\textwidth}
        \includegraphics[width=3.0in]{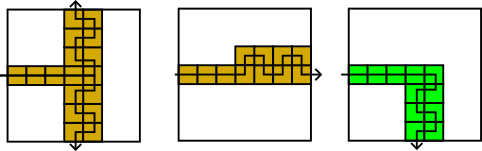}
    \end{subfigure}
    \caption{Our variation $\mathcal{F}$ on the ``Flagpole'' system from~\cite{temp1notIU}, with the TAS on the left. Then, going clockwise, some growth patterns. The first two are in the cooperative gadget (red in the figure to the left), with either the top or bottom neighboring tile arriving first, then the other one arriving and initiating the flagpole growth. Below, the macrotiles for the non-cooperative parts: all paths through the macrotiles are of length 11.}
    \label{fig:flagpole}
\end{figure}

As in~\cite{temp1notIU}, this system features two arms growing from left to right, one at the top on line $y = 3$, and one at the bottom on line $y = -3$. Each of these arms will turn back towards the $y = 0$ line at a non-deterministic position. If both arms do so in the same column, a three tile flag (in purple, teal and pink) will grow from their meeting point. Our variation adds green tiles, which allow the arms to dally while they move right: for every 4 column they advance, they can do so either straight with the yellow tiles, in 4 time steps, or with a detour through the green tiles, which takes them 6 time steps total. Because of this, even if both arms pick the same column $x = c$ to turn back to the center, they do not necessarily reach position $(c, \pm 1)$ at the same time.

Let $F$ be the set of shapes assembled by $\mathcal{F}$ in the aTAM at temperature 2. Note that in the syncTAM at temperature 2, $\mathcal{F}$ assembles the same set of shapes.

\begin{lemma}
    The TAS $\mathcal{F}$ cannot be scale shape-matched at temperature $1$ in the aTAM: there are no integer $c \geq 1$ and TAS $\mathcal{F}_1$ at temperature $1$ such that $\termasm{\mathcal{F}_1} = c \cdot F$.\label{lem:flagpole}
\end{lemma}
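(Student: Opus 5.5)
The strategy is to reduce to the flagpole argument of~\cite{temp1notIU}, combined with the pumping/fragility dichotomy for temperature-1 aTAM paths of~\cite{Temp1PathsSTOC}.

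Suppose for contradiction that $\mathcal{F}_1$ is a temperature-1 aTAM system with $\termasm{\mathcal{F}_1} = c \cdot F$. The green detours do not matter for shapes: $F$ is exactly the set of shapes in which the two arms turn back in columns $a$ and $b$ respectively, with a flag present if and only if $a = b$. Adding green tiles only pads both arms with the same local bumps, so $F$ behaves like the original flagpole family.

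\textbf{Step 1: a terminal assembly with a far flag.} Pick $n$ much larger than $|T_{\mathcal{F}_1}|$ and $c$. Take a terminal assembly $\alpha$ of $\mathcal{F}_1$ whose shape is $(S_n)^c$, where $S_n\in F$ is the shape in which both arms turn at column $n$ and the flag is present. Since $\alpha$ is a temperature-1 assembly, its binding graph is connected. Fix a simple path $P$ in the binding graph from the seed to a tile in the scaled flag region.

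\textbf{Step 2: the path uses one arm for a long stretch.} The scaled shape $(S_n)^c$ has the two scaled arms separated by a gap of empty columns near $y=0$. The only connection between the arms is through the scaled meeting column at $x\approx cn$. Hence $P$ reaches the flag by travelling along one arm, say the top one, for horizontal distance about $cn$, possibly entering the other arm only near the meeting point.

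\textbf{Step 3: pump along that arm.} Along this long horizontal stretch, some tile type repeats at two positions of $P$ that differ by a translation $\vec v$ with positive horizontal component. Using the window-movie and pumping technique of~\cite{temp1notIU}, or the pumpable/fragile dichotomy of~\cite{Temp1PathsSTOC}, we obtain one of two outcomes:
\begin{itemize}
\item the path segment can be repeated $\vec v$-periodically, giving producible assemblies whose domains extend the top arm beyond every finite column; or
\item the segment can be \emph{cut and reglued}, removing one period so that the top arm turns at column $n - |\vec v|/c$ while the bottom arm, grown from the seed independently, still turns at column $n$.
\end{itemize}
In both cases we obtain a producible assembly, hence one contained in some terminal assembly, that is not contained in any element of $c\cdot F$. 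In the first case the arm is infinite, whereas all shapes in $F$ are finite. In the second case, the assembly contains a flag region attached to a top arm that is misaligned with the bottom arm. Scaling is harmless because the blocks are $c\times c$ and the pumping translation can be chosen to be a multiple of $c$, by taking repeats among $c$-aligned positions.

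\textbf{Main obstacle.} The hardest step is to make the cut-and-reglue argument rigorous while guaranteeing that no conflicts arise. The shortened arm must not collide with the part of the assembly that grows the bottom arm, and the flag tiles must still be placeable. I would handle this with the window-movie lemma of~\cite{temp1notIU}. Choose a vertical window cutting the top arm strictly between the seed and the turn. The top arm lies within a bounded height strip far from the bottom arm, so two windows with the same movie exist by pigeonhole on the bounded window size, and the segment between them can be excised. Green detours only increase the number of windows with identical movies. The remaining work is checking that the resulting mismatched-flag assembly, or its terminal extension, has a shape outside $c\cdot F$.
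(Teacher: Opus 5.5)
Your high-level route, reducing to the flagpole argument of \cite{temp1notIU} via the Window Movie Lemma, is the one the paper invokes. However, the step that is supposed to produce a flag beside misaligned arms fails as you set it up. A window in that lemma is a cut of the whole plane, and its movie records every glue that appears along it. A vertical window at column $x_1$ therefore crosses \emph{both} arms of $\alpha$. Excising between $w$ and $w+\vec v$ then shortens both arms and moves the flag by the same vector, so the result is again a valid flagpole and there is no contradiction. To keep the bottom arm in place, the window must separate the far part of the top arm from the far part of the bottom arm. It must therefore also cross the meeting column between the flag and one of the two arm ends. That portion of the window sits at the same absolute place for $w$ and $w+\vec v$, so unless no glue ever appears there, the two movies cannot agree up to translation. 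Your remark that the top arm lies in a strip far from the bottom arm fails exactly at the meeting column.

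The missing ingredient is the temperature-$1$ fact that the seed together with any binding-graph path $P$ starting at the seed in a producible assembly is itself producible. You would apply the lemma to the sequence growing only $P$, where a vertical window meets only the top arm. Then you must show \emph{separately} that the bottom arm can still reach its turn at the old column, i.e.\ exhibit a path $Q$ from the seed to the bottom turn that avoids the flag and is compatible with the excised path. That is precisely your unproved sentence ``the bottom arm, grown from the seed independently, still turns at column $n$''. In $\alpha$ the bottom turn may be bound to the rest only through the flag, since the binding graph need not use every adjacency. In that case it moves together with the excised part, and a further argument is required.

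Several other claims are also wrong. $F$ contains infinite shapes: in $\mathcal{F}$ an arm may keep repeating its four-column period instead of turning, and the result of that infinite assembly sequence is terminal. So an infinitely pumped top arm without a flag is a legitimate shape, and your first branch gives no contradiction. Moreover, the dichotomy of \cite{Temp1PathsSTOC} is pumpable versus fragile, not pumpable versus cut-and-reglue; simply use excision.

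The green tiles do change the shapes: each detour adds two positions in row $4$ (resp.\ $-4$) of the unscaled shape, chosen independently per period and per arm. So $F$ is not determined by the turning columns alone. This is harmless for the contradiction, but your description of $F$ is incorrect.

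Finally, since shapes are taken up to translation, the seed of $\mathcal{F}_1$ need not lie in the scaled seed column, and its position relative to the shape may differ between terminal assemblies. Step 2's claim that $P$ travels about $cn$ along one arm therefore needs an argument. If the seed lies near the meeting end of an arm, the long stretch is on the path to the other arm's turn, not on the path to the flag.
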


\begin{proof}
Since the system $\mathcal{F}$ is a trivial modification of the system in \cite{temp1notIU}, proving \cref{lem:flagpole} requires only a trivial application of the Window Movie Lemma that was used to show that the original system could not be simulated because any system attempting to simulate it would have to be able to grow assemblies of invalid shapes.
\end{proof}

On the other hand, there is a temperature 1 TAS $\mathcal{F}_\textrm{sync}$ which simulates $\mathcal{F}$ in the syncTAM, and thus has $\termasm{\mathcal{F}_\textrm{sync}} = c \cdot F$. A depiction of the system with $c = 7$ and the gadgets that simulate the cooperative behavior of the tile at the base of the flagpole can be seen in \cref{fig:flagpole}.



\subsection{A modified Sierpinski triangle at temperature-1}

In this section we show how a modified version of the Sierpinski triangle at scale factor $3$ can self-assemble in the syncTAM. This construction is used to demonstrate the ability of systems in the syncTAM to have gadgets that detect the lack of growth in an area, even without glue cooperativity (i.e. at temperature-1) and continue growth only in that absence of growth. 

Let $\Delta$ be the shape defined in \cref{fig:syncTAM-Sierpinski}.

\begin{theorem}\label{thm:sync-shape-temp1}
    The set $\Delta$ strictly self-assembles in the syncTAM at temperature 1 and scale 3 but does not strictly self-assemble in the aTAM at any temperature and scale.
\end{theorem}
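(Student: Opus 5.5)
The proof splits into two independent parts: a construction for the positive claim, and a reduction to the known impossibility result for the negative claim.

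\emph{Positive part (syncTAM, temperature 1, scale 3).} I would build $\Delta$ recursively, stage by stage. Stage $k+1$ consists of three copies of stage $k$ arranged as in the Sierpinski recursion. Each copy is grown by a single-strand-like frontier: since the temperature is 1, every tile attaches through one glue. The key ingredient is the cooperation gadget from \cref{sec:flagpole}. In the syncTAM a path of tiles can be timed so that it reaches a location exactly when another path would also arrive there. The existence of the first arrival blocks a location, and that blocking changes which tiles the second path can place. This is how the system emulates the XOR rule of the Sierpinski pattern: a $3\times 3$ macro-cell is filled or left empty according to its two upper/left neighbours. Empty cells are handled by ``probe'' gadgets. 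A probe grows into a region, and if the region is unoccupied after the fixed number of steps prescribed by synchronous timing, it continues; if the region is occupied, it is stopped. This is exactly the absence detection mentioned in the introduction of \cref{sec:shapes}. The modifications that distinguish $\Delta$ from the true Sierpinski triangle (as fixed in \cref{fig:syncTAM-Sierpinski}) are the extra tiles these gadgets need inside each scaled cell, plus connectors that keep the shape connected. I would then prove by induction on the stage that every synchronous assembly sequence yields exactly the prescribed domain. Because the system is deterministic in its timing, all paths to a given macro-cell have fixed lengths; the paths through macrotiles have equal lengths, just as in the flagpole macrotiles. Hence the arrival times at each gadget are determined, and the gadget resolves correctly. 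The \textbf{main obstacle} is this timing invariant: showing that arrival times at every cell grow in a controlled, uniform way across all scales, so that a probe never sees a neighbour arriving ``late'' and misreads it as empty. I would first try to establish that each tile at macro-position $(x,y)$ is placed at step exactly $3(x+y)+c$ for a fixed offset $c$ depending only on its position within the block.

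\emph{Negative part.} Up to the local modifications, $\Delta$ is the discrete Sierpinski triangle. I would adapt the argument of \cite{TreeFractals,FractalsSelfAssembleSODA}, which shows that such fractals do not strictly self-assemble in the aTAM at any temperature or scale, by checking that $\Delta$ keeps the features that argument relies on. First, at each stage the three sub-triangles are connected to one another only through bounded-width cut points. Second, there are arbitrarily large enclosed empty regions whose boundaries are far from the seed. The standard argument uses these to show that the assembly must pass a bounded amount of information across a bounded cut while still placing arbitrarily large distinct subassemblies, and that one of the subtriangles must be produced without access to the seed's information. Pumping or the pigeonhole principle over the finitely many cut configurations then lets a tile be placed into an empty region in some valid asynchronous assembly sequence, contradicting strictness. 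Scaling only changes constants, since the cuts remain bounded. So the remaining work is to verify that the modifications defining $\Delta$ preserve these bounded cuts; I expect this to be routine.
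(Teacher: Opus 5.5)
Your positive part is essentially the paper's construction. It uses $3\times 3$ macrotiles simulating Winfree's XOR tile set, with every path through a macrotile of equal length so arrivals are synchronized. Blockers are placed together with the entering path, so a cell's path survives iff exactly one neighbour feeds it. Cells with input $00$ simply receive no tiles, so no separate probe gadget is needed. The gap is in the negative part.

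The tree-fractal argument you plan to adapt needs arbitrarily large \emph{finite} pieces of the shape hanging off a single bounded junction and adjacent to nothing else. Pigeonhole on what crosses that junction then lets you regrow a large piece in place of a small one and push a tile into a hole. That structure comes from the discrete Sierpinski triangle being a tree, and $\Delta$ destroys it. The construction puts tiles (two blockers and the corner tile) in every cell whose west and south neighbours are both in the triangle, and these tiles touch both neighbours. One such cell sits at the inner corner of every stage, where the bottom row of the upper copy and the left column of the right copy meet diagonally.

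So in $\Delta$ each stage-$k$ sub-triangle is attached at its three corners, roughly $2^k$ apart, like the continuous gasket. The top of the left column of each right copy, a dead end in the plain triangle, is glued through this corner gadget to the neighbouring copy. Hence no large finite piece is cut off by a bounded-diameter window. Windows isolating sub-triangles of different stages are not translates of each other, so neither a single-tile pigeonhole nor the Window Movie Lemma lets you swap them, and swapping same-stage copies changes nothing. For the same reason, your claim that some sub-triangle is built ``without access to the seed's information'' is false: each receives information at all three corners, including from its sibling. Falling back on ``pumping'' does not rescue this, since at temperature $\geq 2$ there is no general pumping lemma for paths. The verification you call routine is exactly where the argument breaks.

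The paper instead invokes the Tree Pump Theorem of Becker, Hader and Patitz, following their Lemma 31. It uses two shape properties that do survive the modification. First, for every $k$ there are vertical lines meeting $\Delta$ in groups of constant size (two) separated by gaps larger than $k$. Second, for no vector $\vec p$ does $\Delta$ contain arbitrarily large $\vec p$-periodic non-empty square subpatterns. The contradiction thus rests on aperiodicity, a property your sketch never uses, rather than on transplanting a sub-triangle into an enclosed hole.
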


\begin{proof}
    \Cref{lem:triangle-strict} shows that $\Delta$ does strictly self-assemble at temperature 1 in the syncTAM.

    To see that it does not strictly self-assemble in the aTAM, notice that like the actual Sierpinski Triangle, for each $k$ this shape features vertical lines where it has groups of tiles of constant size (here, 2) separated by gaps of size more than $k$, yet for no vector $\vec{p}$ does it have $\vec{p}$-periodic, square, non-empty subpatterns of arbitrary size. Then, by the Tree Pump Theorem~\cite{fractalgorithmica}, it cannot be strictly self-assembled in the aTAM. The proof is the same as that of Lemma 31 in that paper, as only these hypothesis on the shape is used.
\end{proof}

\begin{restatable}{lemma}{triangleStrict}\label{lem:triangle-strict}
    There is a syncTAM system which strictly self-assembles $\Delta$ at temperature 1 and scale-factor 3.
\end{restatable}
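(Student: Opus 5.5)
The proof is constructive: I will exhibit a temperature-1 syncTAS $\mathcal{D}$ and argue by induction on the stages of the fractal that its unique terminal shape is $\Delta$ at scale 3. The shape $\Delta$ (defined in \cref{fig:syncTAM-Sierpinski}) is, like the discrete Sierpinski triangle, generated row by row by the XOR rule on the two neighbours below. Each point of $\Delta$ will be represented by a $3\times 3$ block. The blocks of a new row grow non-cooperatively, via single-tile paths of fixed length, from the blocks of the previous row. The plan has three steps.

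\textbf{Timing invariant.} First, I design the block-internal paths so that every block of row $k$ is completed at exactly the same time step $t_k$, and emits its upward and rightward probe paths at time $t_k$. This is possible because, in the syncTAM, a deterministic path of length $n$ grows in exactly $n$ steps. So I fix all intra-block paths to a common length, as in the flagpole macrotiles of \cref{fig:flagpole}, where all paths have length 11. I prove the invariant by induction on $k$.

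\textbf{Cooperation and absence-detection gadget.} The XOR value of a new position depends on whether zero, one or two of its supporting blocks are present. Two present neighbours are handled as in the flagpole of \cref{sec:flagpole}. Both neighbours send probe paths toward the target block, and these arrive synchronously because of the timing invariant. When the two paths meet at the same step, they collide. By carefully choosing which frontier sites both paths compete for, exactly one blocking tile results and growth in that block halts, which gives the value $1\oplus 1=0$ (empty).

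When only one neighbour is present, its probe finds the partner location still empty at the expected time. It then continues along a ``success'' path that builds the block. Here synchronicity is what detects the absence: in the aTAM the lone probe could simply be early.

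\textbf{Fuzz and strictness.} Because strict self-assembly forbids stray tiles, the probe tiles themselves must lie inside the scaled shape. I will therefore route probes only through cells that belong to the $3\times3$ blocks of $\Delta$. I expect this is why the paper uses a \emph{modified} triangle: the leftover probe tiles, including the collision ``blocking'' tiles, are absorbed as part of $\Delta$ itself, for instance as the groups of 2 tiles on vertical lines mentioned in the proof of \cref{thm:sync-shape-temp1}.

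\textbf{Correctness.} The final step is a case analysis over the local configurations (0, 1 or 2 supporting blocks, plus the boundary rows and columns seeded by simple straight paths). In each case I check two things: the resulting tiles match $\Delta$, and there is no nondeterminism, meaning every frontier location admits only one tile type. Together these give the uniqueness of the terminal shape.

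\textbf{Main obstacle.} I expect the hardest part to be the collision gadget. At temperature 1, two paths arriving simultaneously must not be able to ``pass through'' each other and still build a block. The approach I will try first is to have both probes target the same single location $p$ at the same step with different tile types. In the syncTAM only one tile is placed there, and every choice at $p$ is designed to be a dead end. By contrast, when only one probe reaches $p$, a different glue on its continuation lets growth go on. The delicate point is making the ``both present'' case deterministic in shape even though the choice of tile at $p$ is not deterministic.
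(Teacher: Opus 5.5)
Your overall route is the paper's. You simulate the standard Sierpinski aTAM tile set with $3\times 3$ macrotiles whose internal paths all have the same length, so the two inputs of a block arrive in the same step. You let them compete at one site, and the leftover tiles of the $1\oplus 1$ case become part of the modified shape $\Delta$. The gap is in the gadget you flag as the main obstacle, which cannot work as you describe it. At temperature~1 the tile placed at the contested site $p$ cannot sense whether the other probe exists. If only the west input arrives, $p$ receives exactly the tile type $t_W$ it receives when both inputs arrive and the west probe wins. So ``every choice at $p$ is a dead end'' and ``a different glue lets growth continue in the lone case'' contradict each other: any glue-based design either always halts at $p$ or always continues from it. The two cases must be distinguished by occupancy of the cells around $p$, not by glues.

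The missing idea is the paper's blocker tiles. Take $p$ to be the SW corner $(0,0)$ of the target block; its only exits inside the block are $(0,1)$ and $(1,0)$. In the same step in which it contests $p$, a west input places a blocker at the middle $(0,1)$ of the west side, and a south input places one at $(1,0)$.
\begin{itemize}
\item With a single input, one exit is taken by that input's own blocker. The path from $p$ leaves through the other exit and builds the block, outputting to the north and east neighbours.
\item With two inputs, both exits are already filled when $p$ is tiled, so whichever tile won $p$ is stuck. The corner tile and blockers are the fixed residue that $\Delta$ contains where the true triangle has a hole.
\end{itemize}
Your timing invariant must also guarantee that blockers arrive no later than the corner tile; otherwise the winner races the other blocker for its exit. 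Two smaller corrections. First, your correctness check cannot demand that every frontier location admit a unique tile type, because the choice at $p$ is inherently nondeterministic. What you need, and what suffices for strict self-assembly, is that every such choice yields the same terminal shape. Second, the two inputs should be the edge-adjacent west and south blocks, with stages being anti-diagonals, rather than two blocks ``below''. Otherwise a probe from a diagonal supporter may have to cross a block that is empty.
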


    We prove \cref{lem:triangle-strict} by providing a syncTAM system which strictly self-assembles the pattern.

    \begin{figure}
        \centering
        \includegraphics[width=0.8\textwidth]{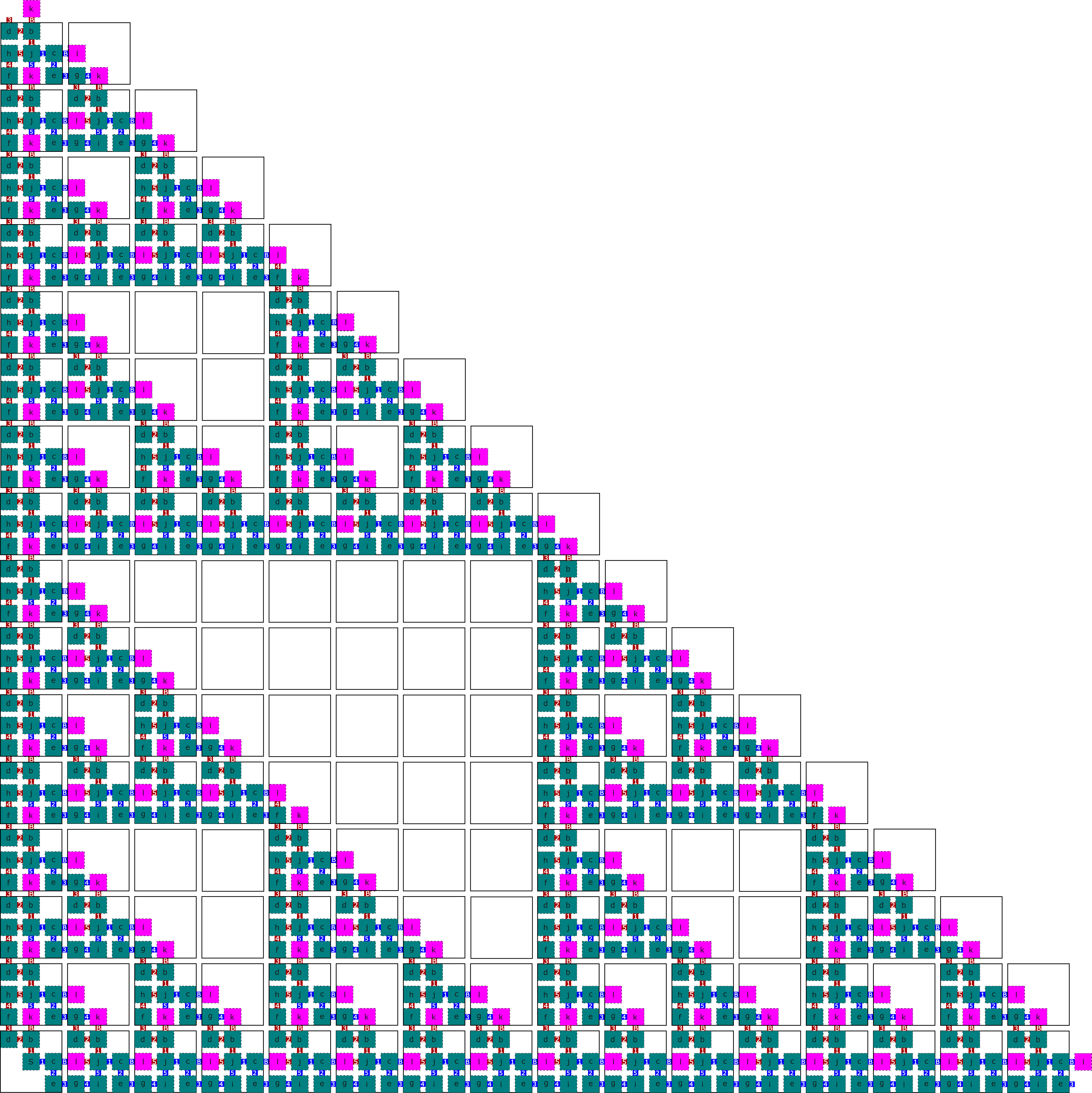}
        \caption{A portion of the infinite modified Sierpinski triangle that strictly self-assembles in the syncTAM at temperature 1.\label{fig:syncTAM-Sierpinski}}
    \end{figure}

    A portion of the infinite shape can be seen in \cref{fig:syncTAM-Sierpinski}, and the syncTAM system can be downloaded at \url{http://self-assembly.net/wiki/index.php/Synchronous_Tile_Assembly_Model_(syncTAM)}, which also provides a link to the simulator for simulating the system. Our syncTAM system uses $3 \times 3$ \emph{macrotiles} to logically ``simulate'' the behaviors of the aTAM tiles.
    The synchronous nature of the syncTAM allows us to use the timing of different portions of the assembly's growth to guarantee that growth will be blocked from proceeding whenever growth of a stage in one direction reaches the correct length because it crashes into growth from the other direction that has also reached the correct length. In the aTAM, it cannot be guaranteed that the growth of both portions will proceed at identical rates, and thus the necessary blocking may not occur. However, by adding tiles in every frontier location at every step the syncTAM provides such a guarantee and thus provides a powerful tool for designing systems. A detailed description of the system can be found in the appendix, in \cref{apdx:Sierpinski}.

\section{Matching Assemblies of Directed aTAM Systems With Non-cooperative syncTAM Systems}\label{sec:directed}

\begin{figure}[t]
  \centering
    \begin{tikzpicture}[xshift=-3cm]

\clip (-4.2cm, -.6cm) rectangle (9.5cm, 4.5cm);

\begin{scope}[xshift=5cm, scale=.2]
  \fill[lipicsYellow] (0,0) rectangle +(1,1);
  
  \foreach \i in {0,...,20} {
    \fill[orange!50!black] (\i+1,\i) rectangle +(1,1);
    \fill[orange] (\i+1,\i+1) rectangle +(1,1);
    \fill[green!50!black] (-\i-1,\i) rectangle +(1,1);
    \fill[green] (-\i-1,\i+1) rectangle +(1,1);
  }

  \tikzmath{
    {
      \fill[blue] (4, 5) rectangle +(1,1);
      \fill[red] (-3, 4) rectangle +(1,1);
      \fill[cyan] (0, 5) rectangle +(1,1);
      \fill[cyan] (1, 5) rectangle +(1,1);
      \fill[cyan] (2, 5) rectangle +(1,1);
      \fill[cyan] (3, 5) rectangle +(1,1);
      \fill[cyan] (2, 4) rectangle +(1,1);
      \fill[red] (1, 4) rectangle +(1,1);
      \fill[purple] (-1, 5) rectangle +(1,1);
      \fill[purple] (-2, 4) rectangle +(1,1);
      \fill[purple] (-1, 4) rectangle +(1,1);
      \fill[purple] (0, 4) rectangle +(1,1);
    };
    for \k in {8,12,...,20} {
      {
        \fill[blue] (\k, \k+1) rectangle +(1,1);
        \fill[red] (-\k+1, \k) rectangle +(1,1);
      };
      for \x in {0,4,...,\k-4} {
        {
          \fill[cyan] (\x, \k+1) rectangle +(1,1);
          \fill[cyan] (\x+1, \k+1) rectangle +(1,1);
          \fill[cyan] (\x+2, \k+1) rectangle +(1,1);
          \fill[cyan] (\x+3, \k+1) rectangle +(1,1);
          \fill[cyan] (\x+2, \k) rectangle +(1,1);
          \fill[red] (-\x+1, \k) rectangle +(1,1);
          \fill[purple] (-\x-1, \k+1) rectangle +(1,1);
          \fill[purple] (-\x-2, \k) rectangle +(1,1);
          \fill[purple] (-\x-1, \k) rectangle +(1,1);
          \fill[purple] (-\x, \k) rectangle +(1,1);
        };
      };
    };
  }

\end{scope}

\begin{scope}[xshift=-1.5cm,tile/.style={fill=black, draw, regular polygon sides=4, minimum size=.3cm, inner sep=0, font=\footnotesize}, interrupt/.style={fill=white,inner sep=1pt}]

  
  \node[tile, fill=lipicsYellow] (seed) {$s$};

  \coordinate (tl0) at (seed.west);
  \coordinate (tr0) at (seed.east);

  \foreach \i [remember=\i as \previ (initially 0)]in {1, ..., 4} {
    \node[tile, fill=green!50!black, left=.6em of tl\previ] (bl\i) {};
    \node[tile, fill=green, above=.5em of bl\i] (tl\i) {};

    \draw[thick] (tl\previ.west) -- (bl\i.east);
    \draw[thick] (bl\i.north) -- (tl\i.south);

    \node[tile, fill=orange!50!black, right=.6em of tr\previ] (br\i) {};
    \node[tile, fill=orange, above=.5em of br\i] (tr\i) {};
    \draw[thick] (tr\previ.east) -- (br\i.west);
    \draw[thick] (br\i.north) -- (tr\i.south);
  }

  \node[tile, right=1em of tl4, fill=red] (e0) {};
  \node[tile, above=1em of tr4, fill=blue] (w0) {};

  \foreach \i [remember=\i as \previ (initially 0)]in {1, ..., 4} {
    \node[tile, left=.5em of w\previ, fill=cyan] (w\i) {};
    \draw[thick] (w\previ) -- (w\i);
  }
  \node[tile, below=.5em of w2, fill=cyan] (ws) {};

  \foreach \i [remember=\i as \previ (initially 0)]in {1, ..., 3} {
    \node[tile, right=.5em of e\previ, fill=purple] (e\i) {};
    \draw[thick] (e\previ) -- (e\i);
  }

  \node[tile, above=.5em of e2, fill=purple] (en) {};

  \begin{scope}[thick]
    \draw[thick] (e2) -- (en);
    \draw[thick] (w2) -- (ws);
    \draw (tl4.west) to[out=-150, in=-45, looseness=2] (bl1.east);
    \draw (tr4.east) to[out=-30, in=-135, looseness=2] (br1.west);
    \draw (tl4.east) to[out=0, in=180] (e0.west);
    \draw (tr4.north) to (w0.south);
    \draw (e3.east) to[out=-30, in=-150, looseness=1.25] (e0.west);
    \draw (w4.west) to[out=150, in=30, looseness=1.25] (w1.east);
  \end{scope}
\end{scope}
\end{tikzpicture}
    \caption{A system which is directed in the syncTAM but not in the aTAM. The tiles of the temperature $1$ syncTAM system $\calS$ (the seed is the yellow $s$ tile), and a portion of the infinite shape $V$ assembled by $\calS$ in the syncTAM. Growth begins from the seed tile at the bottom and proceeds with the diagonal ``arms'' growing upward to the left and right, and with red rows growing right from the left arm, and blue rows growing left from the right arm, at every 4th step. The green part is $L$, the orange and brown part is $R$, the blue parts are the $B^+_k \cup T^+_k$, while the red parts are the $B^-_k \cup T^-_k$.}
    \label{fig:impossible-V}
\end{figure}

Any system which is directed in the aTAM is also directed in the syncTAM, since any syncTAM production must be an aTAM production~\cite{syncTAM}. In this directed case, it is actually possible to assemble a scaled version of the aTAM system's terminal assembly. 
Note that being directed in the aTAM is a stronger condition than being directed in the syncTAM: in order to be directed in the aTAM, a TAS needs to be deterministic in the face of asynchronism of tile additions. For example, the system in \cref{fig:impossible-V} is directed in the syncTAM, assembling the production depicted on the figure, while in the aTAM, the meeting point of the two parts of each rung varies based on the particular assembly sequence chosen.

\begin{restatable}{theorem}{directedSim}
\label{thm:directedSim}%
    If $\calS$ is directed in the aTAM, there is a temperature $1$ syncTAM system $\calS'$ which assembles a scaled version of $\alpha \in \termasm{S}$, for some scale factor $s$ and macrotile representation function $R$. 
\end{restatable}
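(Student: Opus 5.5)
We will build $\calS'$ by replacing every tile type $t$ of $\calS=(T,\sigma,\tau)$ with an $s\times s$ macrotile in which each side $d\in\calD$ carries an encoding of the glue $g_{t,d}$. Growth inside each macrotile is driven purely by temperature-1 paths. The only place where cooperation is needed, namely a tile attaching with strength at least $\tau$ coming from several sides, is handled by the synchronous cooperation gadget used for the flagpole in \cref{sec:flagpole}.

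Because $\calS$ is directed in the aTAM, the terminal assembly $\alpha$ is unique. Every producible assembly is a subassembly of $\alpha$, so the tile placed at any location is determined by the location alone. Hence every glue presented toward an empty position $p$ is the glue that $\alpha$ presents there. In the simulation, whatever subset of neighbors arrives first, and at whatever time, the correct tile type is determined. We never need to resolve competition between different tile types at the same location; we only need to detect when enough strength has arrived.

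The first step fixes the macrotile layout. Each macrotile has, on each side, an input probe region and an output region. When a macrotile representing $t$ completes, it grows, along each side $d$ with $\mathrm{str}_t(d)>0$, a temperature-1 ``arm'' into the neighboring block. The arm carries the label of $g_{t,d}$ and reaches a fixed meeting point near the block's centre. All arms are padded, using the dallying detours of \cref{fig:flagpole}, so that every arm takes exactly the same number $k$ of steps to reach the centre, measured from the moment its source macrotile finishes.

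The second step handles single arms whose glue strength is already at least $\tau$. Such an arm completes the tile by itself: it grows a non-cooperative path that fills in the $t'$ macrotile and then emits $t'$'s output arms.

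The third step handles weaker arms. An arm of strength below $\tau$ stops at the centre, turns along a ring of ``waiting'' positions, and then tries to grow one step further in each of a few unit times, one per possible combination of arriving arms. This uses the same synchronous meeting mechanism as the flag base in $\mathcal{F}_\textrm{sync}$. Two or more arms arriving in the same window cooperatively place a blocker or a trigger. The combined set of arrived labels, determined by which paths crashed into which, selects the tile $t'$ whose glue strengths sum to at least $\tau$.

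The main obstacle is timing. Neighboring macrotiles of a single cooperative attachment need not complete at the same syncTAM step, because their predecessors may have been completed at different depths of the assembly. I would first try to make each arm wait at a holding area and repeatedly emit ``heartbeat'' probes with period $P$. Each probe window would be long enough that any two arms already present overlap. Under this scheme the first arm waits as long as needed until a partner appears.

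The difficulty is that a temperature-1 waiting arm cannot loop forever, since each step consumes fresh space. So instead I would bound the wait using directedness. Assign each location $p$ the time $d(p)$ at which it is filled in a canonical (say breadth-first) aTAM assembly sequence. Then make each macrotile's internal delay depend on its tile type, and ultimately on the precomputed local differences of $d$, so that all cooperating neighbors are synchronized. Since $T$ is finite but these differences may be unbounded, the fallback is to encode in each macrotile a bounded-length ``ready'' path. Its length is chosen so that every macrotile finishes exactly $s'$ steps after its latest needed input. This makes each macrotile's completion time equal to $s'$ times its syncTAM depth in $\alpha$. Inputs of a cooperative attachment then all arrive within a window of width $s'$, which the gadget handles.

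Finally, we verify that no spurious growth occurs. Arms that never get a partner, such as strength-1 glues adjacent to positions empty in $\alpha$, terminate inside their target block. Such arms would be unacceptable fuzz for a strict shape, so they must be retracted into the source block. To achieve this, probing paths are grown inside the source block toward its border rather than outward, at the cost of a larger $s$. With that change, $R^*$ of the terminal assembly of $\calS'$ is exactly $\alpha$, scaled by $s$.
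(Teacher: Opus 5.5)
\textbf{The gap is the cooperation mechanism.} Your proposal fails at exactly the point you identify as the main obstacle. Your third-step gadget only combines arms that reach it within a common bounded window, and your final fix does not produce such windows.

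Suppose every macrotile completes at $s'$ times its syncTAM depth. The inputs of a cooperative attachment at $p$ come from neighbors of depth at most $\mathrm{depth}(p)-1$, but only the last of them is guaranteed to be that late. The others may be arbitrarily earlier, and rescaling time by $s'$ multiplies these gaps instead of bounding them. For example, take a zig-zag system whose rows keep getting longer, such as an infinite binary counter. There the south input of a tile is placed a number of steps before its east input that is proportional to the current row length, which is unbounded within a single directed system. You note yourself that these differences are unbounded, and then conclude that the inputs arrive within a window of width $s'$.

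The fallback is also circular. ``Finishes exactly $s'$ steps after its latest needed input'' requires a temperature-1 macrotile to recognise that its last needed input has arrived, which is the very cooperation problem being solved. Finally, the dallying detours in $\mathcal{F}$ (\cref{fig:flagpole}) were added precisely so that the two arms reach the base of the flag at unboundedly different times. The flagpole gadget is therefore not a same-window device, and you are invoking it for the wrong property.

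\textbf{The missing idea.} In the syncTAM nothing has to wait by growing. The input that arrives first deposits a persistent marker tile and stops. The input that arrives later, at any later time, tests for that marker by launching two paths of prescribed lengths. One path is routed through the marker's location and, when unobstructed, reaches a point on the other path's route first. Because growth is synchronous, this race is deterministic. If the marker is present, the first path is blocked and the second reaches the trigger; otherwise the first cuts the second off. Near-simultaneous arrivals form a bounded number of cases that fixed timing resolves, e.g.\ by depositing markers a few steps before probing.

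This is how the flagpole gadget, and the paper's computation gadgets that reuse it, wait out an unbounded delay. It makes your depth-synchronisation machinery unnecessary.

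\textbf{Further points.} Your observation that aTAM-directedness fixes the tile at each location is the right tool for the remaining races between triggers in one macrotile. You still need what the paper adds: every output path blocks all input positions of its macrotile before any output position is reached. Without this, late or duplicate arms, including a second arm of strength at least $\tau$, can restart growth in, or leak through, a completed macrotile.

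Also drop the plan to keep probes inside their source blocks. The blocks of the $N$ and $E$ (or $N$ and $S$) neighbors of a position share no edge, so confined paths could never interact and no cooperation would be possible. The statement, phrased through a representation function $R$, does not ask you to avoid fuzz.
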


The proof of this theorem can be found  in \cref{apdx:directedSimProof} of the appendix. The construction leverages the cooperation gadget of \cref{sec:flagpole}, using the directedness of $\calS$ to argue away timing problems due to the need to simulate the asynchronicity of $\calS$.

\section{L-SyncTAM Results}
\label{sec:LsyncTAM}

First we present \cref{lem:indep}, which will be useful for several of the proofs in this section. \cref{lem:indep} shows that when a certain set of requirements are met for subassemblies produced by systems in the aTAM or L-syncTAM, all such subassemblies grow independently of each other.

Given a shape $S$ let $\texttt{bound}(S)$ refer to the set of unit vectors in $\mathbb{Z}^2$ that form its perimeter, or \emph{boundary}. Given two points $l_1, l_2 \in \mathbb{Z}^2$, let $\texttt{dist}(l_1,l_2)$ be a function that returns the minimal Manhattan distance between points $l_1$ and $l_2$.

\begin{lemma}[Eventually Independent Subassemblies]\label{lem:indep}
    Let $\calT = (T, \sigma, \tau, l)$ be a shape-directed L-syncTAS in the L-syncTAM for any $l \ge 1$ and let $S$ be the shape that $\calT$ strictly self-assembles. If there exists a cut $C$ of $\mathbb{Z}^2$ that separates $S$ into shapes $s_0, s_1, \ldots, s_n$ such that:
    \begin{enumerate}
        \item $\forall 0 \le i,j \le n$ where $i \neq j$, $s_i \cap s_j = \emptyset$, [They are all disjoint]
        
        \item $\bigcup_{i=0}^n s_i = S$, [Their union is $S$]
        
        \item $\dom \sigma \subseteq s_0$, [The seed is completely contained in $s_0$]
        
        \item $\forall 0 < i \le n$, $|s_i| = \infty$, [Other than perhaps $s_0$, all must be infinite]
        
        \item For each $0 < i \le n$, $|C \cap \texttt{bound}(s_i)| < \infty$, [$C$ has a finite intersection with each except maybe $s_0$]

        \item For all $0 < i \neq j \le n$, $\forall l_1 \in s_i$, $\forall l_2 \in s_j$, $\texttt{dist}(l_1,l_2) \ge 3$. [The minimum distance between points in any pair of shapes (not including $s_0$) is at least 3.]\label{item:fuzz}
        

        \item For each $0 < i \le n, C \cap \texttt{bound}(s_i) =\texttt{bound}(s_0) \cap\texttt{bound}(s_i)$, [$C$ encapsulates the entire shared boundary of each with $s_0$]
    \end{enumerate}
    then, the following holds: Let $\alpha \in \termasm{T}$ be any terminal assembly of $\calT$ and let $\vec{\alpha}$ be any arbitrary assembly sequence that produces $\alpha$. Recall that $\dom \alpha = S$. There exists some $f \in \mathbb{N}$ such that for every assembly step of $\vec{\alpha}$ whose index (i.e. location in the sequence) is greater than $f$, $\forall 0 < i \le n$ there is a frontier location inside of $s_i$.
\end{lemma}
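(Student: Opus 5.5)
The plan is to first fix an arbitrary terminal assembly $\alpha$ with $\dom\alpha = S$ and an assembly sequence $\vec{\alpha} = \alpha_0, \alpha_1, \ldots$ producing it, and then to find a single step index $f$ after which every $s_i$ with $i>0$ always carries a frontier location. The argument has two parts. The first is that each $s_i$ is eventually entered, and once entered it can never be "finished". The second is that the finitely many entry times can be combined into one bound $f$.

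\textbf{Step 1 (entry).} Fix $0<i\le n$. By hypothesis 3 the seed lies in $s_0$, so $\dom\sigma\cap s_i=\emptyset$. Since $\dom\alpha = S \supseteq s_i$ and $s_i$ is nonempty (it is infinite by hypothesis 4), some tile is eventually placed in $s_i$. Each tile must attach to tiles already present, and every placed tile stays in $\dom\alpha\subseteq S$. Therefore the growth into $s_i$ must cross $\texttt{bound}(s_i)$ from some other piece. Hypothesis 6 says points of distinct $s_j$, $s_k$ with $j,k>0$ are at distance at least 3, so no two tiles in different positive-index pieces are adjacent. Hypothesis 7 then says all contact between $s_i$ and the rest of $S$ goes through $C\cap\texttt{bound}(s_i) = \texttt{bound}(s_0)\cap\texttt{bound}(s_i)$, and by hypothesis 5 this set is finite. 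Let $e_i$ be the first index at which $\alpha_{e_i}$ contains a tile of $s_i$; this index exists because the limit contains all of $s_i$.

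\textbf{Step 2 (persistence).} Fix $0<i\le n$ and an index $k\ge e_i$. The assembly $\alpha_k$ is a finite assembly: a finite seed plus finitely many steps, each adding at most $l$ tiles. Hence $\dom\alpha_k\cap s_i$ is finite. Because $s_i$ is infinite, $s_i\setminus\dom\alpha_k\ne\emptyset$, and every point of $s_i$ is eventually filled. We want to show that some empty point of $s_i$ is currently attachable.

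The argument runs as follows. Every location of $s_i\setminus\dom\alpha_k$ that is filled later is filled by a tile attaching through a chain of tiles. Take the earliest later step $k'>k$ at which a tile lands in $s_i$; at that step it attaches with strength at least $\tau$ to tiles of $\alpha_{k'-1}$. If that attaching location were not in the frontier at step $k$, the glues it binds to must have appeared between $k$ and $k'$. Those new tiles are outside $s_i$, by the choice of $k'$. So they lie in $s_0$ next to the finite interface $C\cap\texttt{bound}(s_i)$.

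This is the main obstacle: new $s_0$ tiles at the finite interface could create frontier in $s_i$ later rather than now, leaving gaps. The plan is to get around it by using the finiteness of the interface. Only finitely many interface locations exist, and each can be filled at most once, so after some index $g_i\ge e_i$ all $s_0$-side interface locations that will ever be filled have been filled. After $g_i$, no new tile outside $s_i$ can ever make a location of $s_i$ attachable. Therefore, for $k\ge g_i$, the next tile placed in $s_i$ (and one is always placed, since $s_i$ is never completed in finite time) must attach to a location already in $\partial(\alpha_k)\cap s_i$. This is also where the L-syncTAM rule matters: attachment in a step must be to $\alpha_k$ itself and not to simultaneously added tiles. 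That rule ensures the attachable location is present in the frontier at step $k$ in the sense of the definition. It does so uniformly for all $l\ge1$, including the aTAM case $l=1$.

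\textbf{Step 3 (combine).} Set $f=\max_{0<i\le n} g_i$. This is finite because $n$ is finite. For every step index beyond $f$, each $s_i$ with $i>0$ contains a frontier location, which is the claim. Shape-directedness is used only to guarantee $\dom\alpha = S$ for every terminal $\alpha$, so the argument applies to all terminal assemblies and all assembly sequences.
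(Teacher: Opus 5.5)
Your proof is correct and follows essentially the same route as the paper's. The finite interface $C\cap\texttt{bound}(s_i)$ must eventually stop receiving tiles; after that, the infinitely many tiles still to be placed in $s_i$ force a frontier location in $s_i$ at every step, and $f$ is the maximum of the per-piece bounds. Your Step 2 only makes explicit the persistence step that the paper states informally: once the interface is quiescent, the frontier inside $s_i$ at step $k$ equals the frontier just before the next placement in $s_i$, which is nonempty because $\dom(\beta)\setminus\dom(\alpha)\subseteq\partial(\alpha)$.
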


\begin{proof}
    To prove \cref{lem:indep}, we first note that since the seed $\sigma$ is completely contained in $s_0$, no location of $s_i$ initially has a tile, and for all $0 < i \le n$, the first location in $s_i$ that receives a tile must be adjacent to cut $C$ since that cut separates the subassembly in $s_i$ (which we will call $\alpha_i$) from the rest of the assembly. For any (necessarily empty) location $\vec{l} \in \mathbb{Z}^2$ to become a frontier location, at least one neighboring location $\vec{l}'$ must have been a frontier location that became filled with a tile.
    If $\vec{l}$ first becomes a frontier location during some assembly step $s$, then some neighboring location must have received a tile during step $s-1$.
    This means that any location in $s_i$ that receives a tile must have been included in a path of successive, adjacent frontier locations that originated adjacent to the seed and at some point crossed $C$.

    By the conditions of the lemma, the cut $C$ only overlaps with a finite portion of the perimeter of $s_i$.
    Also, since each such $s_i$ has infinite size that is filled (in the limit) by tiles, there must be an infinite number of assembly steps during which there exists a frontier location within $s_i$ during $\vec{\alpha}$.
    For any assembly step $s$ during which there is not a frontier location in $s_i$, since there must exist a later step where a location in $s_i$ receives a tile, there must exist a later step $s' > s$ where there exists a frontier location adjacent to $C$, as this is the only boundary into which a path of frontier locations can enter $s_i$. Since the locations adjacent to $C$ are finite, this can only happen a finite number of times. Therefore, there exists some constant $f_i$ such that after the $f_i$th assembly step no more tiles are placed adjacent to $C$, and since the number of tiles placed within $s_i$ is infinite, for the infinite number of steps greater than $f_i$, there must be frontier locations within $s_i$.

    Finally, since the argument holds for each $s_i$, each has its own such constant $f_i$ and the $f$ of the lemma is simply the maximum of them. \cref{corr:indep_sim} directly follows from the fact that each shape $s_i$ for $0<i \le n$ is separated by a distance greater than that of fuzz.

\end{proof}

\begin{corollary}\label{corr:indep_sim}
Let $\calS = (T, \sigma, \tau, l)$ be an $L$-syncTAS in the $L$-syncTAM for any $l \ge 1$. If $\calS$ synchronously simulates some $\calT$ such that \cref{lem:indep} holds for $\calT$, then  \cref{lem:indep} holds for $\calS$.
\end{corollary}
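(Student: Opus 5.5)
The plan is to pull the decomposition of $\calT$'s shape back to $\calS$ through the representation function $R$, check the hypotheses of \cref{lem:indep} for $\calS$, and then rerun the proof of \cref{lem:indep} inside $\calS$ (which is itself an L-syncTAS).

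\textbf{Step 1: shape-directedness of $\calS$.} By condition 3 of equivalent productions, $\{R^*(\alpha')\mid \alpha'\in\termasm{\calS}\} = \termasm{\calT}$, and every such $R^*(\alpha')$ has domain $S$. By condition 4, each terminal $\alpha'$ maps cleanly, so its domain is $S^m$ plus fuzz. Fuzz is confined to blocks orthogonally adjacent to blocks of $S$. I will argue that $\calS$ is shape-directed up to the fuzz attached to the finitely many boundary blocks considered below. Where the lemma really needs shape-directedness, namely to know that each region is filled infinitely often, it suffices that every terminal assembly of $\calS$ contains $S^m$.

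\textbf{Step 2: the pulled-back cut.} Let $C$, $s_0,\dots,s_n$ be the cut and shapes for $\calT$. Define $s_i' = s_i^m \cup F_i$, where $F_i$ is the fuzz in blocks adjacent to $s_i$. Take $C'$ to be the boundary between these regions, with all leftover fuzz assigned to $s_0'$. Then:
\begin{itemize}
\item Disjointness, union, and the seed lying in $s_0'$ hold because $R^*(\sigma_\calS)=\sigma_\calT$, so the seed blocks lie in $s_0^m$ plus adjacent fuzz.
\item Each $s_i'$ with $i>0$ is infinite.
\item Finiteness of $C'\cap\texttt{bound}(s_i')$ holds because each boundary edge of $C$ becomes $O(m)$ edges, plus $O(m)$ fuzz edges per adjacent block, and only finitely many blocks touch $C$.
\item The shared-boundary condition transfers blockwise.
\item Condition \ref{item:fuzz} is the key point: since $\texttt{dist}\ge 3$ between $s_i$ and $s_j$, their fuzz neighbourhoods (radius one block) cannot touch, so $s_i'$ and $s_j'$ are separated by at least $m$ cells. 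This is exactly the remark at the end of the proof of \cref{lem:indep}.
\end{itemize}

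\textbf{Step 3: rerun the argument.} With the hypotheses verified, the proof of \cref{lem:indep} goes through verbatim for $\calS$. Frontier paths can enter $s_i'$ only across the finitely many cells adjacent to $C'$. The region $s_i'$ receives infinitely many tiles. Hence from some index $f_i'$ on, there is always a frontier location in $s_i'$. Take $f=\max_i f_i'$.

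\textbf{Main obstacle.} The delicate step is Step 1/2's treatment of fuzz and non-shape-directedness of $\calS$. Different terminal assemblies of $\calS$ may carry different fuzz, so $S'$ is not literally unique. I would handle this by fixing the terminal assembly $\alpha'$ and assembly sequence first, and then building the cut for that $\alpha'$. The proof of \cref{lem:indep} only uses a single terminal assembly and a single sequence, so it never actually needs a unique shape.
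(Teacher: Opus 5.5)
Your route (pull $C$ back through $R$, check the hypotheses of \cref{lem:indep} for $\mathcal{S}$, rerun its proof) is the same as the paper's one-line argument. However, the step you call key is false as stated, and the cut conditions it is meant to secure do not follow.

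Condition~6 ($\texttt{dist}\ge 3$) only makes the fuzz zones of $s_i$ and $s_j$ disjoint, which is all the paper uses it for. It does not keep them apart. If $(0,y)\in s_i$ and $(3,y)\in s_j$, fuzz is permitted in blocks $(1,y)$ and $(2,y)$. These blocks are orthogonally adjacent, so $s_i'$ and $s_j'$ can be at cell distance $1$. Your ``at least $m$ cells'' needs $\texttt{dist}\ge 4$.

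Suppose $s_i$ and $s_j$ run side by side at distance $3$ forever. Then fuzz from the two sides may abut, and even bind, at every height. Any $C'$ separating $s_i'$ from $s_j'$ then contains infinitely many edges of $\texttt{bound}(s_i')$ that come from no edge of $C$, so conditions~5 and~7 fail. You also lose the fact you rerun in Step~3, that frontier paths can enter $s_i'$ only through the finitely many cells next to $C'$: growth could re-enter $s_i'$ through fuzz arbitrarily late.

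The same happens with $s_0$, about which the hypotheses say nothing beyond a finite shared boundary. If $s_0$ runs alongside $s_i$ at block distance $2$, a single fuzz block touches both at every height. Your count ``only finitely many blocks touch $C$'' misses exactly these new contacts. The paper's own justification, which invokes condition~6 only for disjointness, does not address them either.

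To make the transfer work you need one of two things:
\begin{itemize}
\item an extra separation assumption, e.g.\ outside a finite region every $s_i$ with $i>0$ is at block distance at least $4$ from every other sub-shape, including $s_0$, with the finitely many exceptional blocks absorbed into $s_0'$. This does hold where the corollary is used, since the columns are spaced $4$ apart and move away from the horizontal row;
\item or a dynamic argument from the simulation conditions that fuzz never carries growth from one region into another.
\end{itemize}

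Two smaller repairs are also needed. First, a terminal $\alpha'$ need not contain $S^m$, because macrotiles need not be solid (compare the construction for \cref{thm:directedSim}). What you actually need, and what does hold, is that every block of $S$ is non-empty in $\alpha'$, so each $s_i'$ still receives infinitely many tiles. Second, the fuzz of $\sigma_{\mathcal{S}}$ may lie in a block of some $s_i$ adjacent to the seed. So ``the seed lies in $s_0'$'' requires moving finitely many blocks into $s_0'$, which is harmless.
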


\cref{corr:indep_sim} follows immediately from \cref{lem:indep} since the scaled shape assembled by $\calS$ must have a similarly scaled and translated cut $C'$ for which all of the same properties hold (noting that the condition that the distance required between shapes by \cref{item:fuzz} ensures sub-shapes are still disjoint even accounting for fuzz in the simulation), and therefore so does the proof.

\subsection{The class of L-syncTAM systems is not IU for any $l > 1$}

For any fixed $l \ge 2$, the class of L-syncTAM systems with synchronicity $l$ is not intrinsically universal (for synchronous simulation). For any fixed candidate simulator, there are systems that cannot be simulated by that simulator.

\begin{lemma}\label{lem:many-unscalables}
    For all $l \ge 2$, there is an infinite family $(\calS^l_n)_{n \in \mathbb{N}}$ of L-syncTAM systems with synchronicity $l$ such that for each $n$, $\calS^l_n$ has $n + 5$ tiles and cannot be synchronously simulated in the L-syncTAM at synchronicity $l$ with any temporal scale factor $s$ and any spatial scale factor greater than $1$.
\end{lemma}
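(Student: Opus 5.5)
The plan is to construct, for each $l \ge 2$ and $n$, a system $\calS^l_n$ whose dynamics depend on the exact number of simultaneous frontier locations. We then argue that any simulator at spatial scale $m>1$ necessarily has frontiers of a different size from the simulated system, and so cannot reproduce that dynamics.

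\textbf{Construction.} Take a seed from which a frontier of exactly $l+1$ locations opens at the same step. Realize this with $l+1$ (or a tunable number of) non-cooperative arms, whose tile types are counted by the $n+5$ budget. In one step only $l$ of these locations can be filled, and the nondeterministic choice of which location is delayed produces a distinguishable outcome. A natural choice is that the delayed arm loses a race against a competing arm and is blocked, so the terminal shape encodes which arm lagged. The parameter $n$ enters as the length of a path, say the distance before the blocking point, built from $n$ distinct hard-coded tile types. This makes the systems pairwise different, gives exactly $n+5$ tiles, and gives infinitely many members.

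\textbf{Impossibility.} Suppose $\calS'$ at synchronicity $l$ synchronously simulates $\calS^l_n$ with spatial scale $m \ge 2$ and some temporal factor $s$. Use condition 2 of \cref{def:t-sync-follows-s}: each step of $\calS^l_n$ must be matched by $R^*(\beta_{i'})$ with $i'$ linear in $i$. Each macrotile requires at least $m \ge 2$ tiles to be resolved (beyond fuzz). So during the phase where the $l+1$ arms grow, $\calS'$ must place $\Theta(m)$ tiles per simulated step in each of $l+1$ regions. It can place only $l$ tiles per step in total. Hence either the simulation falls behind linearly, which violates the temporal bound for large $i$ given fixed $s$, or the representation leaves arms unfinished in a way that contradicts \cref{def-equiv-prod}.

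Alternatively, the plan is to exploit the fact that the simulator's frontier contains many locations in several separated arms. Then the choice of which $l$ locations fill becomes independent across arms, in the spirit of \cref{lem:indep} and \cref{corr:indep_sim}. This independence lets $\calS'$ produce an arm-lag pattern that $\calS^l_n$ cannot produce, for example two arms lagging simultaneously, and so it yields a terminal assembly not in $\termasm{\calS^l_n}$. The arm regions are kept at distance at least $3$ so that this applies.

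\textbf{Main obstacle.} The hard part is making the timing argument robust. The simulator may legitimately use temporal slack $[i(s-1), i(s+1)]$, and the relation $\approx_l$ permits partially completed steps. So it must be shown that a bounded $s$ cannot absorb the per-step tile deficit, and this has to hold over an unboundedly long growth phase. I would first try to design the arms to be infinite, so that the deficit accumulates without bound and no fixed $s$ works. The finite race gadget that distinguishes the $l$-limited dynamics would then be used to exclude spatial scale exactly $1$ with different tiles only if needed, since the statement only requires scale greater than $1$.
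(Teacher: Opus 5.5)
\textbf{There is a genuine gap: your main impossibility argument counts tiles, and the temporal scale factor $s$ absorbs any constant-factor deficit.}

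Your argument is a throughput count, and it cannot succeed because the simulator may choose $s$. Each step of $\calS^l_n$ adds at most $l$ tiles. Each resolved macrotile, with its fuzz, costs the simulator a number of tiles bounded in terms of $m$ alone. Meanwhile \cref{def:t-sync-follows-s} gives the simulator roughly $s$ of its own steps, i.e.\ up to $ls$ placements, per simulated step. Your comparison of ``$\Theta(m)$ tiles per simulated step'' with ``only $l$ tiles per step'' mixes simulated steps with simulator steps. Once $s$ exceeds a constant depending on $m$ there is no deficit at all. Making the arms infinite does not help, since a zero per-step deficit does not accumulate.

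Your alternative route is not yet an argument either:
\begin{itemize}
\item You do not exhibit a lag pattern that is actually impossible for $\calS^l_n$. Over several steps, different arms can be the one left out in the simulated system as well.
\item You do not show that every simulator must realize such a pattern.
\item Arms that are blocked by a race are finite, so \cref{lem:indep} does not apply to them.
\item Nothing in that route uses $m>1$. Applied to $\calS^l_n$ simulating itself at scale $1$, it would ``prove'' the same contradiction, so it cannot be sound as stated.
\end{itemize}

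\textbf{The missing idea is adversarial scheduling.} You need to choose \emph{which} $l$ frontier locations are filled, and use $m>1$ to create placements that make no visible progress. The paper proceeds as follows.
\begin{enumerate}
\item The system is simply directed: a periodic row spawning infinite columns of $n$ cycling tile types. No race gadget is needed.
\item The simulator is grown until \cref{corr:indep_sim} guarantees that each of $(3s+1)l+(l-1)$ column regions has a frontier location at every later step.
\item Using $l-1$ buffer columns to absorb $l-1$ placements per step, one advances a single column at a time until it exposes a \emph{nonresolving} frontier location. This is a location whose filling makes no new macrotile map to a tile of $\calS^l_n$. Such a placement must occur because, with $m>1$, crossing a macrotile takes at least two placements while the macrotile resolves at only one of them.
\item With $(3s+1)l$ columns primed this way, one performs $3s+1$ consecutive steps, each filling $l$ nonresolving locations.
\end{enumerate}
During those steps $R^*$ of the simulator's assembly stays frozen, while $\calS^l_n$ must add $l$ tiles per step, contradicting synchronous following. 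The stall comes from scheduling, not from a shortage of tiles. The number of primed columns is chosen after $s$, which is why no choice of $s$ escapes it.
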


\begin{proof}
    Fix $n$. Consider the system of \cref{fig:no-IU-sim}, with the green seed tile fixed at $(0,0)$, and $4$ tile types (yellow, gold, aqua, and blue) that grow a row with a periodic sequence of those $4$ tile types infinitely far to the right. From each blue tile, a periodic sequence of $n$ unique tile types attach (all shown as pink) infinitely often, to create infinite vertical columns. This system is directed, and strictly-self assembles the shape shown in \cref{fig:no-IU-sim}. 
    Call this system $\mathcal{S}^l_n = (T,\sigma,1,l)$ and note that $|T| = n + 5$.

    We will prove \cref{lem:many-unscalables} by contradiction, so assume that L-syncTAM system $\mathcal{S}'$ synchronously simulates $\mathcal{S}^l_n$ at scale factor $c > 1$ and some temporal scale factor $s$. We now inspect a particular assembly sequence in $\mathcal{S}^l_n$ and show that $\mathcal{S}'$ does not synchronously follow $\mathcal{S}^l_n$.

    Let $\vec{\alpha}$ be an assembly sequence in $\mathcal{S}^l_n$ that proceeds as follows. The horizontal row grows to a length $> 4((3s + 1)l + (l - 1)) + 2$ by starting from the seed and attaching $(3s+1)l + (l - 1)$ copies of the 4-tile repeating segment. From each blue tile, a vertical column begins to grow. The columns can grow in any order, resulting in $(3s+1)l + (l - 1)$ vertical columns which will each ultimately be infinite in height. Note that the horizontal row will continue to grow and initiate further upward columns, but we will not need to include them in our analysis.

    There must be an assembly sequence $\vec{\alpha}'$ in $\mathcal{S}'$ that models $\vec{\alpha}$, and let $\vec{\alpha}'$ proceed until it has not only grown sufficiently far horizontally (i.e. $> (c \cdot 4((3s + 1)l + (l - 1)) + 2)$), but also continued growth so that all subassemblies mapping to the $(3s+1)l + (l - 1)$ vertical arms are of sufficient height that \cref{corr:indep_sim} of \cref{lem:indep} can be applied, with cut $C$ separating the $(3s+1)l + (l - 1)$ vertical columns into shapes $s_1, s_2, \cdots, s_{(3s+1)l + (l - 1)}$ of the lemma, so that each vertical column is guaranteed to have a frontier location in all further assembly steps.
    The portion of the horizontal bar containing the seed will be $s_0$, and the rest of the assembly to the east of the cut will be $s_{(3s+1)l+l}$. We extend assembly sequence $\vec{\alpha}'$ as follows.
    
    We will refer to each vertical column as $V_i$ for $0 \le i < (3s+1)l + (l-1)$, and to the rightmost $l-1$ vertical columns of that group as \emph{buffer} columns. Since $\mathcal{S}'$ has synchronicity $l$, there must be $l$ tiles placed during every assembly step. Therefore, starting with $i=0$, we perform assembly steps such that each places one tile in each of the buffer columns (using $l-1$ of the tile placements), and one tile in $V_i$. We repeat this until there is a frontier location, which we will refer to as a \emph{nonresolving frontier location}, in $V_i$ so that when a tile is placed there, no new macrotile resolves to a tile of $\mathcal{S}^l_n$. That is, assuming the assembly $\alpha'$ in $\mathcal{S}'$ maps under $R$ to some assembly $\alpha$ in $\mathcal{S}^l_n$ (i.e. $R(\alpha') = \alpha$), when a tile is placed in that nonresolving frontier location to create assembly $\alpha''$ in $\mathcal{S}'$, it is still the case that $R(\alpha'') = \alpha$. This must be possible because the scale factor of the simulation is $c > 1$ and therefore macrotiles must have dimensions $> 1$, meaning that any path from the south to the north of a macrotile must be of length $> 1$ and a macrotile only resolves during a single assembly step. Thus, there must be at least one step during which there is a tile placed that does not cause the macrotile to resolve, and that tile's location is a nonresolving frontier location immediately prior to that. 

    Once it occurs that a nonresolving frontier location exists in $V_i$, we increment to $i+1$ and repeat until nonresolving frontier locations exist in each of the $(3s+1)l$ leftmost vertical columns. At this point, let the current assembly be $\beta'$, and let $\beta = R(\beta')$ be the assembly in $\mathcal{S}^l_n$ to which it maps. We now perform $3s+1$ assembly steps as follows. For $0 \le p < 3s+1$, each step consists of placing a single tile in each $V_q$, for $lp \le q < (l+1)p$, into the nonresolving frontier locations. In this way, each column in each set of $l$ columns starting from the left receives a single tile. However, since all locations were nonresolving frontier locations, that means that the assembly sequence proceeded $3s+1$ steps without any change in the assembly to which it maps in $\mathcal{S}^l_n$. This breaks \cref{def:t-sync-follows-s}, i.e. the simulated system does not synchronously follow the simulator because it must continue to add $l$ tiles per assembly step, but even allowing for the maximum amount of time within the temporal scaling of $s$, which is a window of size $3s$, $\mathcal{S}'$ resolves to no additional tiles. This is a contradiction that $\mathcal{S}'$ synchronously simulates $\mathcal{S}^l_n$ and thus \cref{lem:many-unscalables} is proven.
\end{proof}

\begin{theorem}\label{thm:k-sync-not-IU}
    For all $l \ge 2$, the class of L-syncTAM systems with synchronicity $l$ is not intrinsically universal.
\end{theorem}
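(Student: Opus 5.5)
The plan is to derive \cref{thm:k-sync-not-IU} directly from \cref{lem:many-unscalables} by a counting argument on tile-set sizes. Intrinsic universality for the class of L-syncTAM systems with synchronicity $l$ would require a single tile set $U$ with the following property: for every system $\calT$ in the class, some seed $\sigma_\calT$ over $U$ yields an L-syncTAS $\calU_\calT = (U, \sigma_\calT, \tau_U, l)$. Each such system must synchronously simulate $\calT$, in the sense of \cref{def:s-simulates-t}, under some representation function $R_\calT$, spatial scale factor $m_\calT$, and temporal scale factor $s_\calT$. I assume such a $U$ exists and derive a contradiction.

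\textbf{Step 1: the simulator must use spatial scale factor $1$ on the hard systems.} By \cref{lem:many-unscalables}, for every $n$ the system $\calS^l_n$ cannot be synchronously simulated at synchronicity $l$ with any spatial scale factor greater than $1$. So the simulator $\calU_{\calS^l_n}$ must use $m = 1$.

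\textbf{Step 2: scale factor $1$ forces at least as many tile types as the simulated system.} With $m = 1$, each $1$-block macrotile is a single tile of $U$ or empty. The representation $R$ is therefore just a partial map $U \dashrightarrow T$. Condition 2 of \cref{def-equiv-prod} requires every producible assembly of $\calS^l_n$ to be $R^*$ of some producible assembly of the simulator. Every one of the $n+5$ tile types of $\calS^l_n$ actually appears in its infinite terminal assembly: the row uses the seed and the four repeating types, and each column cycles through all $n$ pink types. Hence $R$ must be surjective onto those $n+5$ types, which forces $|U| \ge n+5$.

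\textbf{Step 3: conclude.} Choose $n > |U|$. Then $\calS^l_n$ is a system with synchronicity $l$ that the fixed tile set $U$ cannot simulate, contradicting universality.

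The main point needing care is Step 2. If the definition of intrinsic universality allows the simulator's seed to carry information, the argument is unaffected: at $m=1$ every tile of the simulated terminal assembly corresponds to a single tile of the simulator's assembly. Tiles appearing infinitely often with distinct types cannot be encoded by a finite seed, so surjectivity of $R$ onto the tile types used in the terminal assembly still holds. I would also check a second point: \cref{lem:many-unscalables} already excludes every temporal scale factor $s$, so no freedom in $s$ remains.
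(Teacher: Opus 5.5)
Your proof is correct and follows the same route as the paper: use \cref{lem:many-unscalables} to force spatial scale factor $1$, bound by counting what a fixed tile set can represent at that scale, and take $n$ larger than the simulator's tile set. Your Step 2 is a bit more careful than the paper's one-line count, because you use condition 2 of \cref{def-equiv-prod} to show that all $n+5$ tile types must lie in the image of $R$, whereas the paper only bounds the size of the domain of $R$.
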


\begin{figure}
    \centering
    \includegraphics[width=0.3\linewidth]{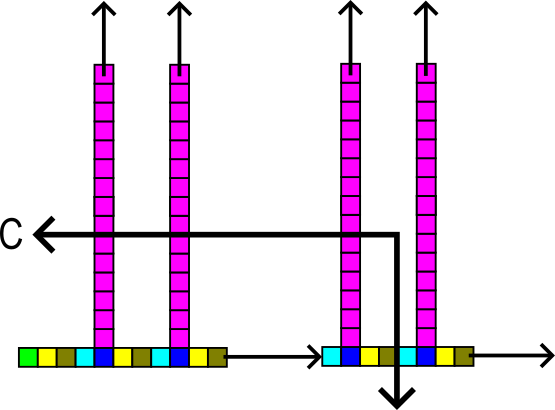}
    \caption{A schematic depiction of a TAS in the L-syncTAM with synchronicity $l$ that cannot be simulated by a system that is IU for the L-syncTAM at synchronicity $l$. The seed is the green tile on the bottom left. An infinite number of repetitions of $4$ tiles attach to its right. From each blue tile, a column grows upward infinitely, with repetitions of a series of $n$ unique tile types attaching. Cut $C$ of \cref{lem:indep} is also shown.}
    \label{fig:no-IU-sim}
\end{figure}

\begin{proof}
    Let $l \in \mathbb{N}$. Any fixed L-syncTAM system at synchronicity $l$ with $t$ tile types can only simulate systems with at most $(t+1)^s$ tiles at scale $s$, as that is the cardinality of the domain of the representation function. But by \cref{lem:many-unscalables}, there are infinitely many L-syncTAM systems at synchronicity $l$ which can only be simulated at scale 1. Hence, regardless of the size of the tile set of a fixed system simulator, since it must be finite, no fixed system can simulate all of them and the class of L-syncTAM systems with synchronicity $l$ is not intrinsically universal.
\end{proof}

\subsection{Shapes that cannot self-assemble with less synchronization}

We now present a result that shows that for any synchronicity value $l > 1$, there exists a shape which cannot be made by any system with lower synchronicity. This is a much stronger statement than simply saying there is a system with synchronicity $l$ that cannot be simulated by any with lower synchronicity. This instead shows that, even without the constraints of following a system's dynamics, the shape simply cannot be strictly produced.

For any $l > 1$, the shape pictured on \cref{fig:rays-and-cone} strictly self-assembles in an L-syncTAM system with synchronicity $l$, but not in any system with lower synchronicity.

\begin{restatable}{theorem}{noLessSync}

\label{thm:imposs-k-1-sync-shape}
    For each $l > 1$, there exists a shape $S_l$ such that:
    \begin{enumerate}
        \item There exists L-syncTAM system $\calL_l= (T, \sigma, 2, l)$ such that $\calL_l$ strictly self-assembles $S_l$.

        \item For all $l' < l$, there exists no L-syncTAM system $\calL' = (T', \sigma', \tau, l')$ and scale factor $c \in \mathbb{Z}^+$ such that $\calL'$ strictly self-assembles $c \cdot S_l$.
    \end{enumerate}

\end{restatable}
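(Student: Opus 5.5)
We will design $S_l$ around a ``rays and cone'' gadget (as in \cref{fig:rays-and-cone}). From a finite seed region, $l$ infinite rays grow in parallel, pairwise at distance at least $3$. The rays are chosen so that \cref{lem:indep} applies to them, with the seed and base as $s_0$ and each ray as some $s_i$. We add one further infinite component, a cone (or diagonal front) growing from the base, whose exact shape is determined by the number of tiles the assembly can place per step.

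\textbf{Part 1.} We build $\calL_l$ at temperature $2$ with synchronicity exactly $l$. Once the base is finished, the frontier consists of exactly $l$ locations, one at the tip of each ray, with no other frontier locations. Each step therefore fills all of them and the rays advance in lockstep: after $t$ steps, every ray has length exactly $t$.

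A cooperative (temperature $2$) structure grows between rays $i$ and $i+1$, using glues from both rays. Row $k$ of this structure can attach only when both neighbouring rays have reached height $k$. We arrange that it must complete a fixed geometric feature, for example a rung crossing at a precise position, or a cone boundary of slope determined by the common speed, before the rays advance further. Cooperative attachments would enlarge the frontier beyond $l$, so we instead interleave them. The rays pause on ``waiting'' tiles while the rung fills, so that the frontier size stays at most $l$ at all times. Growth is then fully determined, and we check that $\calL_l$ is directed, so it strictly self-assembles $S_l$.

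\textbf{Part 2.} We argue by contradiction. Suppose $\calL'$ with $l'<l$ strictly self-assembles $c\cdot S_l$. Apply \cref{lem:indep} to the $l$ scaled rays, which at scale $c$ still satisfy distance at least $3$. Beyond some step $f$, each ray always holds a frontier location. There are $l$ rays but only $l'<l$ placements per step, so we may schedule an assembly sequence that starves one chosen ray. Following the proof of \cref{lem:many-unscalables}, at each step we place tiles in the other rays, skipping the chosen ray for arbitrarily many steps. This is legal in case 2 of the step definition, since the frontier then has size at least $l>l'$.

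This starvation desynchronizes the rays by an unbounded amount. The key claim is that any correct strict assembly of $c\cdot S_l$ must ``know'' the relative heights of the rays: the cone or rung positions encode exactly the invariant that the rays are equal. From the desynchronized configuration we then use a pumping/cut-and-paste argument to extract a terminal assembly whose domain differs from $c\cdot S_l$. Concretely, some ray or cone segment either grows where $S_l$ is empty, or stops leaving a hole.

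\textbf{Main obstacle.} The difficulty is that $\calL'$ may use cooperation and arbitrary internal signalling, so that a lagging ray might simply block the others until it catches up. We must design $S_l$ so that such blocking is geometrically impossible. The rays of $S_l$ must be separated by empty space of width growing with height, or contact one another only through the $l$-fold structure itself. Then, at any scale $c$, no bounded-width bridge can transmit ``wait'' information between rays without adding tiles outside $c\cdot S_l$.

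Our first attempt is to make the rays diverge, like the rays of a cone. They then communicate only through the finite base, and \cref{lem:indep} shows that after step $f$ they are truly independent. An independent ray cannot enforce a height relation, so the shape feature depending on equal heights, the cone front, cannot be produced correctly under the starving schedule. This relies on the fact that in $\calL_l$ that front emerges purely from the timing of the $l$ synchronous tips.
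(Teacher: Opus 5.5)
\textbf{There is a genuine gap.} Your construction never identifies a shape feature that actually requires synchronicity, and your adversary starves the wrong component.

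The design pulls in two incompatible directions. For Part 2 you need the rays to be separated pieces $s_i$ in the sense of \cref{lem:indep}: pairwise distance at least $3$ and finite contact with the cut. But then the rays do not interact after step $f$. The final domain is just their union, so nothing in $c\cdot S_l$ can record their relative heights. Starving one ray only delays it and leaves the terminal shape unchanged, so your ``key claim'' is false for separated rays.

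If instead the rays are joined by rungs, \cref{lem:indep} no longer applies to them. Worse, a rung whose row $k$ waits for both neighbours to reach height $k$ is ordinary temperature-$2$ cooperation, which works asynchronously. Since $\calL'$ may use any $\tau$, such a feature could be built at $l'=1$ and witnesses nothing.

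Part 1 also miscounts. Having $l$ ray tips plus any actively growing cone or rung gives a frontier larger than $l$. Then case 2 of the step definition applies even in $\calL_l$, and lockstep is lost. A tip can ``pause'' only by having no frontier location, which again means cooperating with a neighbour and so being connected to it.

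\textbf{The missing idea} is that the timing-dependent feature must live inside a single connected component with two simultaneous growth sites. The independent arms should serve only as sinks that absorb placements.

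In the paper, the two halves of the cone grow rung segments toward each other. The bumps sit on the left above the meeting point and on the right below it, so the shape pins the meeting point to the exact midpoint. Because the rungs become arbitrarily tall, this midpoint cannot be found by counting through a bounded-width segment.

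\begin{itemize}
\item There are exactly $l-2$ arms. In $\calL_l$ the frontier is therefore $(l-2)+2=l$, or $l-1$, and the whole frontier is filled each step, keeping the two halves in lockstep.
\item In $\calL'$ with $l'\le l-1$, \cref{lem:indep} guarantees that each arm always has a frontier location. The adversary places $l'-1\le l-2$ tiles in distinct arms and exactly one tile in the cone per step.
\item The cone then grows exactly as in the aTAM. The Window Movie Lemma lets you pump a rung segment longer than $h$ past the midpoint, producing an assembly outside $c\cdot S_l$.
\end{itemize}

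So you should starve the cone, not a ray, and use $l-2$ independent arms rather than $l$.
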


The proof of \cref{thm:imposs-k-1-sync-shape} relies on observing that any simulator for this system must have $l - 2$ frontier positions for the infinite arms growing upwards. On the other hand, by a pumping argument, it cannot encode the positions within the cone in order to be able to find the middle without keeping pairs of tile additions in sync. Hence synchronicity $\ge l$ is needed in order to assemble that shape. The full proof is in the appendix, in \cref{apdx:imposs-k-1-sync-shape}.

\begin{figure}
    \centering
    \includegraphics[width=0.8\linewidth]{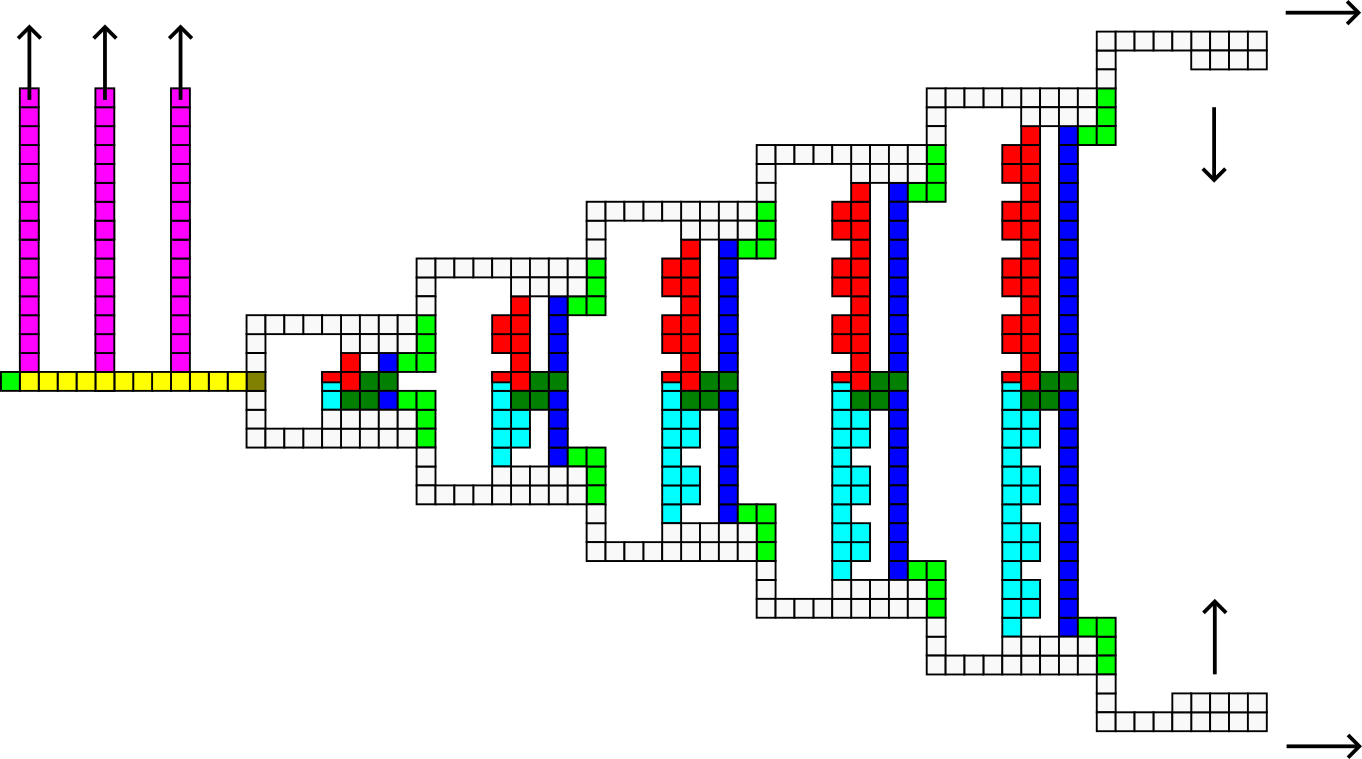}
    \caption{Initial portion of shape $S_5$ (i.e. $S_l$ for $l = 5$) from the proof of \cref{thm:imposs-k-1-sync-shape}. The seed tile (green) initiates growth of the yellow line which initiates the growth of $l - 2$ \texttt{arms} (pink) that each grow infinitely upward. The gold tile at the end of the yellow row initiates growth of the \texttt{cone} (white) that grows infinitely to the right, with \texttt{rungs} (red, aqua, green, blue, and lime green) that periodically grow between the sides of the \texttt{cone}. A depiction of the growth sequence of the \texttt{cone} can be seen in \cref{fig:cone-growth}. }
    \label{fig:rays-and-cone}
\end{figure}

\subsection{Systems that cannot be simulated by systems with greater synchronization}

The following theorem states that for any $l \ge 1$, there is a system in the L-syncTAM with synchronicity $l$ that cannot be simulated by any L-syncTAM system whose synchronicity $l'$ is greater than $l$, i.e. the additional synchronization causes any system to fail to correctly simulate.

\begin{theorem}\label{thm:no-up-sim}
    For each $l \ge 1$, there exists an L-syncTAM system $\calS_l$ with synchronicity $l$ such that, for any $l' > l$, there exists no $L$-syncTAM system $\calS'$ with synchronicity $l'$ which simulates $\calS_l$.
\end{theorem}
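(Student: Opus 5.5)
Fix $l$. The plan is to build $\calS_l$ around a ``race'' whose outcome in $\calS_l$ depends on the fact that at most $l$ tiles attach per step. A simulator with synchronicity $l'>l$ is then forced to place more than $l$ tiles per step in the regime where $\calS_l$ places only $l$, and this extra synchronization changes which terminal assemblies are reachable.

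\textbf{Construction.} Let $\calS_l$ be a temperature-1 system whose seed grows a horizontal row. From this row, $l$ vertical columns of unique periodic tile types grow upward forever, as in the system of \cref{lem:many-unscalables}. The row also starts one additional ``probe'' path, which races against a separate ``blocker'' path for a common location $p$.

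Once the $l$ columns all exist, the frontier has size at least $l+1$: one location per column plus the probe/blocker frontier. So in $\calS_l$ the scheduler may starve either racer for any finite number of steps. Hence both outcomes occur: the probe wins and a flag grows, or the blocker wins and no flag grows. In particular, $\termasm{\calS_l}$ contains a terminal assembly $\alpha_{\text{flag}}$ in which the probe wins.

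\textbf{Case $l=1$.} Here $\calS_l$ is an aTAM system. The obstruction should come from making the race unwinnable for the probe once more than one tile attaches per step. The two racers are paths of lengths $k$ and $k+1$ that start simultaneously and pass near each other. In the aTAM either can win, but any system placing at least two tiles per step with no other frontier must advance both. One has to check this carefully, because a simulator may introduce auxiliary frontier locations.

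\textbf{Impossibility argument.} Suppose $\calS'$ with synchronicity $l'>l$ simulates $\calS_l$ at spatial scale $c$ and temporal scale $s$. The columns are pairwise at distance at least $3$ and each is infinite, so \cref{lem:indep} applies through \cref{corr:indep_sim}. After some step $f$, every macro-column of $\calS'$ always has a frontier location, and so do the regions simulating the probe and the blocker.

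Choose the number of columns together with the column and path gadgets so that the simulator's total frontier is forced to be at most $l'$. Then every frontier location is filled at every step, and $\calS'$ behaves fully synchronously. With the racers started symmetrically, it becomes deterministic which macro-path reaches $R^{-1}(p)$ first. That yields the contradiction: either $\alpha_{\text{flag}}$ or its counterpart is not of the form $R^*(\alpha')$ for a terminal $\alpha'$, violating condition 3 of \cref{def-equiv-prod}.

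\textbf{Main obstacle.} A simulator can inflate its own frontier, for example by growing extra fuzz or multiple parallel sub-paths inside each macrotile. It could then keep the frontier above $l'$ and regain scheduling freedom. The crux is to bound the frontier of $\calS'$ inside each region.

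My first attempt would use the synchronous-following condition \cref{def:t-sync-follows-s}. A frontier location that does not resolve a macrotile still counts as a step. A region can hold only boundedly many non-resolving frontier locations, since fuzz must map cleanly and each macrotile is finite, so stalling one racer for more than about $3s$ steps forces the other racer's macrotiles to resolve. I would then combine this with the counting trick from \cref{lem:many-unscalables}. Since $l'>l$, at least one tile per step must go outside the $l$ column regions, into the race gadget. Choosing the racer path lengths to differ by more than the temporal slack $s$ then fixes the winner, which contradicts the existence of both terminal outcomes.
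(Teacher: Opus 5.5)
The step your argument rests on does not go through. You cannot choose the gadgets of $\mathcal{S}_l$ so that the frontier of the simulator is at most $l'$. $\mathcal{S}'$ is quantified after $\mathcal{S}_l$, and it may give each macro-column or macro-path as many frontier locations as it likes (parallel sub-paths inside macrotiles, fuzz). Your fallback, that some tile must go into the race gadget at every step, fails for the same reason: several of the $l'$ placements can land in one column region, and the race is a one-time finite event anyway.

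Even if $\mathcal{S}'$ filled its entire frontier at every step, it can still be nondeterministic in \emph{which tile type} attaches, e.g.\ via detour gadgets like the green tiles of $\mathcal{F}$ in \cref{fig:flagpole}. So full synchrony does not make the race winner deterministic. To violate condition 3 of \cref{def-equiv-prod} you would need \emph{every} schedule of $\mathcal{S}'$ to produce the same winner. That is a universal statement over the simulator's nondeterminism, and nothing in your argument supplies it.

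There is also a quantifier error. The racer lengths belong to $\mathcal{S}_l$, which is fixed before $\mathcal{S}'$ and hence before its temporal scale $s$, so they cannot be tuned to differ by more than $s$. For the same reason, with only $l$ columns you cannot borrow the counting trick of \cref{lem:many-unscalables}. That trick needs $(3s+1)l'+(l'-1)$ independent infinite regions, a number that depends on $\mathcal{S}'$. Finally, you leave the case $l=1$ open.

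The missing idea is that no race, and no nondeterminism in $\mathcal{S}_l$, is needed. Aim at \cref{def:t-sync-follows-s} instead. It must hold for \emph{every} assembly sequence of $\mathcal{S}'$, so a single adversarially chosen schedule of the simulator suffices, and a large simulator frontier then helps you rather than hurts you. The paper takes $\mathcal{S}_l$ directed, with a row spawning infinitely many single-tile-type columns. This lets you use $k=(3s+1)l'+(l'-1)$ of them after $\mathcal{S}'$, $s$ and $c$ are fixed. By \cref{lem:indep} through \cref{corr:indep_sim}, from some step on every macro-column of $\mathcal{S}'$ always has a frontier location. Using $l'-1$ buffer columns to absorb the mandatory placements, you advance the other columns one at a time until each has a frontier location where a tile would resolve a new macrotile. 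You then fill $l'$ such locations per step for $3s+1$ consecutive steps. The simulator thus represents $l'>l$ new tiles per step for longer than the temporal slack can absorb, which $\mathcal{S}_l$ cannot match.
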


\begin{proof}
    We prove \cref{thm:no-up-sim} by first describing a directed $L$-syncTAS $\calS_l$, for arbitrary $l \ge 1$. A schematic depiction can be seen in \cref{fig:no-up-sim}. It consists of a green seed tile fixed at $(0,0)$, and 4 tile types (yellow, gold aqua, and blue) that grow a row with a periodic sequence of those 4 tile types infinitely far to the right. From each blue tile, a single tile (pink) attaches to itself infinitely often, to create infinite vertical columns. This system is similar to that which was shown in the proof of \cref{lem:many-unscalables}, with the only modification being that each infinite vertical column only contains a single repeating tile type. Call this system $\calS_l = (T, \sigma, 1, l)$ and note that $|T| = 6$.

    Now, let $l'$ be an integer $>l$. We will prove \cref{thm:no-up-sim} by contradiction, so assume that $\calS'$ is an $L$-syncTAS which synchronously simulates $\calS_l$ at some temporal scale factor $s$ and spatial scale factor $c$. Let us now inspect a particular assembly sequence of $\calS'$.

    Define $k = (3s+1)l' + (l' - 1)$, a constant whose value will become clear later.
    Let $\vec{\alpha}$ be an assembly sequence in $\calS_l$ that proceeds as follows. The horizontal row grows to a length $>4k+2$ by starting from the seed and attaching $k$ copies of the 4-tile repeating segment. From each blue tile, a vertical column begins to grow, resulting in $k$ vertical columns which will ultimately be infinite in height. Note that the horizontal row will continue to grow and initiate further upward columns, but we do not need to include them in our analysis. 

    There must be an assembly sequence $\vec{\alpha}'$ in $\calS'$ that models $\vec{\alpha}$. Let $\vec{\alpha}'$ proceed until it has not only grown sufficiently far horizontally (i.e. $> c \cdot (4k + 2)$), but also continues growth so that all subassemblies mapping to the $k$ vertical columns are of sufficient height that \cref{corr:indep_sim} of \cref{lem:indep} may be applied, with cut $C$ separating $k$ vertical columns into shapes $s_1, s_2, \ldots, s_k$, such that each vertical column is guaranteed to have a frontier location in all future assembly steps. The portion of the horizontal bar containing the seed will be $s_0$, and the rest of the assembly to the east of the cut will be $s_{k+1}$. We extend assembly sequence $\vec{\alpha}'$ as follows.

    We'll refer to each vertical column as $V_i$ for $0 \le i < k$, and to the rightmost $l'-1$ of those as \textit{buffer} columns. Since $\calS'$ has synchronicity $l'$, there must be $l'$ tiles placed during each assembly step. Therefore, starting with $i=0$, we perform  assembly steps such that each places one tile in each of the buffer columns (using $l'-1$ of the tile placements in each assembly step), and one in $V_i$. We repeat this until there is a frontier location, which we will refer to as a \textit{resolving frontier location}, in $V_i$ so that there exists a tile type in $\calS'$ that may be placed there whose attachment causes a new macrotile to resolve to a tile in $\calS_l$. We will refer to this tile as a \textit{resolving tile}. That is, assuming the assembly $\alpha'$ in $\calS'$ maps under $R$ to some assembly $\alpha$ in $\calS$ (i.e. $R(\alpha') = \alpha$), when a resolving tile is placed in that resolving frontier location to create an assembly $\alpha''$ in $\calS'$, it is the case that $R(\alpha'') \neq \alpha$.

    Once it occurs that a resolving frontier location exists in $V_i$, we increment to $i+1$ and repeat until resolving frontier locations exist in each of the $(3s+1)l'$ leftmost vertical columns. At this point, let the current assembly be $\beta'$, and let $\beta = R(\beta')$ be the assembly in $\calS$ to which it maps. We now perform $3s+1$ assembly steps as follows. For $0 \le p < 3s+1$, each step consists of placing a single resolving tile into the resolving frontier locations in each $V_q$ for $l'p \le q < (l'+1)p$. In this way, each column in each set of $l'$ columns starting from the left receives a single tile, causing $l'$ macrotiles to resolve.
    This breaks \cref{def:t-sync-follows-s}, i.e. the simulated system does not synchronously follow the simulator because the simulator adds $>l$ tiles in every assembly step for each of $3s+1$ consecutive steps, causing the dynamics to differ even beyond compensation by the temporal scaling factor $s$. This is a contradiction of the claim that $\calS'$ simulates $\calS_l$, and thus \cref{thm:no-up-sim} is proven.
    


\begin{figure}
    \centering
    \includegraphics[width=0.3\linewidth]{images/no-up-sim.png}
    \caption{A depiction of a TAS in the L-syncTAM with synchronicity $l$ that cannot be simulated by any TAS in the L-syncTAM with synchronicity $l' > l$. The seed is the green tile on the bottom left, and an infinite number of repetitions of 4 tiles can attach to its right. From each blue tile, a column grows upward infinitely, with copies of the same pink tile type attaching. Cut $C$ of \cref{lem:indep} is also shown.}
    \label{fig:no-up-sim}
\end{figure}

\end{proof}

\bibliography{tam,experimental_refs,slats,synctam}

@article{fractalgorithmica,
  author       = {Florent Becker and
                  Daniel Hader and
                  Matthew J. Patitz},
  title        = {Strict Self-Assembly of Discrete Self-Similar Fractals in the Abstract
                  Tile Assembly Model},
  journal      = {Algorithmica},
  volume       = {88},
  number       = {1},
  pages        = {18},
  year         = {2026},
  url          = {https://doi.org/10.1007/s00453-025-01340-w},
  doi          = {10.1007/S00453-025-01340-W},
  bibsource    = {dblp computer science bibliography, https://dblp.org}
}

@inproceedings{syncTAM,
  TITLE = {{Synchronous Versus Asynchronous Tile-Based Self-Assembly}},
  AUTHOR = {Becker, Florent and Drake, Phillip and Patitz, Matthew and Rogers, Trent},
  URL = {https://hal.science/hal-05299696},
  BOOKTITLE = {{DNA 2025}},
  ADDRESS = {Lyon, France},
  PUBLISHER = {{Schloss Dagstuhl -- Leibniz-Zentrum f{\"u}r Informatik}},
  YEAR = {2025},
  MONTH = Aug,
  DOI = {10.4230/LIPIcs.DNA.31.9},
  HAL_ID = {hal-05299696},
  HAL_VERSION = {v1},
}

@inproceedings{FractalsSelfAssembleSODA,
  title={Strict Self-Assembly of Discrete Self-Similar Fractals in the abstract Tile Assembly Model},
  author={Becker, Florent and Hader, Daniel and Patitz, Matthew J},
  booktitle={Proceedings of the 2025 Annual ACM-SIAM Symposium on Discrete Algorithms (SODA'25), New Orleans, USA},
  pages={2387--2466},
  year={2025},
  organization={SIAM}
}

@inproceedings{DDDxModelSimICALP,
  author       = {Daniel Hader and
                  Matthew J. Patitz},
  editor       = {Kousha Etessami and
                  Uriel Feige and
                  Gabriele Puppis},
  title        = {The Impacts of Dimensionality, Diffusion, and Directedness on Intrinsic Cross-Model Simulation in Tile-Based Self-Assembly},
  booktitle    = {50th International Colloquium on Automata, Languages, and Programming,
                  {ICALP} 2023, July 10-14, 2023},
  address      = {Paderborn, Germany},
  series       = {LIPIcs},
  volume       = {261},
  pages        = {71:1--71:19},
  publisher    = {Schloss Dagstuhl - Leibniz-Zentrum f{\"{u}}r Informatik},
  year         = {2023},
  url          = {https://doi.org/10.4230/LIPIcs.ICALP.2023.71},
  doi          = {10.4230/LIPIcs.ICALP.2023.71},
  bibsource    = {dblp computer science bibliography, https://dblp.org}
}

@article{ScaledPierFractals,
  author    = {David Furcy and
               Scott M. Summers},
  title     = {Scaled pier fractals do not strictly self-assemble},
  journal   = {Natural Computing},
  volume    = {16},
  number    = {2},
  pages     = {317--338},
  year      = {2017}
}

@inproceedings{Temp1PathsSTOC,
  author    = {Pierre{-}{\'{E}}tienne Meunier and
               Damien Regnault and
               Damien Woods},
  editor    = {Konstantin Makarychev and
               Yury Makarychev and
               Madhur Tulsiani and
               Gautam Kamath and
               Julia Chuzhoy},
  title     = {The program-size complexity of self-assembled paths},
  booktitle = {Proccedings of the 52nd Annual {ACM} {SIGACT} Symposium on Theory
               of Computing, {STOC} 2020, June 22-26, 2020},
  address   = {Chicago, IL, USA},
  pages     = {727--737},
  publisher = {{ACM}},
  year      = {2020},
  url       = {https://doi.org/10.1145/3357713.3384263},
  doi       = {10.1145/3357713.3384263},
  bibsource = {dblp computer science bibliography, https://dblp.org}
}

@inproceedings{DDDIU,
  author    = {Daniel Hader and
               Aaron Koch and
               Matthew J. Patitz and
               Michael Sharp},
  editor    = {Shuchi Chawla},
  title     = {The Impacts of Dimensionality, Diffusion, and Directedness on Intrinsic
               Universality in the abstract Tile Assembly Model},
  booktitle = {Proceedings of the 2020 {ACM-SIAM} Symposium on Discrete Algorithms,
               {SODA} 2020, January 5-8, 2020},
  address   = {Salt Lake City, UT, USA},
  pages     = {2607--2624},
  publisher = {{SIAM}},
  year      = {2020}
}

@inproceedings{TempOneNotIU,
 author = {Meunier, Pierre-\'{E}tienne and Woods, Damien},
 title = {The Non-cooperative Tile Assembly Model is Not Intrinsically Universal or Capable of Bounded Turing Machine Simulation},
 booktitle = {Proceedings of the 49th Annual ACM SIGACT Symposium on Theory of Computing},
 series = {STOC 2017},
 year = {2017},
 isbn = {978-1-4503-4528-6},
 location = {Montreal, Canada},
 pages = {328--341},
 numpages = {14},
 url = {http://doi.acm.org/10.1145/3055399.3055446},
 doi = {10.1145/3055399.3055446},
 acmid = {3055446},
 publisher = {ACM},
 address = {New York, NY, USA},
}

@inproceedings{DirectedNotIU,
  author    = {Jacob Hendricks and
              Matthew J. Patitz and
              Trent A. Rogers},
title = {Universal Simulation of Directed Systems in the abstract Tile Assembly Model Requires Undirectedness},
booktitle = {Proceedings of the 57th Annual IEEE Symposium on Foundations of Computer Science (FOCS 2016), New Brunswick, New Jersey, USA {\rm October 9-11, 2016}},
pages = {800-809},
year={2016}}

@inproceedings{Polygons,
  author    = {Oscar Gilbert and
              Jacob Hendricks and
              Matthew J. Patitz and
              Trent A. Rogers},
title = {Computing in continuous space with self-assembling polygonal tiles},
booktitle = {Proceedings of the Twenty-Seventh Annual ACM-SIAM Symposium on Discrete Algorithms (SODA 2016), Arlington, VA, USA {\rm January 10-12, 2016}},
pages = {937-956},
year = {2016}
}

@inproceedings{Polyominoes,
  author    = {S{\'a}ndor P. Fekete and
              Jacob Hendricks and
              Matthew J. Patitz and
              Trent A. Rogers and
              Robert T. Schweller},
title = {Universal Computation with Arbitrary Polyomino Tiles in Non-Cooperative Self-Assembly},
booktitle = {Proceedings of the Twenty-Sixth Annual ACM-SIAM Symposium on Discrete Algorithms (SODA 2015), San Diego, CA, USA, January 4-6, 2015},
chapter = {12},
pages = {148-167},
year = {2015}
}

@inproceedings{TreeFractals,
  author    = {Kimberly Barth and
              David Furcy and
              Scott M. Summers and
              Paul Totzke},
  title    = {Scaled tree fractals do not strictly self-assemble},
  booktitle = {Unconventional Computation \& Natural Computation (UCNC) 2014, University of Western Ontario, London, Ontario, Canada {\rm July 14-18, 2014}},
  year      = {2014},
  pages = {27--39}
}

@article{jDuples,
  author    = {Jacob Hendricks and
               Matthew J. Patitz and
               Trent A. Rogers and
               Scott M. Summers},
  title     = {The power of duples (in self-assembly): It's not so hip to be square},
  journal   = {Theoretical Computer Science},
  volume    = {743},
  pages     = {148--166},
  year      = {2018}
}

@inproceedings{temp1notIU,
  title     = {Intrinsic universality in tile self-assembly requires cooperation},
  author    = {Pierre-\'Etienne Meunier and
               Matthew J. Patitz and
               Scott M. Summers and
               Guillaume Theyssier and
               Andrew Winslow and
               Damien Woods},
  year = {2014},
  booktitle = {Proceedings of the ACM-SIAM Symposium on Discrete Algorithms (SODA 2014), (Portland, OR, USA, January 5-7, 2014)},
  pages = {752--771},
}

@inproceedings{2HAMIU,
  title = {The two-handed assembly model is not intrinsically universal},
  author = {Erik D. Demaine and Matthew J. Patitz and Trent A. Rogers and Robert T. Schweller and Scott M. Summers and Damien Woods},
  year = {2013},
  booktitle = {40th International Colloquium on Automata, Languages and Programming, ICALP 2013, Riga, Latvia, July 8-12, 2013},
  series = {Lecture Notes in Computer Science},
  publisher = {Springer},
  page={400-412}
}

@article{BryChiDotKarSekTOC,
  author    = {Nathaniel Bryans and
               Ehsan Chiniforooshan and
               David Doty and
               Lila Kari and
               Shinnosuke Seki},
  title     = {The Power of Nondeterminism in Self-Assembly},
  journal   = {Theory of Computing},
  volume    = {9},
  year      = {2013},
  pages     = {1-29},
  ee        = {http://dx.doi.org/10.4086/toc.2013.v009a001},
  bibsource = {DBLP, http://dblp.uni-trier.de}
}

@inproceedings{IUSA,
 author    = {David Doty and
               Jack H. Lutz and
               Matthew J. Patitz and
               Robert T. Schweller and
               Scott M. Summers and
               Damien Woods},
 title     = {The tile assembly model is intrinsically universal},
 booktitle = {Proceedings of the 53rd Annual IEEE Symposium on Foundations of Computer Science},
 series = {FOCS 2012},
 year = {2012},
 location = {New Brunswick, New Jersey},
 pages = {302-310}
}

@inproceedings{GeoTiles,
  author    = {Bin Fu and
               Matthew J. Patitz and
               Robert T. Schweller and
               Robert Sheline},
  editor    = {Artur Czumaj and
               Kurt Mehlhorn and
               Andrew M. Pitts and
               Roger Wattenhofer},
  title     = {Self-assembly with Geometric Tiles},
  booktitle = {Automata, Languages, and Programming - 39th International Colloquium,
               {ICALP} 2012, Warwick, UK, July 9-13, 2012, Proceedings, Part {I}},
  series    = {LNCS},
  volume    = {7391},
  pages     = {714--725},
  publisher = {Springer},
  year      = {2012}
}

@inproceedings{SingleNegative,
  author    = {Matthew J. Patitz and
               Robert T. Schweller and
               Scott M. Summers},
  editor    = {Luca Cardelli and
               William M. Shih},
  title     = {Exact Shapes and {T}uring Universality at Temperature 1 with a Single
               Negative Glue},
  booktitle = {{DNA} Computing and Molecular Programming - 17th International Conference,
               {DNA} 17, Pasadena, CA, USA, September 19-23, 2011. Proceedings},
  series    = {Lecture Notes in Computer Science},
  volume    = {6937},
  pages     = {175--189},
  publisher = {Springer},
  year      = {2011}
}

@inproceedings{CookFuSch11,
  author = "Matthew Cook and Yunhui Fu and Robert T. Schweller",
  title = "Temperature 1 Self-Assembly: Deterministic Assembly in 3{D} and Probabilistic Assembly in 2{D}",
  year = 2011,
  booktitle =  {SODA 2011: Proceedings of the 22nd Annual ACM-SIAM Symposium on Discrete Algorithms},
  publisher =  {SIAM},
}

@inproceedings{SFTSAFT,
  author = "David Doty and Matthew J. Patitz and Dustin Reishus and Robert T. Schweller and Scott M. Summers",
  title = "Strong Fault-Tolerance for Self-Assembly with Fuzzy Temperature",
  year = 2010,
  booktitle = "Proceedings of the 51st Annual IEEE Symposium on Foundations of Computer Science (FOCS 2010)",
  pages = "417--426"
}

@inproceedings{USA,
  author = "David Doty and Jack H. Lutz and Matthew J. Patitz and Scott M. Summers and Damien Woods",
  title = "Intrinsic Universality in Self-Assembly",
 booktitle = "Proceedings of the 27th International Symposium on Theoretical Aspects of Computer Science",
  year = 2009,
pages = "275--286"
}

@article{SolWin07,
  author    = {David Soloveichik and
               Erik Winfree},
  title     = {Complexity of Self-Assembled Shapes},
  journal   = {SIAM Journal on Computing},
  volume    = {36},
  number    = {6},
  year      = {2007},
  pages     = {1544-1569},
  ee        = {http://dx.doi.org/10.1137/S0097539704446712},
  bibsource = {DBLP, http://dblp.uni-trier.de}
}

@phdthesis{Winf98,
  author =	"Erik Winfree",
  title =	"Algorithmic Self-Assembly of {D}{N}{A}",
  school =	"California Institute of Technology",
  year =	"1998",
  month =	"June",
}

@inproceedings{RotWin00,
 author = {Paul W. K. Rothemund and Erik Winfree},
 title = {The Program-size Complexity of Self-Assembled Squares (extended abstract)},
 booktitle = {STOC '00: Proceedings of the thirty-second annual ACM Symposium on Theory of Computing},
 year = {2000},
 pages = {459--468},
 address = {Portland, Oregon, United States},
 publisher = {ACM}
 }

@article{jSSADST,
  author =   "James I. Lathrop and Jack H. Lutz and Scott M. Summers",
  title =    "Strict Self-Assembly of Discrete {S}ierpinski Triangles",
  journal = "Theoretical Computer Science",
  volume = "410",
  year = "2009",
  publisher = {Springer US},
  pages = "384--405"
}

@article{jCCSA,
  author    = {James I. Lathrop and
               Jack H. Lutz and
               Matthew J. Patitz and
               Scott M. Summers},
  title     = {Computability and Complexity in Self-assembly},
  journal   = {Theory Comput. Syst.},
  volume    = {48},
  number    = {3},
  year      = {2011},
  pages     = {617-647},
  ee        = {http://dx.doi.org/10.1007/s00224-010-9252-0},
  bibsource = {DBLP, http://dblp.uni-trier.de}
}

@article{jSADS,
  author    = {Patitz, Matthew J. and Summers, Scott M.},
  title     = {Self-assembly of decidable sets},
  journal   = {Natural Computing},
  volume    = {10},
  number    = {2},
  year      = {2011},
  pages     = {853-877},
  ee        = {http://dx.doi.org/10.1007/s11047-010-9218-9},
  bibsource = {DBLP, http://dblp.uni-trier.de}
}

\newpage
\appendix




\section{Details of the construction of the Sierpinski Triangle (\cref{lem:triangle-strict})}
\label{apdx:Sierpinski}

\triangleStrict*

\begin{figure}
    \begin{subfigure}[t]{.7\textwidth}     
    \centering
    \includegraphics[width=1\linewidth]{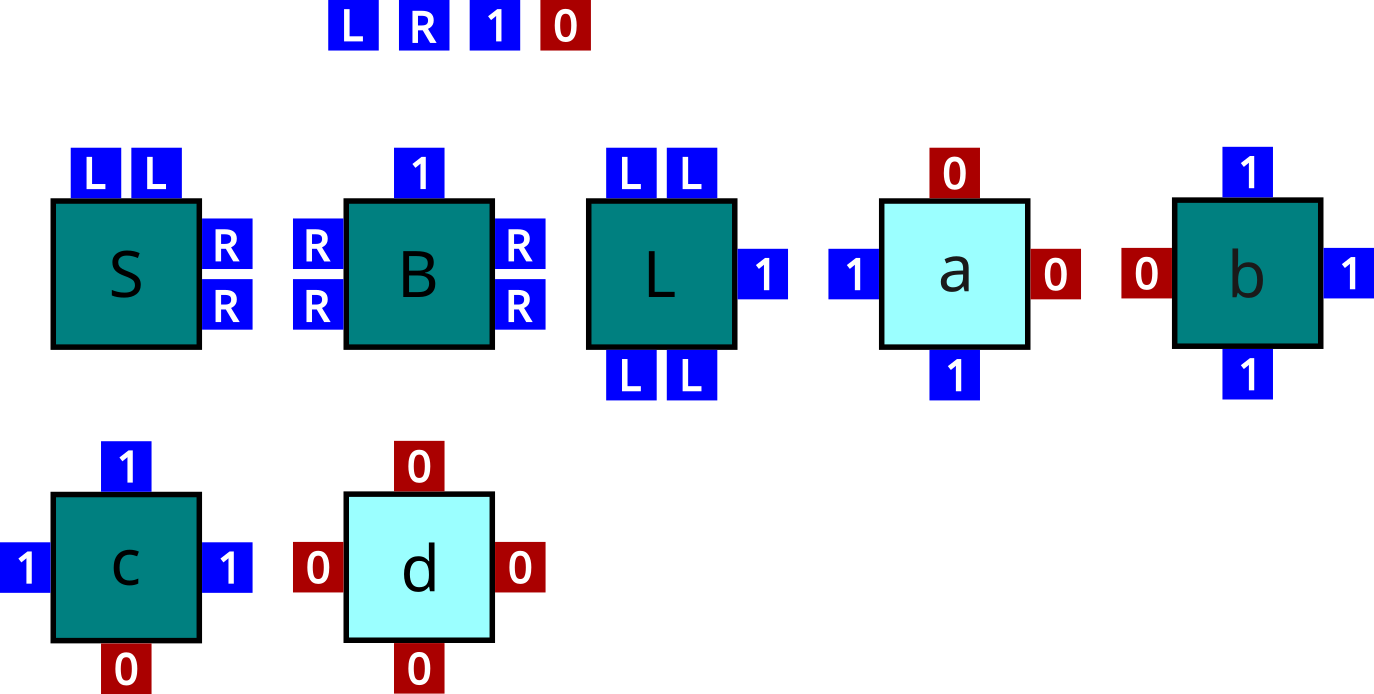}
        \caption{}
        \label{fig:aTAM_weakSTtileSET}
        \end{subfigure}\quad\quad
 
    \begin{subfigure}[b]{.92\textwidth}
        \centering
        \includegraphics[width=1\linewidth]{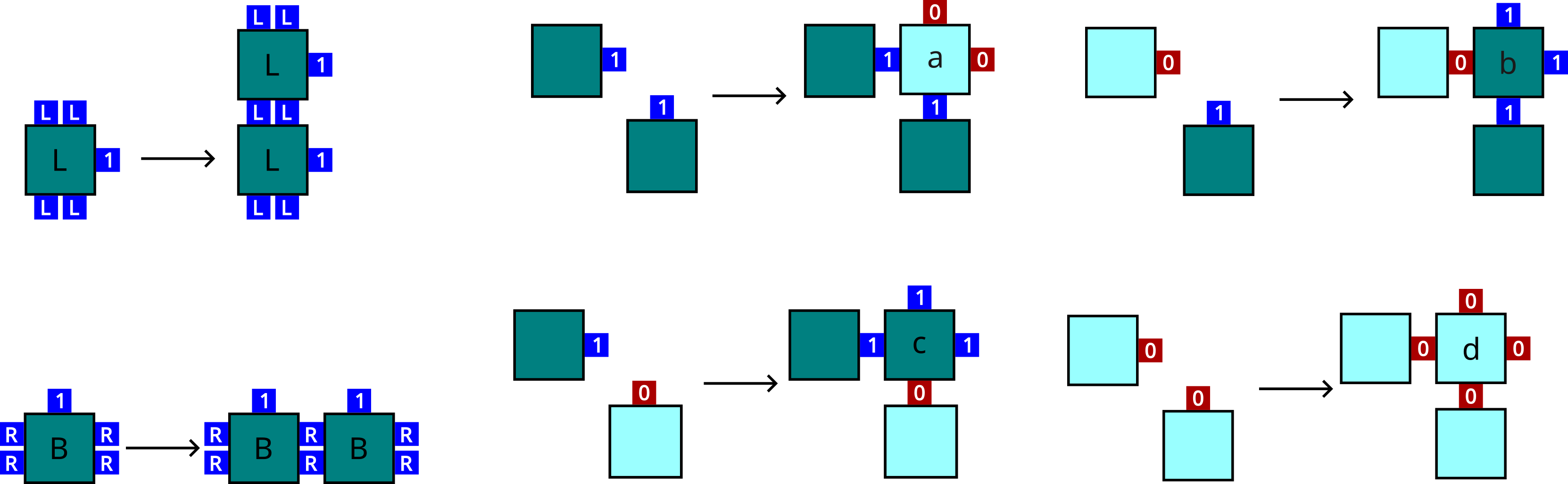}
        \caption{}
        \label{fig:aTAM_weakSTconnections}
    \end{subfigure}\quad\quad

    \begin{subfigure}[b]{.7\textwidth}
        \includegraphics[width=.8\linewidth]{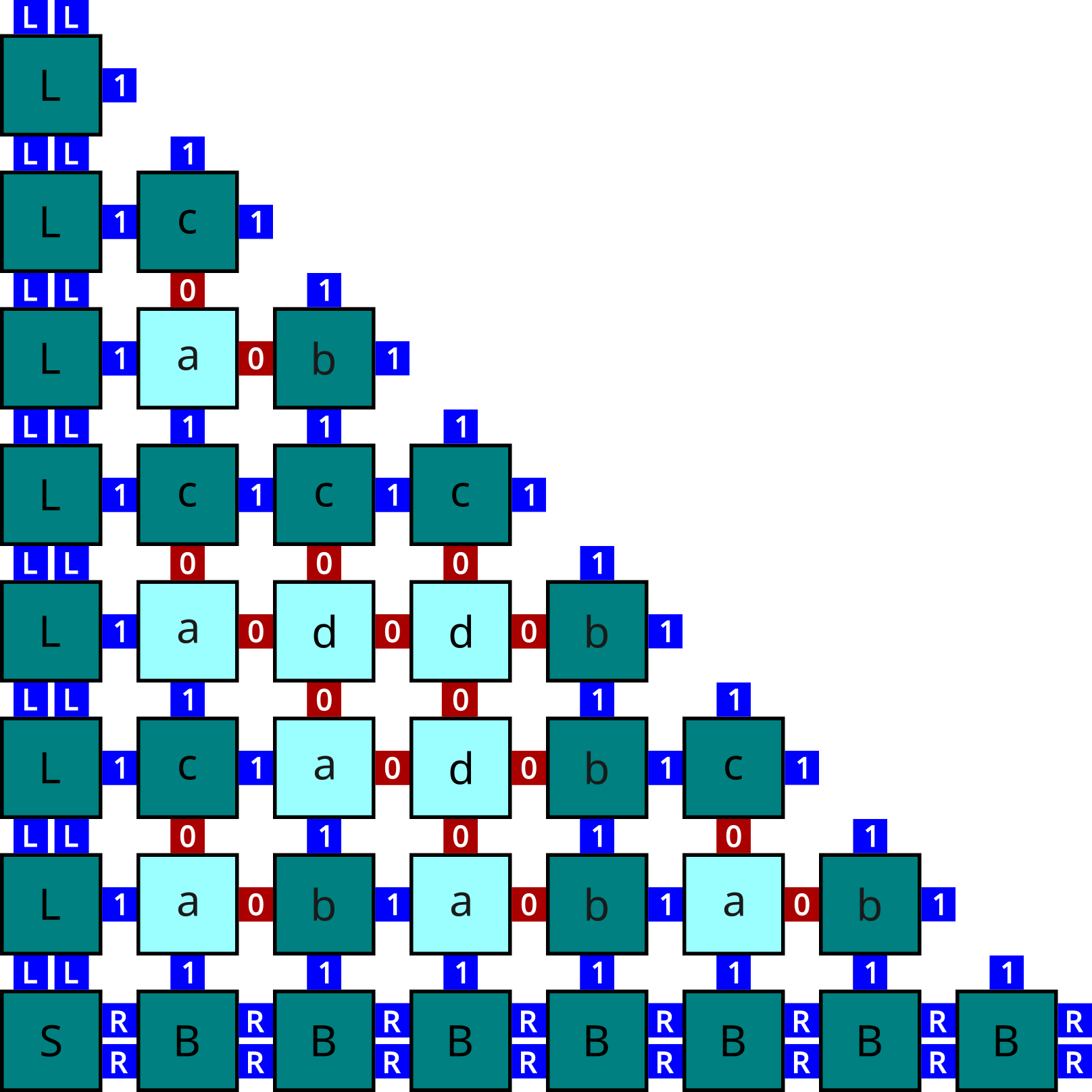}
        \caption{}
        \label{fig:aTAM_weakSTconstructionWglues}
    \end{subfigure}
    \caption{An aTAM system $\calT_{\Delta'}$ that weakly self-assembles the Sierpinski Triangle. This design is standard, and was first shown in \cite{Winf98}. (a) The tile set of an aTAM system which weakly self-assembles the Sierpinski Triangle. The tile labeled ``S'' is the Seed tile, and is placed at $(0,0)$. (b) A diagram showing all possible tile attachments during the assembly of $\calT_{\Delta'}$. For tiles whose glues used to attach may come from multiple tile types in $\calT_{\Delta'}$, an unlabeled placeholder tile type is shown. (c) A snapshot of a producible assembly by $\calT_{\Delta'}$.}\label{fig:aTAM-Sierpinski}
\end{figure}

    \begin{figure}
        \begin{subfigure}[t]{.75\textwidth}
            \centering
            \includegraphics[width=1\linewidth]{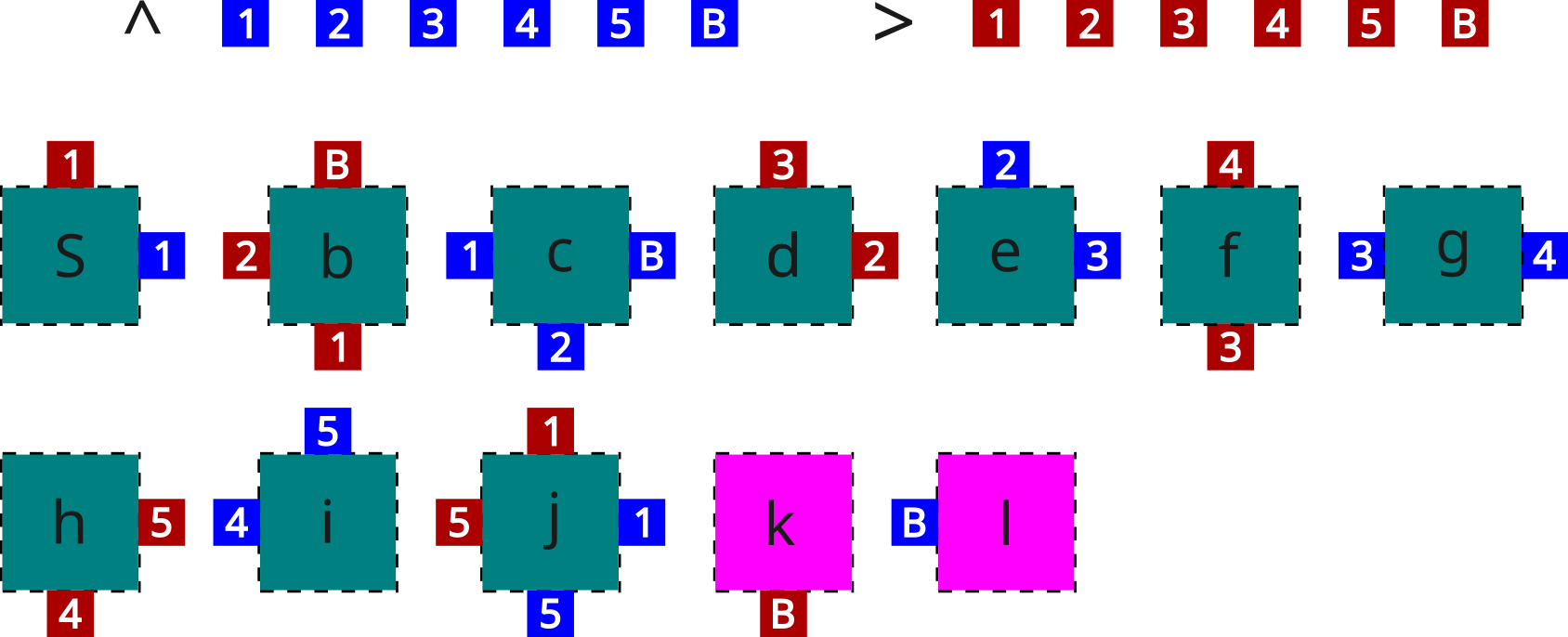}
            \caption{
            }
            \label{fig:SierpinskiTranlge_TileSet}
        \end{subfigure}\quad \quad
        \begin{subfigure}[m]{.75\textwidth}
            \centering
            \includegraphics[width=1\linewidth]{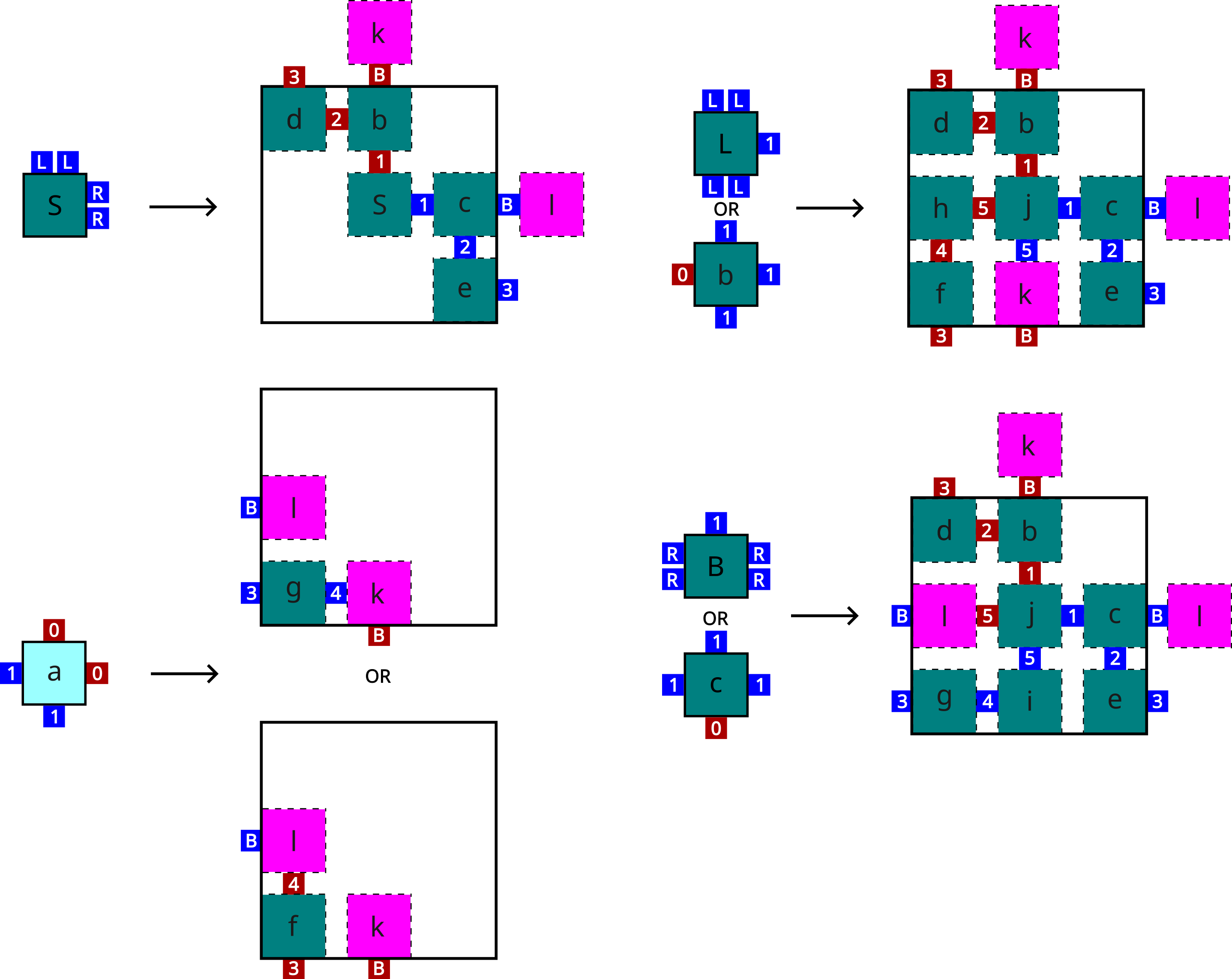}
            \caption{
            }
            \label{fig:SierpinskiTrangle_MacroTiles}
        \end{subfigure}\quad \quad
        \caption{The tile set of a syncTAM system $\calS_\Delta$ that strictly self-assembles our modified Sierpinski triangle by ``simulating'' the aTAM system $\calT_{\Delta'}$. (a) The tile types used in $\calS_\Delta$. (b) A diagram showing the conversion of tile types in $\calT_{\Delta'}$ to macrotiles such that each ``macrotile type'' mimics the behavior of one, or multiple tile types in $\calT_{\Delta'}$. Note that the behavior of `L' and `B' tiles have their behavior encoded in the same macrotile type as tiles `b' and `c', respectively, and that the tile type `d' has no counterpart. }
        \label{fig:Sierpinski_macrotiles}
        \end{figure}
        
    \begin{figure}
        \centering
            \begin{subfigure}[m]{.8\textwidth}
            \centering
            \includegraphics[width=.95\linewidth]{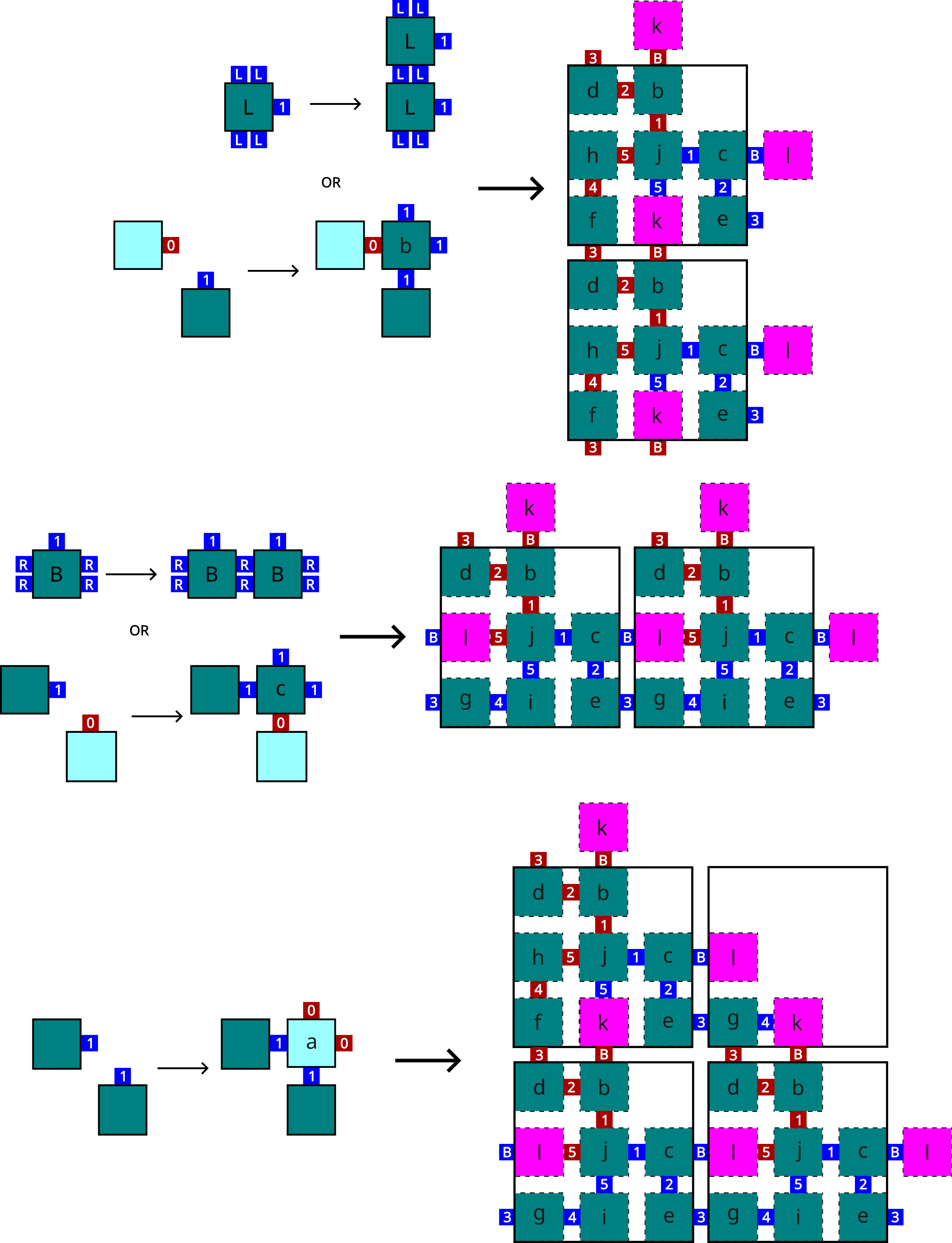}
            \caption{}
            \end{subfigure}

            \begin{subfigure} [m]{.75\textwidth}
            \centering
            \includegraphics[width=.90\linewidth]{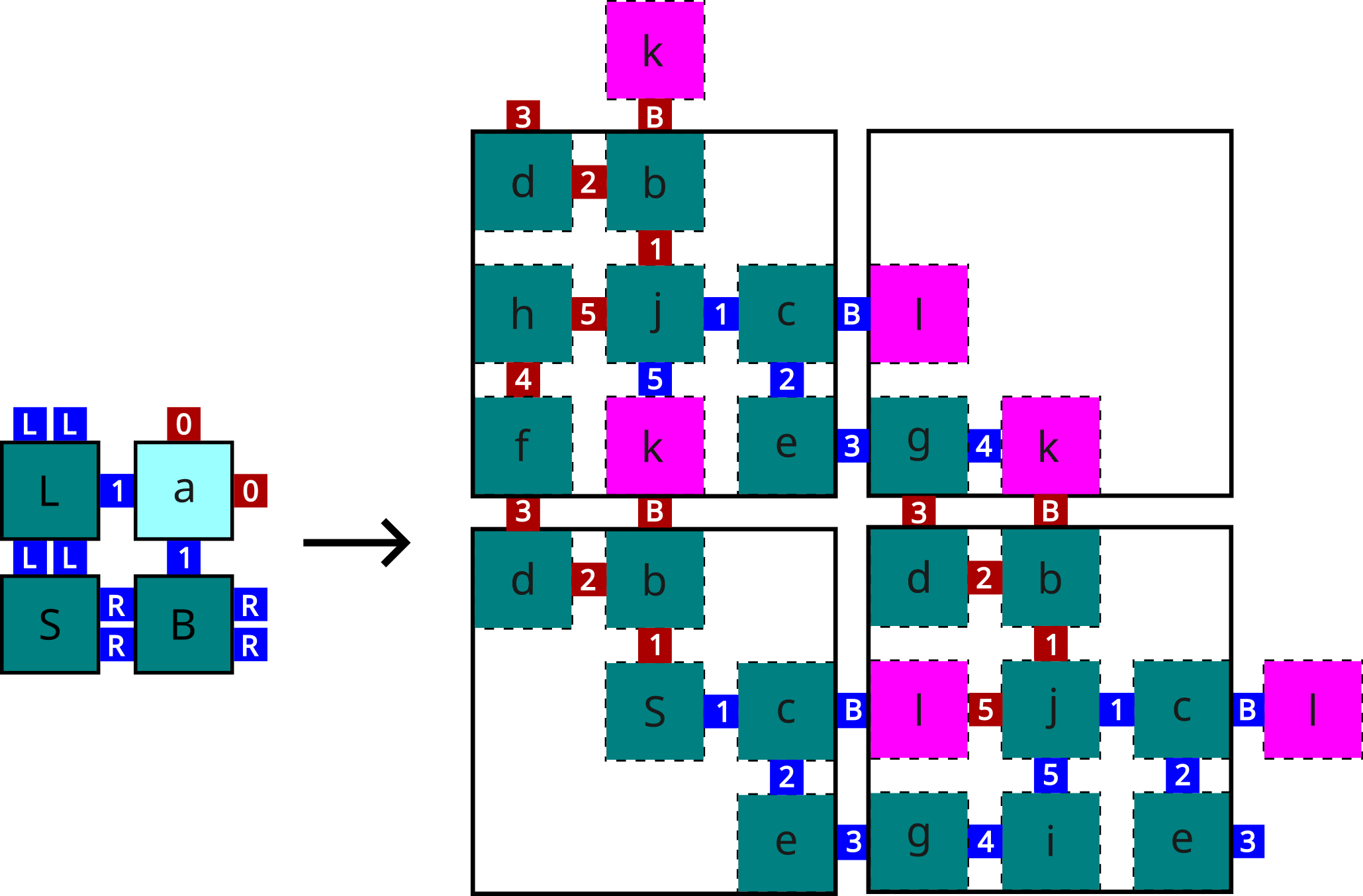}
            \caption{}
                
            \end{subfigure}

        \caption{(a) A diagram which shows tile attachments in the aTAM system $\calT_{\Delta'}$, and their macrotile counterparts in the syncTAM system $\calS_\Delta$. (b) A small assembly producible by $\calS_\Delta$.}
            \label{fig:SierpinskiTrangle_Translations}
    \end{figure}
    \begin{figure}
        \centering
         \begin{subfigure}[m]{\textwidth}
            \centering
            \includegraphics[width=.95\linewidth]{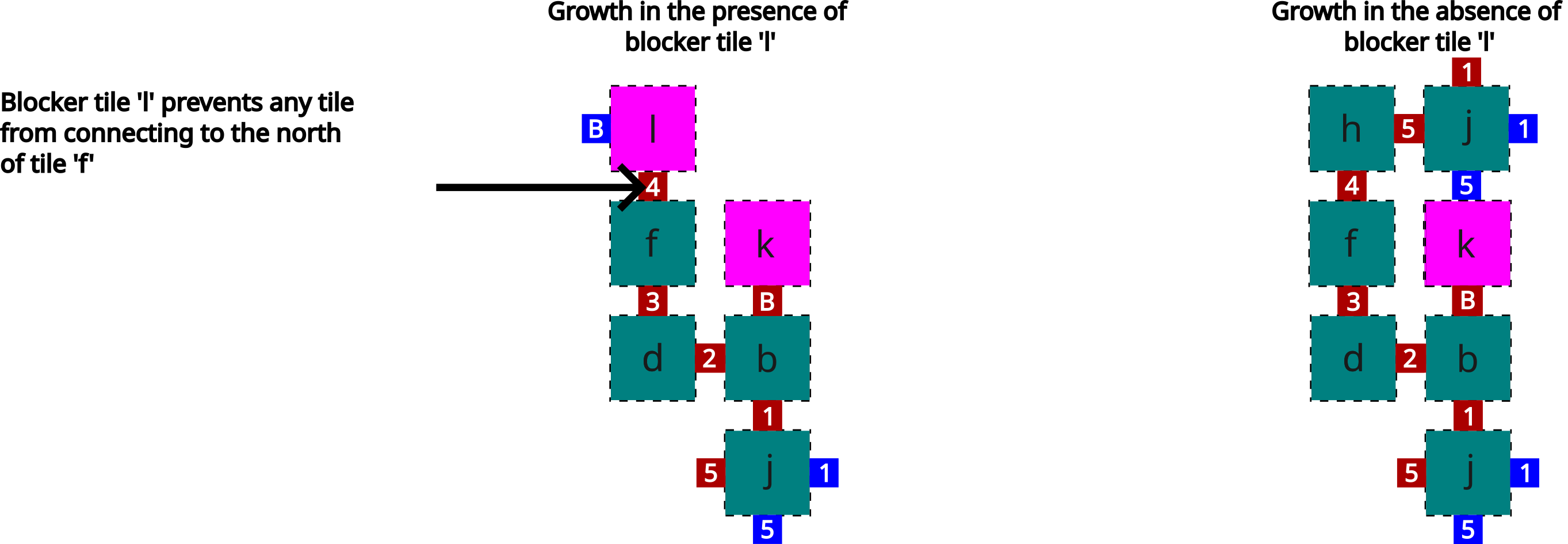}
            \caption{}
            \end{subfigure}

            \begin{subfigure} [m]{.95\textwidth}
            \centering
            \includegraphics[width=.90\linewidth]{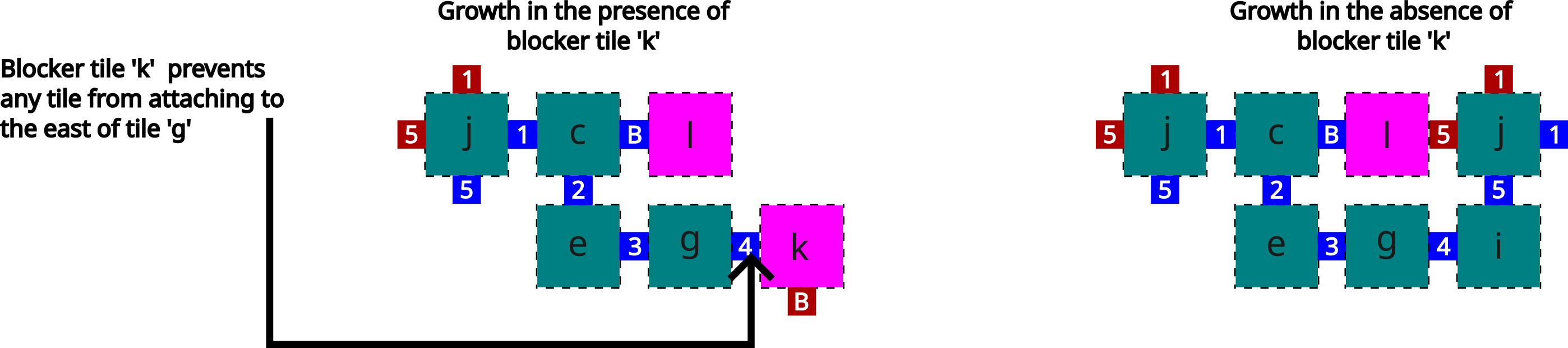}
            \caption{}
                
        \end{subfigure}

        \caption{The functionality of blockers in the SyncTAM system $\calS_\Delta$. (a) The growth of a path of tiles from the north of one of the macrotiles shown in \cref{fig:SierpinskiTrangle_MacroTiles} (excluding the macrotile which simulates the behavior of tile `a'). Note that when the blocker tile `l' is present, the path of tiles forms the macrotile corresponding to tile `a' in $\calT_{\Delta'}$, and when the blocker tile is not present, the path of tiles forms the macrotile corresponding to tile `L' or `b' in $\calT_{\Delta'}$. (b) The growth of a path of tiles from the east of one of the macrotiles shown in \cref{fig:SierpinskiTrangle_MacroTiles} (excluding the macrotile which simulates the behavior of the tile `a'). Note that when the blocker tile `k' is present, the path of tiles forms the macrotile corresponding to tile `a' in $\calT_{\Delta'}$, and when the blocker tile is not present, the path of tiles forms the macrotile corresponding to tile `B' or `c' in $\calT_{\Delta'}$.}
        \label{fig:SierpinskiTriangle_Blockers}
    \end{figure}

\begin{proof}
    The syncTAM system which strictly self-assembles our variant of the Sierpinski Triangle is founded on an aTAM system which \emph{weakly} self-assembles a Sierpinski Triangle. This system is depicted on \cref{fig:aTAM-Sierpinski}. We will call the aTAM system which weakly self-assembles a Sierpinski Triangle $\calT_{\Delta'}$, and we will call the syncTAM system which strictly self-assembles our variant of the Sierpinski Triangle $\calS_\Delta$.

    The full tile set for the syncTAM system can be seen in \cref{fig:Sierpinski_macrotiles}, \cref{fig:SierpinskiTrangle_Translations}, and \cref{fig:SierpinskiTriangle_Blockers} give depictions of how the individual tiles of the temperature-2 aTAM system are simulated by $3 \times 3$ macrotiles formed by tiles of the syncTAM system at temperature-1. The key techniques utilized are: (1) all paths of tiles that can grow through any macrotile are the same length, ensuring that, due to the synchronous nature of tile placements in the syncTAM, all macrotiles that begin growth during the same step also complete growth during the same step, (2) if an aTAM tile would output a $0$ to a neighboring location, the corresponding syncTAM macrotile doesn't initiate any growth into the corresponding neighboring macrotile, and (2) if an aTAM tile would output a $1$ to a neighboring location, the corresponding syncTAM macrotile initiates growth into the corresponding neighboring macrotile as follows. It places a \emph{blocker} tile into the location in the middle of adjacent side of that macrotile, and it also initiates growth of a path into that macrotile by first placing a tile in the southwestern corner. If two neighboring macrotiles output $1$ glues into the same macrotile location, then one of them will nondeterministically get to place its tile in the southwestern corner. However, this will result in there being blockers which prevent a path from continuing from the southwestern corner tile and growth in that macrotile terminates. This handles the case of the input $11$ resulting in output $00$. If, instead, only one neighbor outputs a $1$ into an adjacent macrotile location (via the blocker and the tile in the southwest corner), then it will grow a path which continues such that it outputs $11$, i.e. blockers and paths into the north and east neighbors. This handles the cases for inputs $01$ and $10$. The only other case, for inputs $00$, is trivially handled by the fact that when a $0$ glue is output by a macrotile, there will be no tiles placed in the adjacent macrotile. Therefore, any tile in $\calT_{\Delta'}$ which takes input $00$ will not be present in $\calS_\Delta$. In this way, all possible tile attachments are handled correctly and the pattern shown in \cref{fig:syncTAM-Sierpinski} is generated.

\end{proof}

\newpage

\section{Proof of \cref{thm:directedSim}}
\label{apdx:directedSimProof}

\directedSim*

\begin{proof}
    In the aTAM as well as in the syncTAM, any tile attachment takes place thanks to a set of \emph{predecessors.} The set of directions of those predecessors can be (up to rotation) one of $\{N\}, \{N, E\}, \{N, S\}, \{N, E, S\}, \{N, E, S, W\}$. In this proof, these sets of directions are represented as a binary tree, respectively as follows: $N$, $N \wedge E$, $N \wedge S$, $(N \wedge E) \wedge S$, $(N \wedge E) \wedge (S \wedge E)$. Likewise, for each set $I$ of predecessors, the value of the input glues of the position is a function $i: I \to G$, will be considered as the binary tree obtained by mapping $i$ on the leaves of the representation of $i$. For instance, the input mapping glues $n,e,s$ on the $N,E,$ and $S$, respectively, $\{N \mapsto n, E \mapsto e, S \mapsto s\}: \{N, E, S\} \to G$, is represented as the binary tree $(n \wedge e) \wedge s$.

    Let $\calS = (T, \sigma, \tau)$ be a TAS which is directed in the aTAM, and $G$ its set of glues. Let $g = |G|$ be the number of glues in $T$. The syncTAM system $\calS' = (T', \sigma', 1)$ defined below, with set of glues $G'$, creates a scaled version of $\alpha \in \termasm{S}$, the terminal assembly of $\calS$, 
    in the syncTAM at scale $s = 10g^4 + 1$ with the representation function $R: B_m^{T'} \dashrightarrow T$ also defined below.

    \textbf{Interface between macrotiles and invariants.} The set of glues $G'$ of $\calS'$ is made of two parts: $E \biguplus I$. The set $E$ of \emph{external} glues contains a glue $g'$ (of strength $1$) for each $g \in G$ and $d \in \{N, E, S, W\}$. The set $I$ of \emph{internal} glues will be defined below, as the need for those glues arises in the description of the construction. For each tile $t \in T$, there is a unique tile $\hat{t} \in T'$.

    Given scale factor $s$, the macrotiles of $\calS'$ that map under $R$ to tiles of $\calS$ will be $s \times s$ regions.
    Let $i^* = \lceil s/2 \rceil + 1$; on each side of a macrotile, the $i^*$-th position (counting clockwise) is the \emph{input position} in that direction, with the position immediately before it (counting clockwise) being the \emph{output position}. (These are depicted in \cref{fig:SimZones} by the arrows pointing into and out of each side.)

    
    There is an invariant associated with the representation of the glues which $\calS'$ will uphold:
    \begin{quote}
        for any assembly $\alpha$ of $\calS'$ where some $s \times s$ square $M$ maps to a tile $t$ under $R$,        
        there is an $\alpha'$ such that $\alpha \to_{\calS'} \alpha'$ and the sequence $\vec{g}$ of glues (read clockwise) on each side $d$ of $M$ satisfies either $\vec{g}_{i^*} = t_d$ or $\vec{g}_{s-i^*} = t_d$.
        
    \end{quote}

    That is, every macrotile in $\calS'$ that maps to a tile in $\calS$ will have (or be able to grow into a macrotile that has) on each side a glue, in either the input or output position, that represents the glue on the corresponding side of the tile in $\calS$ that the macrotile maps to.


    Because $\calS$ is directed in the aTAM, for any attachment of a tile of type $t$ at position $z$ in a production $\alpha$ of $\calS$ and any subset of neighbors of $z$ which are filled before $z$ and have glues with binding strength summing to $\tau$ or more facing towards $z$, there must not be any tile other than $t$ that is attachable thanks to this subset of neighbors. 


\paragraph*{Macrotile Anatomy} Each macrotile is divided into regions. The \emph{input position} is the $i^*$-th position of each side of the macrotile (counting clockwise), while the \emph{output position} is the $(s-i^*)$-th. Thus, each input position of a macrotile is aligned with the output position of the neighboring macrotile. For each possible set of inputs $I$, there is a \emph{computation region}, with room for a \emph{computation zone} for each combination of glues $i \in G^I$. All these zones and regions are disjoint and separated by the \emph{wire region}, to be used for communication between them.
As an exception to this rule, the wire region also passes through the computation zones in the $\{N, S\}$ and $\{E, W\}$ computing regions. (These are special cases because they handle the situation when a tile in $\calS$ attaches via the cooperation of two input glues across-the-gap, i.e. from $N$ and $S$ or from $E$ and $W$, as opposed to there being a single input glue or a set of input glues that are all adjacent to each other.)
The \emph{center} of the macrotile is the position at $(\lfloor s / 2 \rfloor, \lfloor s / 2 \rfloor)$; it is outside all regions.
The tile there is always one of the unique $\hat{t}$ tiles that corresponds to a tile of $T$. The representation function is defined by $R(M) = t$ iff the center position of $M$ is occupied by $\hat{t}$.

\begin{figure}
    \centering
    \begin{tikzpicture}
    \draw (0,0) rectangle (12, 12);
    \draw[->] (5, -.5) node {$S$ input}-- +(0, 1);
    \draw[->] (7, .5) -- +(0, -1) node {$S$ output};
    \draw[->] (5, 11.5) -- +(0, 1);
    \draw[->] (7, 12.5) -- +(0, -1);
    \draw[->] (-0.5, 7) -- +(1, 0);
    \draw[->] (.5, 5) -- +(-1, 0);
    \draw[->] (11.5, 7) -- +(1, 0);
    \draw[->] (12.5, 5) -- +(-1, 0);

    \fill[gray] (6, 6) + (-.125, -.125) rectangle +(.25, .25);
    \draw (5, 1) rectangle (7, 2) node[pos=.5] {$S$ region};
    \draw (11, 5) rectangle (10, 7) node[pos=.5] {$E$ region};
    \draw (5, 11) rectangle (7, 10) node[pos=.5] {$N$ region};
    \draw (1, 5) rectangle (2, 7) node[pos=.5] {$W$ region};

    \draw (4.55, 2.45) rectangle (7.45, 4.45) node[pos=.5] {$ESW$ region};
    \draw (9.45, 4.55) rectangle (7.55, 7.45) node[pos=.5] {$NES$ region};
    \draw (4.55, 9.45) rectangle (7.45, 7.55) node[pos=.5] {$ENW$ region};
    \draw (2.45, 4.55) rectangle (4.45, 7.45) node[pos=.5] {$SWN$ region};

    \draw (1, 1) rectangle +(3, 3) node[pos=.5] {$SW$ region};
    \draw (1, 8) rectangle +(3, 3) node[pos=.5] {$NW$ region};
    \draw (8, 8) rectangle +(3, 3) node[pos=.5] {$NE$ region};
    \draw (8, 1) rectangle +(3, 3) node[pos=.5] {$SE$ region};

    \draw (5, 5) rectangle +(2, .5) node[pos=.5] {$NS$ region};
    \draw (5, 7) rectangle +(2, -.5) node[pos=.5] {$EW$ region};

    \draw[gray, thick] (7, 12) -- (6, 10.5) -- (9, 9) -- (8, 6.5) -- (6, 6);
    \draw[gray, thick] (12, 5) -- (10.5, 6.5) -- (9,9);
    \draw[gray, thick] (5, 0) -- (6.75, 1.25) -- (7.75, 2.5)-- ++ (1,1) -- (7.75, 4.5) -- (8, 6.5);
    \draw[gray, thick] (7.75, 4.5) -| ++(-1, 0) |- ++(0, .75);
    \draw[gray, thick] (7.5, 9.75) -- ++(-3, 0) -- ++ (0, -4.5) -- (6.25, 5.45);
    \draw[red, thick, ->] (6,6) -| (6.5, 4.5) -| (0.5, 5) -- (0, 5);
    \draw[gray, thick] (10.5, 6.5) -- (10, 3.75);
    
\end{tikzpicture}
    \caption{Anatomy of a macrotile, with the different regions and the flow of information for simulating the attachment of a tile thanks to its North, East and South neighbors. Paths leading to computation zones which will not be activated because they lack an input are omitted. The output (in red) passes through the NS region, which is possible because none of its zones have both inputs. It can do so by splitting and searching for a free path through the region. The output towards the N, E and S output positions will crash into some of the input wires and is not represented.}
    \label{fig:SimZones}
\end{figure}



\paragraph*{Input} 

For each glue $g \in G$ and direction $d$, there is a set of hard-coded tile types such that from glue $g'$ next to the input position in direction $d$, a path grows to the computation zone associated with $(d \mapsto g)$ within the $d$ computation region, where it starts the growth of the computation gadget for $(d \mapsto g)$. 

\paragraph*{Computation gadgets}

A computation gadget for a set of inputs $I$ and the values $i \in G^I$  behaves thus: if there is a unique possible attachment in $\calS$ with inputs $i$, say of tile type $t$, have the macrotile output $t$ (see below for how output proceeds). Otherwise, send $i$ to some computation gadgets having $i$ as a child (within the regions associated with trees having $I$ as a child):
\begin{itemize}
    \item those with two adjacent inputs
    \item those which would result in a unique attachment
\end{itemize}

The computations gadgets for singletons in each direction $d$ are trivial. If glue $g$ has strength at least $\tau$ and there is a unique tile type $t$ with glue $g$ on side $d$, then the gadget is hard-coded to have the macrotile start outputting tile type $t$ (see below). Otherwise, the gadget is hard-coded to forward the input $d \mapsto i$ to the relevant computation gadgets.

For non-singleton sets of input directions, the computation gadget uses a mechanism similar to that of \cref{fig:flagpole} which allows it to wait
for both of its inputs to arrive. When both of its inputs have arrived,
the behavior is the same as in the singleton case: a unique candidate tile attachment $t$ triggers the output of $t$, otherwise the input value is transferred to the parent computation gadgets (if there are any).  

\paragraph*{Output}

When a computation gadget logically outputs the attachment of a tile $t$, it grows a path towards the center of the macrotile and from there to each side, where it passes through the input position before outputting glue $t_d'$ at the output position. The paths from the center to the output positions are designed in such a way that all input and output positions are reached at the same time (i.e. the paths are of the same length, which may require some to ``stall' by growing indirectly toward their destinations for short distances).

\paragraph*{Wires through the NS and EW computation zones}

Wires which need to cross a computation zone within the $\{N, S\}$ or $\{E, W\}$ computation region send two probes, each trying to cross a different input wire of the computation zone then meeting again once they have passed it. The only case where both are blocked is if both inputs are present and they beget a unique tile attachment. In that case, the macrotile is already outputting the value of that attachment (which must be the same value since $\calS$ is directed), so the blocking does not prevent correct output.

The claim is that this system builds a scaled version of $\alpha \in \termasm{S}$. First note that before each output position is reached, all input positions have been blocked. This means that there cannot be a path of attachments between an output position of a macrotile and an input position of the same macrotile.

Moreover, conflicts between paths can only arise if two different computing gadgets have found a unique attachment. In this case, because $\calS$ is directed, they are racing to output the same tile, so the behavior downstream of the center of the macrotile will be the same as if there was no conflict. It is crucial that $\calS$ is directed in the aTAM and not just in the syncTAM, as there might be a race between the macrotile blocking its input positions on output sides and said neighbors outputting a glue there. In $\calS$ running in the syncTAM, such a race is always won by whichever sets of inputs gets there first, so it does not preclude directedness. In the aTAM on the other hand, the attachment can be delayed until both sets inputs are complete, so by directedness, they must all agree on the same unique output. Because there cannot be a dependency path from an output position to an input position of the same macrotile location, any sets of inputs of a macrotile of $\calS'$ which is involved in a conflict between paths must correspond to a competition between attachments which can happen in the aTAM.

If the seed $\sigma$ of $\calS$ consists of a single tile, then the seed $\sigma'$ for $\calS'$ can be created by simply placing the tile $\hat{t}$ that corresponds to the seed tile type $t$ of $\sigma$ into the center location of the macrotile representing $\sigma$. From that, the output paths will grow to the sides of the macrotile, and then (following the argument above) growth of all subsequent macrotiles will correctly match the growth of the tiles of $\calS$ by always growing macrtiles that map, under $R$, to the tile types in the corresponding locations of the terminal assembly $\alpha \in \termasm{S}$. If instead $\sigma$ consists of multiple tiles, then $\sigma'$ simply consists of the corresponding $\hat{t}$ tiles in the centers of the corresponding macrotile locations, all connected by a path of tiles that cross through and block their adjacent input and output positions. Thus, $\calS'$ produces a scaled version of the terminal assembly of $\calS$ for scale factor $s$ and macrotile representation function $R$.


\end{proof}

\section{Proof of \cref{thm:imposs-k-1-sync-shape}}\label{apdx:imposs-k-1-sync-shape}

\noLessSync*

\begin{proof}
    We prove \cref{thm:imposs-k-1-sync-shape} by first defining shape $S_l$ and providing an L-syncTAM system that strictly self-assembles it.
    
    Given some $l > 1$, $S_l$ is an infinite shape consisting of $l - 2$ \texttt{arms} that are single-tile-wide columns that each grow upward infinitely, and a \texttt{cone} that is a ``sideways V'' that grows infinitely to the right, with its sides growing further apart and \texttt{rungs} periodically growing between the two spreading arms of the V. An initial portion of $S_5$ (i.e. $S_l$ for $l = 5$) is shown in \cref{fig:rays-and-cone}.

    The system $\calL_l = (T, \sigma, 2, l)$ strictly self-assembles $S_l$ as follows:
    \begin{enumerate}
        \item From the seed, which is a single tile, a set of $4(l-2)$ tiles grow a row of length $4(l-2)$ to the right. At each position $i$ to the right of the seed where $(i \mod 4) == 0$, the yellow tile at that position has a north facing strength-2 glue that attaches to a pink tile type.

        \item An infinite number of copies of a single pink tile type attach to the north of the yellow tiles at positions $(i \mod 4) == 0$ and then to each other, growing $l - 2$ \texttt{arms} infinitely upward.

        \item To the right of the rightmost yellow tile, a tile of the gold tile type attaches, to initiate growth of the \texttt{cone} (whose initial growth process can be seen in \cref{fig:cone-growth}).

        \item The \texttt{cone} grows in ``periods,'' with each consisting of white tiles that form top and bottom segments that each extend $9$ positions to the right, then grow back to the left $4$ and $5$ positions, respectively.

\begin{figure}
    \centering
    \includegraphics[width=0.85\linewidth]{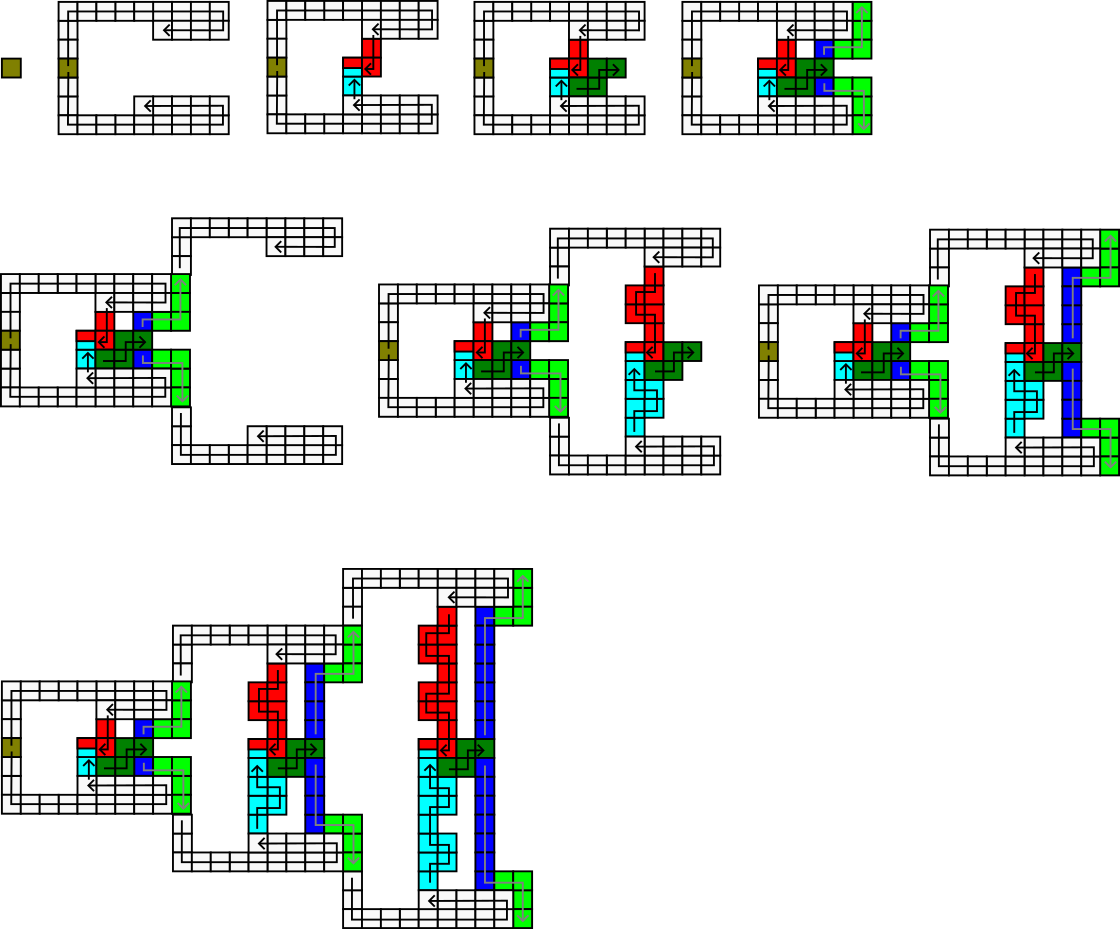}
    \caption{A depiction of the growth sequence of the initial part of the \texttt{cone} potion of the shape $S_5$ from \cref{fig:rays-and-cone}.}
    \label{fig:cone-growth}
\end{figure}

        \item From the final tiles of the top and bottom white segments, red and aqua segments grow downward and upward (respectively). These segments are ``zig-zagging columns'' where the downward growing red columns have ``bumps'' on the left and the upward growing aqua segments have bumps on the right.

        \item The spacing of the white segments ensures that the red and aqua segments collide in such a way that there is one location that is nondeterministically tiled by either a red or aqua tile and that there is guaranteed to be a location where a red and aqua tile can cooperate to allow for the attachment of a green tile.

        \item That cooperation allows for the growth of a row of $4$ green tiles, with the last initiating the upward and downward growth of blue segments.

        \item The upward and downward blue segments each consist of a single tile type that grows a column upward (downward, resp.) until it crashes into a white segment.

        \item When a blue segment crashes into a white segment, there is a location into which a lime green tile can attach via cooperation between a white and blue tile. This initiates growth of 4-tile-long segments of lime green tiles.

        \item The final lime green tiles complete the growth of one period of the \texttt{cone}, with the red, aqua, green, blue, and lime green segments making up the \texttt{rungs}. To the right of the final lime green tiles, the tiles that grow the white segments attach, initiating growth of another period.
    \end{enumerate}

    When growth of $S_l$ begins from the seed, the frontier is initially of size 1. Then, as the yellow row grows and initiates growth of the \texttt{arms}, the size of the frontier increases until, when the yellow portion completes, it is $l-1$ (i.e. one location for each of the $l-2$ arms, and one to the right of the rightmost yellow tile). Then, as the \texttt{cone} grows from the gold tile, it consistently has two frontier locations (one on the top half and one on the bottom half) except during the growth of the 4-tile-long green segments in the middle of each \texttt{rung}, during which there is 1 frontier location in the \texttt{cone}. Thus, once growth of the \texttt{cone} has begun, there are always either $l$ frontier locations or $l-1$ (during the growth of the green segments).
    
    Since there are never more than $l$ frontier locations, the entire frontier can be filled during each assembly step. This ensures that growth of the top and bottom portions of the \texttt{cone} can always remain synchronized, and the lengths of the hard-coded sections of the cone are designed so that, even as the rungs grow taller and taller, the red and aqua segments are always guaranteed to collide in the central row. Because of this, the shape $S_l$ is always formed, with bumps extending to the left of the top halves on the left sides of \texttt{rungs} and to the right of the bottom halves on the left sides of \texttt{rungs}. Therefore, $\calL_l$ strictly self-assembles $S_l$, completing the first portion of the proof.

    We now present the second portion of the proof, showing that for all $l' < l$, there exists no L-syncTAM system $\calL' = (T', \sigma', \tau, l')$ and scale factor $c \in \mathbb{Z}^+$ such that $\calL'$ strictly self-assembles $c \cdot S_l$. We prove this portion by contradiction. Therefore, assume there do exist $l' < l$, $\calL' = (T', \sigma', \tau, l')$, and $c \in \mathbb{Z}^+$ such that $\calL'$ strictly self-assembles $c \cdot S_l$.

   \begin{figure}
    \centering
    \includegraphics[width=1.0\linewidth]{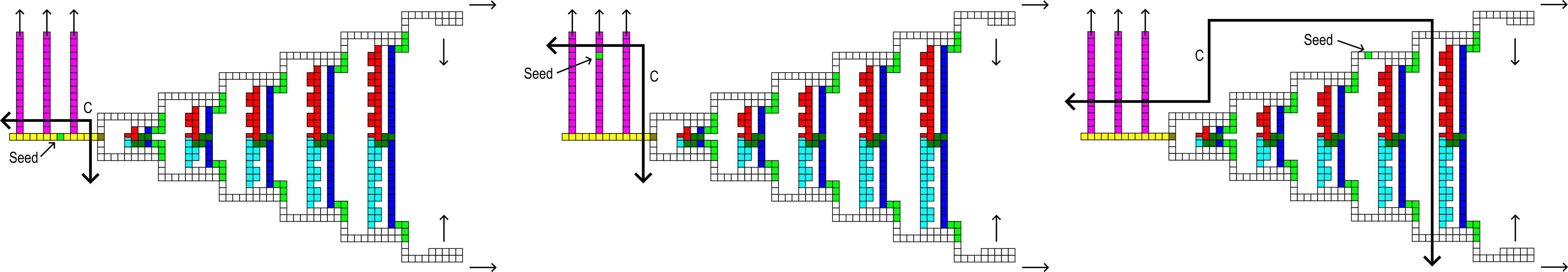}
    \caption{Examples of the shape $S_l$ from the proof of \cref{thm:imposs-k-1-sync-shape} being grown in $\calL'$ with the seed in different locations and possible cuts (labeled as $C$ in each) used to apply \cref{lem:indep}. Note that $\calL'$ is allowed to scale $S_l$ to instead self-assemble $c \cdot S_l$, but that doesn't change the location of the cuts or the fact that they still each meet the criteria of \cref{lem:indep} since the scaling only adds a constant amount of thickness across each portion of the cut.}
    \label{fig:rays-and-cones=cuts}
\end{figure}

    \begin{enumerate}
        \item $\calL'$ is allowed to place its seed anywhere within the shape $c \cdot S_l$. However, since each \texttt{arm} and the \texttt{cone} are infinite, it must be possible to follow an assembly sequence in $\calL'$ that grows portions of each \texttt{arm} and the \texttt{cone} so that it is possible to make a cut that separates the seed-bearing portion of the assembly from infinite segments of the \texttt{arms} and \texttt{cone}. (See \cref{fig:rays-and-cones=cuts} for examples.)
        
        \item We make a cut $C$ through $c \cdot S_l$ that separates it into sub-shapes $s_0, s_1, \ldots, s_{l-1}$, such that $s_0$ consists of the seed and the yellow segment (plus possibly finite portions of the \texttt{arms} and \texttt{cone}), $s_1$ through $s_{l-2}$ each consist of the infinite remainder of one of the \texttt{arms} that isn't contained in $s_0$, and $s_{l-2}$ consists of the infinite portion of the \texttt{cone} that is not contained within $s_0$.
        
        \item We now allow $\calL'$ to proceed following an arbitrary assembly sequence until the number of steps $f$, given by \cref{lem:indep} occur. By \cref{lem:indep} it must be the case that, after some assembly step $f$, each \texttt{arm} must always have at least one frontier location, and the \texttt{cone} must always have at least one, for a total of at least $l-1 \ge l'$ frontier locations during all of the subsequent assembly steps (which are infinite in count).
        
        \item From this point on, we choose an assembly sequence, continuing after step $f$, that proceeds as follows: for every step $f' > f$, a single tile is placed in the \texttt{cone}, and $k' - 1$ tiles are placed, each in a unique \texttt{arm}.

        \item Let $g$ be the number of unique glue types in $T'$, the tile set of $\calL'$. Given the scale factor $c$ at which $\calL'$ is self-assembling $S_l$ to achieve $c \cdot S_l$, we let $h = ((g+1)^{6c}\cdot(6c)!+1)\cdot3+6$. Justification for this number can be found in the proof of Theorem 1.2 of \cite{temp1notIU}, and it is the count of all possible ``window movies'' across cuts of the upward or downward growing segments in the locations where they are one (macro)tile wide. This means that this is the maximum possible number of ways that glues can cross a cut across such a segment.

        \item We let growth in $\calL'$ continue beyond $f'$ until a new period of the \texttt{cone} receives its first tile, and then further until the first time after that during which either a red segment grows downward a distance of $2h$ tiles or an aqua segment grows upward a distance of $2h$. This must occur since the \texttt{rungs} become arbitrarily tall, and for a \texttt{rung} to form, at least one (if not both) of the red and aqua segments must grow at least to the midpoint.

        \item We now rewind the assembly sequence until both the red and aqua segments are shorter than $1.5h$ (noting that one of them already may be shorter than that), and at least one is of length $> h$. Since the \texttt{cone} is growing by the addition of a single tile during each assembly step, we can apply the Window Movie Lemma (WML) of \cite{temp1notIU} and the abstract Tile Assembly Model. Without loss of generality, assume that the red segment is of length $> h$. The WML guarantees that since a window movie must have repeated during its growth to the current length, the segment in between the repeated window movies can be ``pumped'' to create a red segment of arbitrary length (or at least until it crashes into the aqua segment). Since this results in the red segment extending past the midpoint, the shape of the \texttt{rung} is invalid in terms of $c \cdot S_l$ with bumps on the left past the midpoint. (See \cref{fig:cone-pumped} for an example.)

    \begin{figure}
        \centering
        \includegraphics[width=0.6\linewidth]{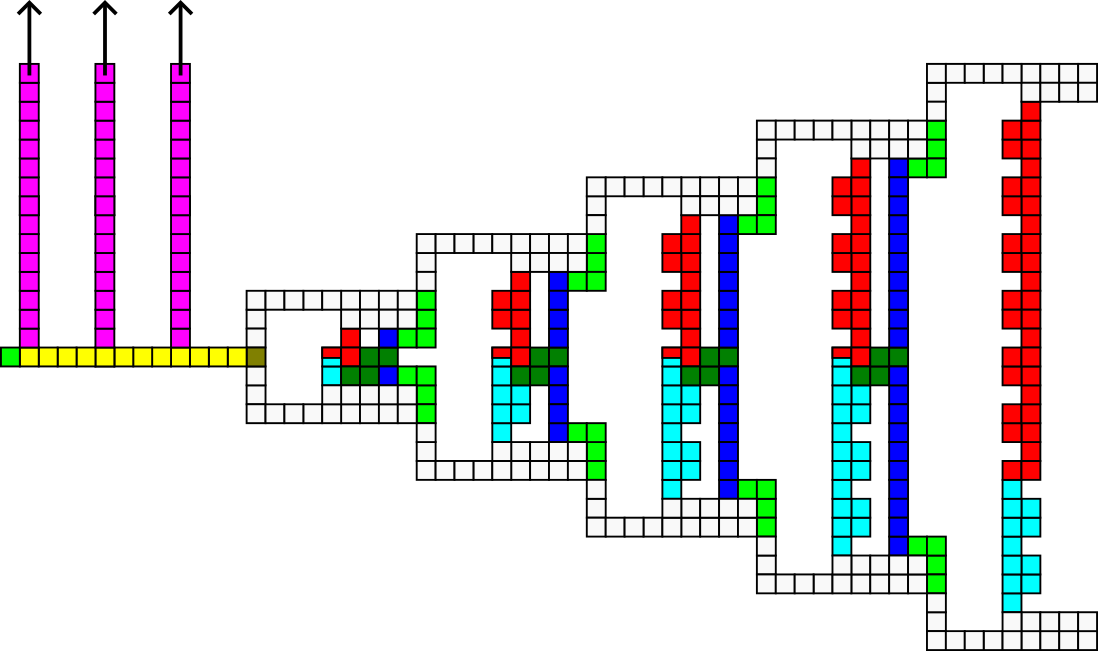}
        \caption{A ``pumped'' version of the red segment of a \texttt{rung} that makes an invalid shape, not matching $c \cdot S_l$, due to the leftward facing bumps of the red segment extending downward into the region in which the bumps should face toward the right.}
        \label{fig:cone-pumped}
    \end{figure}
        
        \item We can create a new, but valid, assembly sequence that continues to place a single tile in each of $k' - 1$ unique \texttt{arms} and a single tile in the red segment of the current period during each step. This results in a \texttt{rung} of the shape described, and therefore the self-assembly of a shape that does not match $c \cdot S_l$, which is a contradiction to the statement that $\calL'$ strictly self-assembles $c \cdot S_l$. Therefore, no system at synchronicity $l' < l$ can strictly self-assemble $S_l$ at any scale factor, and thus \cref{thm:imposs-k-1-sync-shape} is proven.

    \end{enumerate}

\end{proof}

\end{document}